\documentclass[10pt,reqno,oneside]{amsart}

\usepackage[T1]{fontenc}
\usepackage{amsmath}
\usepackage{amsfonts}
\usepackage{amsthm}
\usepackage{amssymb}
\usepackage{graphicx}
\usepackage{hyperref}
\hypersetup{hidelinks}
\usepackage{cite}
\usepackage{calc}
\usepackage{geometry}

\usepackage{tikz}
\usepackage{tikz-cd}
\usetikzlibrary{%
  matrix,%
  calc,%
  arrows%
}

	\newcommand{\ncd}{\newcommand}
	\ncd{\mrm}    {\mathrm}
	\ncd{\beq} {\begin{equation}}
	\ncd{\eeq} {\end{equation}}
	\ncd{\nn}{\nonumber}

	\def\d{{\rm d}}
	\def\D{{\rm D}}

	\def\tpm{T_p M}
	
	\def\Reals{ \mathbb{R}}
	\def\basis[#1]{\frac{\partial}{\partial #1}}
	\def\dt[#1]{\frac{\d}{\d #1}}
	\def\propagator[#1,#2,#3]{P_{#1}\left(#2,#3 \right)}
	\def\ipropagator[#1,#2,#3]{P^{-1}_{#1}\left(#2,#3 \right)}
	\def\GTM{\Gamma\left(TM \right)}
	\def\GTsM{\Gamma\left(T^*M \right)}

	\newtheorem{prop}{Proposition}
	\newtheorem{theo}{Theorem}
	\newtheorem{corollary}{Corollary}
	\newtheorem{lemma}{Lemma}
	\newtheorem{mydef}{Definition}

\begin{document}

\title[How Much Geometry Does Newtonian Dynamics Fix?]{How Much Geometry Does Newtonian Dynamics Fix?}

\author{C\'esar S. L\'opez-Monsalvo}
\address{Departamento de Ciencias B\'asicas, Universidad Aut\'onoma
Metropolitana -- Azcapotzalco, Avenida San Pablo 420, Colonia Nueva El
Rosario, Azcapotzalco 02128, Ciudad de M\'exico, M\'exico}
\email{cslm@azc.uam.mx}
\urladdr{https://orcid.org/0000-0002-0378-0415}

\keywords{Newtonian dynamics, Affine connection, Torsion, Constitutive
relations, Classical spin, Foundations of spacetime geometry}

\subjclass[2020]{70G45, 53B05, 70A05, 53Z05}

\begin{abstract}
Newtonian mechanics is usually written on a spacetime of fixed geometry. Here we build that geometry, introducing a propagator and a bilinear form only as the description demands. Within this reconstruction the Principle of Inertia emerges as a theorem rather than an independent postulate. Inertial motion is blind to the scale of the form and to the torsion the propagator defines, and that blindness is a geometric precursor of the \emph{universality of free fall}, before gravity enters. Newton's Second Law is a compatibility between two derivatives along a curve. The force exceeds the mass times the lowered acceleration by an exact term, and the law survives arbitrary torsion. Requiring force-free motion to be inertial along curves parametrised by arc length restricts the torsion to be totally antisymmetric. The remaining torsion is a three-form which we postulate the particle carries along its worldline. Its contraction with the velocity annihilates that velocity, so the supplementary condition of spinning-particle mechanics holds identically, and we propose that contraction as a classical spin. The algebra is familiar from relativistic spin hydrodynamics; the assignment, to the best of our knowledge, has not been proposed elsewhere. Since inertial motion resolves neither the mass nor the spin, the predictive content lies in the transport law we postulate, which turns the spin against a parallel-transported frame at a rate fixed by the acceleration alone and by no coupling constant. That rate becomes a laboratory number only once the arc-length parameter is identified with a clock, which we leave open. The dynamics asks non-degeneracy of the form, so this spacetime is not Newton--Cartan's Galilean one, and gravitation lies outside the construction. We prove every mathematical result, make three constitutive postulates and one physical conjecture.
\end{abstract}

\maketitle

\tableofcontents

\section{Introduction}

\emph{Where does the spacetime of Newtonian dynamics come from?} The question is seldom put, for the theory is ordinarily set down upon one already equipped -- a manifold, an affine connection, a bilinear form, a class of inertial trajectories -- and the physical warrant for each of these is taken for granted. Which of those structures do the dynamical principles fix exactly, which do they select only with the help of a constitutive assumption, and which do they leave open? We reconstruct that geometry from physical principles, and what the dynamics leaves undetermined is then a result of the construction.

Let us begin by depriving the space on which the motion of the particles takes place of every mathematical structure other than the one required to make sense of the notion of smoothness, i.e. differentiability. This strategy of building geometry upward from the barest primitives has been pursued in its own right~\cite{maudlin2014new}. We thus consider an $n$-dimensional differentiable manifold $M$. A point on a curve denotes an event in the history of a particle, so that $M$ accounts for space and time together. In this sense, we will refer to $M$ as the \emph{spacetime} manifold. For the moment, this is merely a name, with no reference to the metric theory given by relativity. 

We will proceed through independent explorations of the various concepts defining Newtonian dynamics. We shall find that the link between the notions of force and acceleration is nothing but the relation between the Lie and the covariant derivative, taken with respect to an affine connection compatible with a metric for the spacetime.%

In this work we provide that construction and develop its consequences in full mathematical rigour. To the manifold we add only two further structures, a law of transport along its curves and a bilinear form, and we require nothing more of either, neither symmetry nor non-degeneracy nor signature for the form, and no torsion condition for the connection. The kinematics asks nothing further. The dynamics later asks non-degeneracy of the form, since the musical maps make the momentum the dual of the velocity. We derive the affine connection from the transport, and within this reconstruction the Principle of Inertia emerges as a theorem rather than being introduced as an independent postulate, for the symmetries of inertial motion are exactly the affine reparametrisations of the worldline, and these alone (Theorem~\ref{theo.inertia}). In particular, inertial motions through every event and direction do not require the connection to be compatible with the bilinear form; they require only the weaker condition that the form be a Killing tensor of it (Proposition~\ref{prop.killing}). Full compatibility is a separate demand -- that of transporting standards of magnitude unchanged -- and the dynamics supplies a third, the identification of force with the change of momentum. The three together fix exactly the symmetric part of the connection (Theorem~\ref{theo.nsl}), and the Levi-Civita connection is the canonical torsion-free representative, selected when we retain no additional constitutive torsion. A fourth demand, that force-free motion and inertial motion agree, then narrows what is left over without extinguishing it.

What inertia determines it determines exactly. However, what it leaves undetermined are two constitutive ambiguities, precisely the two data that the structures of spacetime carry beyond the inertial structure. The geometry of inertial motion depends on the bilinear form only up to a global scale, since multiplying the form by a positive constant leaves every inertial motion unchanged (Proposition~\ref{prop.homothety}). It depends on the connection only through its symmetric part, and the antisymmetric part is invisible to every inertial trajectory (Proposition~\ref{prop.torsion}). These two blindnesses are one principle, a geometric precursor of the \emph{universality of free fall} obtained before any notion of gravity has been introduced (Theorem~\ref{theo.ise}). We resolve each ambiguity afterwards by a constitutive choice. The scale discloses the inertial mass of the particle (Definition~\ref{def.mass}), and the antisymmetric part, which the dynamics restricts to a $3$-form, discloses a candidate for its spin (Definition~\ref{def.spin}). 

Dynamics is the metric dual of kinematics, and we generate it from a single constitutive relation. The momentum is the constitutive dual of the velocity under the bilinear form (Definition~\ref{def.mom}), and force descends from it. The whole dynamical content therefore springs from the one constitutive datum that dynamics adds to kinematics.

Newton's Second Law is the compatibility of the two derivatives the construction affords. The force is the rate of change of the momentum measured by the Lie derivative. That derivative acts upon a velocity field, so the motions fill a region and a particle here is one worldline of a congruence. It differs from the mass times the lowered acceleration, measured by the covariant derivative, by an exact term which a suitable choice of clock removes (Theorem~\ref{theo.nsl}). For a uniform motion the exact term vanishes, and we find that the dynamical notion of a force-free particle coincides with the kinematic notion of an inertial one (Corollary~\ref{cor.forcefree}). The identity survives an arbitrary torsion at the cost of one further term quadratic in the velocity (Theorem~\ref{theo.nslgen}), and the vanishing of that term is exactly the condition under which force-free and inertial motion continue to agree (Corollary~\ref{cor.forcefreegen}). That fourth demand restricts the second ambiguity instead of annihilating it, and it spares exactly the torsions whose lowered form is totally antisymmetric (Corollary~\ref{cor.antisym}). That condition is already familiar from the geometry of metric-affine gravity~\cite{hehl1976general,trautman2006einstein}, and we reach it here from the Second Law together with that demand and with the arc-length parametrisation it presupposes. We thus exhibit torsion-freeness, customarily imposed for convenience, as the canonical representative of a physical demand. Of the choices within the residual freedom, it alone requires no further tensor field upon the spacetime manifold. The torsions the demand spares are those from which we build the second constitutive relation.

That residual freedom is a $3$-form. Interestingly, a counting of degrees of freedom rules out one of the two places it might live. A form fixed once and for all upon $M$ would afford a particle a spin determined by its velocity and a single number, where three are needed, so we carry the constitutive torsion along the worldline instead (Definition~\ref{def.consttorsion}). Its contraction with the velocity is a $2$-form which annihilates that velocity identically (Definition~\ref{def.spin}), so the supplementary condition that classical spinning-particle mechanics must postulate is here a theorem (Proposition~\ref{prop.frenkel}). We propose Fermi--Walker transport as the constitutive transport law the carried $3$-form obeys, and it is a postulate. Parallel transport does not preserve the magnitude of the spin under acceleration. The Fermi--Walker propagator preserves it, preserves the supplementary condition, and reduces to the parallel propagator along inertial motions. Along an accelerated motion the two turn the spin differently, in the plane of the velocity and the acceleration, at a rate fixed by the acceleration alone and by no coupling constant. We assemble the generator of that transport from the bilinear form, the velocity, and the acceleration the connection assigns to that velocity. The minimality of the construction therefore survives the second constitutive relation (Section~\ref{sec.spin}). Inertial motion is then independent of both the mass and the spin of the body that executes it. The first independence becomes the universality of free fall once a gravitational coupling exists. The second says only what Theorem~\ref{theo.ise} guarantees, that inertial motion is blind to the spin. The predictive content lies in the law by which the carried $3$-form is transported.

The question of what an observation of motion can fix, and what it must leave free, organises the axiomatic programme of Ehlers, Pirani and Schild~\cite{ehlers1972geometry}, where light propagation fixes the conformal class of the metric and free fall its projective structure. Malament has since given that recovery of structure from observation a rigorous treatment~\cite{malament2012topics}, and Wheeler has lately examined the torsion-free connection reached there instead of assuming it, showing that torsion-freeness follows from restricting the reparametrisations admitted, as opposed to the geodesic data alone~\cite{wheeler2025geometry}. The torsion his enlarged class generates is of the trace type and the residue retained here is totally antisymmetric, so the two occupy complementary parts of one freedom. Theorem~\ref{theo.ise} puts the same question to a smaller structure. Inertial motion depends on the pair $(g,\nabla)$ only through the homothety class of $g$ and the symmetric part of $\nabla$. No observation of a particle in inertial motion therefore resolves either the scale of the bilinear form or the torsion of the connection, and we reach that result before any notion of gravity enters the construction. The equivalence principle, in its familiar non-relativistic form, is a statement about gravitation, and a geometric reading in which the Newtonian potential is traded for a curved, time-dependent spatial metric forces the equality of gravitational and inertial mass by that very requirement~\cite{kapustin2021nonrelativistic}. That statement cannot be posed until a gravitational coupling exists. Theorem~\ref{theo.ise} is a precursor obtained before one exists, and it asserts only that the inertial structure is blind to scale and to torsion.

Nevertheless, torsion more often enters physics by a different door. Einstein--Cartan theory sources it in the spin of matter through a field equation~\cite{sciama1964physical,kibble1961lorentz,hehl1976general}. That equation is algebraic rather than differential, and the torsion stands pointwise in a fixed proportion to the spin density, with a coupling constant of gravitational strength. The field therefore neither propagates nor survives outside the matter producing it. Metric-affine gravity treats the same object on a broader footing~\cite{hehl1995metric,hehl2007cartan,trautman2006einstein}.

We propose nothing of that kind since the two constructions can be mistaken for one another at a glance. This paper writes down no field equation relating a torsion to a spin density, whether algebraic or differential, and no coupling constant by which such a relation might be scaled. Wheeler has catalogued in detail what a field theory must supply in order to source a torsion~\cite{wheeler2023sources}, and we supply none of it. The logical direction is the reverse of Einstein--Cartan's. There the spin of matter is given and the torsion is what that spin produces. That torsion is ordinarily approached through its couplings to matter instead of through the kinematics of a free particle is itself a feature of the subject~\cite{shapiro2002physical}. Here the torsion comes first, and nothing produces it. It is the residual freedom in the connection which inertial motion cannot resolve (Theorem~\ref{theo.ise}). We read a geometric freedom physically, and we propose a constitutive relation of the same kind as the one that discloses the inertial mass. The differential geometry of skew torsion is older than any physical use of it, and the algebraic fact we build upon already belongs there. That a metric connection preserves the geodesics of its metric precisely when its torsion is totally antisymmetric is Corollary 2.1 of Agricola and Friedrich~\cite{agricola2004holonomy}. When that torsion is further required to be parallel, the manifolds admitting it form a short, classified list, the naturally reductive homogeneous spaces and the nearly K\"ahler, nearly parallel $G_2$, Sasakian and $3$-Sasakian geometries among them~\cite{cleyton2021metric}. Both facts are theirs. What we add to the first is where it comes from, for we reach it from the Second Law together with the demand that force-free and inertial motion agree. The second we use negatively, to weigh what a torsion fixed upon the manifold would cost against one carried along the worldline (Definition~\ref{def.consttorsion}).

The paper's central contribution is best stated as a chain of results and a subsequent physical proposal. Let us set out the chain first. Newton's Second Law survives an arbitrary torsion, and demanding that force-free motion coincide with inertial motion restricts the lowered torsion to be totally antisymmetric (Corollary~\ref{cor.forcefreegen}, Corollary~\ref{cor.antisym}). Such a torsion is a $3$-form, and the degrees-of-freedom count, together with the classification above~\cite{cleyton2021metric}, rules out a form fixed upon the manifold, so we take it as a datum carried along the worldline (Definition~\ref{def.consttorsion}). Its contraction with the velocity annihilates that velocity identically, and does so as an identity of the interior product, as opposed to a further demand (Proposition~\ref{prop.frenkel}). The proposal comes after the chain, and it is physical. We propose that contraction as the spin of the particle (Definition~\ref{def.spin}). How much of this is new? The algebraic characterisation in the middle is in print, as recorded above, and so is the pairing of the velocity condition with Fermi--Walker transport, which relativistic mechanics derives for a torque-free gyroscope~\cite{costa2015center}. Nor is the algebra of the last link new. Relativistic spin hydrodynamics constructs a spin density by contracting a totally antisymmetric spin tensor with the fluid velocity, and obtains the supplementary condition from that antisymmetry alone~\cite{fang2026relativistic,hongo2021relativistic,cao2022gyrohydrodynamics}. What has not been proposed elsewhere, to the best of our knowledge, is the assignment rather than the algebra. In those treatments the totally antisymmetric object is a spin current belonging to matter, whereas here it is the residual freedom in the connection which inertial motion cannot resolve. The condition itself is the Frenkel--Mathisson--Pirani condition of relativistic spinning-particle mechanics, one of a family among which that mechanics must choose~\cite{costa2015center}, and one which the constitutive theory of a spinning fluid lists among its postulates~\cite{herrmann2000constitutive}. Here we have nothing to choose and nothing to postulate, since the condition follows identically from the definition of the spin. Moreover, that condition is infinitely degenerate in its relativistic setting, where an entire disc of centroids satisfies it and generates helical motions, for a centroid is a moment of an extended matter distribution and every observer computes a different one~\cite{costa2015center}. We meet no such disc, since the worldline is primitive instead of a centroid, and the spin is a constitutive datum carried along it.

Throughout, we keep three registers apart, and we say at each point which of them is in force. We prove every mathematical result from the structures declared, and we smuggle nothing physical into a proof. We make three constitutive postulates, and declare each one where it is made. There is one physical conjecture. We label it as such, and it carries none of the weight we ask the geometry to bear.

Let us say what we do not attempt. The musical maps make the momentum the dual of the velocity, and they require the bilinear form to be non-degenerate. Our spacetime is therefore not the Galilean spacetime of the Newton--Cartan formulation, which replaces the metric by a degenerate pair and geometrises gravitation in the curvature of a connection that the pair alone does not determine~\cite{cartan1923varietes,kunzle1972galilei,malament2012topics}. The philosophical literature has examined which parts of such a structure are essential and what is at stake in choosing between formulations of the same Newtonian content~\cite{earman1989world,knox2011newton}. Torsional Newton--Cartan geometry has been built at length~\cite{bergshoeff2015torsional,banerjee2016torsional,bergshoeff2017newton}, and the torsion it carries is not ours. There it denotes the failure of the clock $1$-form to be closed, the absence of an absolute time. The two share a word and little else, a distinction the literature has lately drawn for itself~\cite{march2024remarks}. Indeed, the two constructions answer different questions. Newton--Cartan theory asks how a gravitational interaction already given is to be written upon a spacetime Galilean from the outset. We ask what structure the notion of force requires before any interaction has been named, and we take the forces as data throughout. It is worth mentioning a point of contact. The connections compatible with a Galilean structure differ from one another by an antisymmetric form, as the admissible connections of Section~\ref{sec.spin} do, although the Galilean freedom is visible to the trajectories and ours is invisible to them. Whether the construction survives the degeneration of the bilinear form to a Galilean pair is open, and it is the first question we would put to it. The second is what the arc-length parameter measures, for the rate at which the spin turns against a parallel-transported frame becomes a laboratory number only once that parameter is identified with the reading of a clock. We withhold more than the Galilean structure. There is no field equation coupling the torsion to a source, no relativity, and no extended body whose centroid a supplementary condition must fix~\cite{puetzfeld2014equations}. In Section~\ref{sec.spin} we declare the identification of the velocity contraction of the constitutive torsion with a physical spin a conjecture rather than a proved fact. Dimensional analysis fixes its dimension and we supply no scale, as opposed to the couplings that make the relativistic and Einstein--Cartan literatures testable. Closing that gap, so the constitutive torsion of a Newtonian particle earns the physical content the word spin borrows from elsewhere, is a task for another work.

The manuscript is structured as follows. In Section~\ref{sec.kin} we build the kinematics, introducing each geometric notion at the point where the description of motion demands it. That section closes with the Principle of Inertia and the two freedoms it leaves. In Section~\ref{sec.dyn} we turn to the dual objects and obtain Newton's Second Law as a compatibility between the two derivatives. We then settle the two freedoms by a constitutive choice apiece. The first discloses the inertial mass, and the second the constitutive torsion whose contraction with the velocity we propose as a spin. In Section~\ref{sec.closing} we take stock of what is proved, postulated and conjectured.

\section{Kinematics}
\label{sec.kin}

Kinematics deals with the \emph{description} of the motion of a single particle, independently of its causes. This section aims at building the minimal mathematical structures required to realise such a description. In this manner, each structure is accompanied by a physical hypothesis so that the geometry of motion, encoded in the \emph{Principle of Inertia}, emerges as a theorem rather than being introduced as an independent postulate. We build these structures in the order the description demands. First an event and its coordinates, for without them nothing can be said to change; next a notion of velocity, constructed without borrowing the vector-space structure of $\mathbb{R}^n$ that $M$ does not possess; then a means of comparing vectors at different events -- the parallel propagator -- before an acceleration can be defined at all; and, only when the question of uniform motion forces it, a bilinear structure with which to measure magnitudes. Many of the definitions and results in this section are well known from a differential geometry point of view. Our goal is to present such ideas as the necessary and sufficient mathematical background to encode the \emph{physical realism} of the Galilean programme~\cite{mcmullin1985galilean}.

\noindent\textbf{Note for the reader.} The differential-geometric constructions developed in subsections~\ref{subsec.pos} to~\ref{subsec.td} are standard. Nevertheless, our reasons for presenting them explicitly are two-fold. On the one hand, the purpose of the paper is to construct the geometric framework of Newtonian dynamics, where a physical necessity gives rise to every mathematical structure, instead of our assuming the whole machinery from the outset. On the other hand, presenting them explicitly serves as a notational and referential guide, and makes the entire manuscript self-contained. Readers familiar with affine connections, torsion, Lie derivatives and the covariant differentiation of multilinear forms may proceed directly to Section~\ref{subsec.IM}.

\subsection{Position}
\label{subsec.pos}

Let $M$ be an $n$-dimensional differentiable manifold whose points represent the totality of places in space and time where a particle can be. Thus we shall refer to the points on $M$ as \emph{events}. The precise location of an event is in no way related to any structure of the manifold other than the \emph{topological}. Each event lies in the interior of at least one region of $M$ and it is with respect to this region that the notion of \emph{coordinates} is well defined. 

\begin{mydef}[Coordinate chart]
\label{def.chart}
A \emph{coordinate chart} around an event $p\in M$ is a pair $(U,\psi)$ where $p\in U \subset M$ and $\psi:U\subset M \longrightarrow \mathbb{R}^n$ is a \emph{homeomorphism}. 
\end{mydef}
Thus, the coordinate location of the event $p\in M$ in the chart $(U,\psi)$ is given by the ordered $n$-tuple
	\beq
	\psi(p) = \left(x^1_p,\cdots, x^n_p\right) \in \mathbb{R}^n.
	\eeq

Coordinates themselves do not bear any particular meaning. Indeed, the same event can be located with respect to another chart $(V,\phi)$, where $p\in V$, so that now
	\beq
	\phi(p) = \left(y^1_p,\cdots,y^n_p \right)\in \mathbb{R}^n,
	\eeq
is a different ordered $n$-tuple, yet it does represent the same point. This is achieved by demanding that the composite map $\phi \circ \psi^{-1}: \psi(U \cap V) \subset \mathbb{R}^n \longrightarrow \phi(U \cap V) \subset \mathbb{R}^n$ be a \emph{homeomorphism}, so that the two coordinate descriptions are topologically consistent, that is, the diagram
\beq
	\label{diag.charts}
	\begin{tikzpicture}[]
	\matrix[matrix of math nodes,column sep={100pt,between origins},row
	sep={70pt,between origins},nodes={asymmetrical rectangle}] (s)
	{
	|[name=a1]| 									& |[name=a2]| (U \cap V)\subset M		& |[name=a3]|									\\
	|[name=b1]| \psi(U \cap V)\subset\mathbb{R}^n 	& |[name=b2]| 						& |[name=b3]| \phi(U \cap V)\subset\mathbb{R}^n	 \\
	}
	;
	\draw[->] 	
			(a2) edge node[above] {\(\psi \)} (b1)
			(a2) edge node[above] {\(\phi\)} (b3)
			(b1) edge[bend right=0] node[below] {\(\phi\circ \psi^{-1}\)} (b3)
	;
	\end{tikzpicture}
	\eeq
commutes. %

 The explicit form of the change of coordinate maps is given by
	\beq
	\label{eq.coordch1}
	\left(\phi \circ\psi^{-1}\right)\left[x^1,\dots,x^n \right] = \left[y^1\left(x^1,\dots,x^n \right),\dots,y^n\left(x^1,\dots,x^n \right) \right],
	\eeq
and
	\beq
	\label{eq.coordch2}
	\left(\psi \circ\phi^{-1}\right)\left[y^1,\dots,y^n \right] = \left[x^1\left(y^1,\dots,y^n \right),\dots,x^n\left(y^1,\dots,y^n \right) \right].
	\eeq

In kinematics, the image of a map $\gamma:I\subset \mathbb{R}\longrightarrow U\subset M$ is called a \emph{path} and, if we consider its parametrisation, it corresponds to the curve followed by a particle within the open set $U$. As we will shortly see, each parametrisation yields a different curve tracing the same path. In this way each point in the curve corresponds to an event in the history of a particle. Note that a particle cannot be represented by a single point, but by the whole curve it follows in $M$.

In order to express our geometric constructions in terms of local coordinates, consider now a parametrised curve $\gamma$, a chart $(U,\psi)$ and an event in the history of the particle $p=\gamma(t_0) \in U$ for some $t_0 \in I$. 
 The map $\gamma^U:I\subset\mathbb{R}\longrightarrow \psi(U)\subset \mathbb{R}^n$ is a curve defined by the composition $\gamma^U = \psi \circ \gamma$ and provides us with the computational means to exploit the results from multivariable calculus and \emph{lift them up} to the differentiable manifold through the diagram
	\beq
	\label{diag.curve}
	\begin{tikzpicture}[]
	\matrix[matrix of math nodes,column sep={100pt,between origins},row
	sep={70pt,between origins},nodes={asymmetrical rectangle}] (s)
	{
	|[name=a1]| I\subset\mathbb{R}	& |[name=a2]| U\subset M 		\\
	|[name=b1]| 				 	& |[name=b2]| \psi(U) \subset \mathbb{R}^n,	 \\
	}
	;
	\draw[->] 	
			(a1) edge node[left] {\(\gamma^U=\psi \circ \gamma\)} (b2)
			(a2) edge node[auto] {\(\psi\)} (b2)
			(a1) edge node[auto] {\(\gamma\)} (a2)
	;
	\end{tikzpicture}
	\eeq
where $(\psi \circ \gamma)(t)= \gamma^U(t) = \left[x^1(t),\cdots, x^n(t) \right] $ represents the coordinates traced by the curve $\gamma$ in the local chart $(U,\psi)$. In this setting, we have enough structure to have a well-defined notion of velocity on $\mathbb{R}^n$. Extending such a notion to the manifold itself is of fundamental relevance.

  To fully appreciate this, first let us consider the more familiar tangent vector to the curve $\gamma^U\subset\mathbb{R}^n$ at $\psi(p)$, which can be obtained from the limit process
	\beq
	\label{eq.vectorvelocity}
	\vec{v}\vert_{\psi(p)} \equiv \lim_{t\rightarrow t_0} \frac{\vec r(t) - \vec{r}(t_0)}{t-t_0} = \lim_{t\rightarrow t_0} \sum_{i=1}^n\left[\frac{x^i(t) - x^i(t_0)}{t-t_0}\right]\hat e_{(i)} = \sum_{i=1}^n\left.\frac{\d }{\d t}x^i(t)\right\vert_{t=t_0}\hat e_{(i)},
	\eeq
where $\vec r(t) = x^1(t) \hat e_{(1)} + \cdots+ x^n(t)\hat e_{(n)}$ is the \emph{position vector} of the points along the curve $\gamma^U\subset\mathbb{R}^n$ whose coordinates are $\psi\left(\gamma\right)= \gamma^U(t) =\left[x^1(t),\cdots, x^n(t) \right]$ and the ordered set of vectors $\{\hat e_{(1)},\cdots,\hat e_{(n)}\}$ is a basis for $\mathbb{R}^n$. Although some literature interchanges them, one shall never confuse the \emph{auxiliary} notion of the position vector with the coordinates of a point.

\subsection{Velocity}

\emph{What kind of object is the velocity at an event $p=\gamma(t_0)\in M$?} Now we cannot resort to the same strategy of considering the difference of nearby position vectors since, in general, $M$ is not a vector space. Instead, we adopt a more constructive approach that allows us to take the limit of a well-defined object. To this end, let us consider a real valued function $f:U\subset M \longrightarrow \mathbb{R}$ so that the composite map $f\circ \gamma:I\subset \mathbb{R} \longrightarrow \mathbb{R}$ is a real function of the parameter $t\in I\subset \mathbb{R}$. Thus, its derivative at $t=t_0$ is
	\beq
	\label{eq.derivativef}
	\left.\frac{\d}{\d t} \left[f\circ \gamma\right](t)\right\vert_{t=t_0} \equiv \lim_{t\rightarrow t_0} \frac{f\left[\gamma(t)\right] - f\left[\gamma(t_0) \right]}{t-t_0} \in \mathbb{R}.
	\eeq
The definition itself is not very useful for explicit calculations. We can use a local chart $(U,\psi)$ to provide an operational definition for expression \eqref{eq.derivativef} in terms of the corresponding function $f_U \equiv f \circ \psi^{-1}:\psi(U) \subset \mathbb{R}^n \longrightarrow \mathbb{R}$ by noting that $f_U \circ \gamma^U: I \subset \mathbb{R} \longrightarrow \mathbb{R}$ satisfies
	\beq
	\label{eq.equivmaps}
	f_U \circ \gamma^U = \left(f \circ \psi^{-1}\right) \circ \left(\psi \circ \gamma \right) = f \circ \left(\psi^{-1} \circ \psi \right)\circ \gamma = f \circ \gamma,
	\eeq
which can be expressed by
	\beq
	\label{diag.comp}
	\begin{tikzpicture}[]
	\matrix[matrix of math nodes,column sep={100pt,between origins},row
	sep={70pt,between origins},nodes={asymmetrical rectangle}] (s)
	{
	|[name=a1]| 					& |[name=a2]| U\subset M 		& 					\\
	|[name=b1]| I\subset\mathbb{R} 	& |[name=b2]| \psi(U) \subset \mathbb{R}^n	&	  |[name=b3]| \mathbb{R}.		\\
	}
	;
	\draw[->] 	
			(b1) edge node[auto] {\(\gamma^U\)} (b2)
			(a2) edge node[right] {\(\psi\)} (b2)
			(b1) edge node[auto] {\(\gamma\)} (a2)
			(a2) edge node[auto] {\(f\)} (b3)
			(b2) edge node[auto] {\(f_U \)} (b3)
			(b1) edge[bend right=30] node[below] {\(f_U \circ \gamma^U \)} (b3)
	;
	\end{tikzpicture}
	\eeq
It follows that
	\beq
	\label{eq.dtf}
	\left.\dt[t]\left[f_U\circ \gamma^U\right](t)\right\vert_{t=t_0} 	 = \left. \dt[t] f_U \left[x^1(t),\cdots,x^n(t) \right]\right\vert_{t=t_0}
									 = \left. \sum_{i=1}^n\dt[t]x^i(t)\right\vert_{t=t_0}\left. \basis[x^i] f_U\right\vert_{\psi(p)}.
	\eeq
Therefore, the coordinate expression for \eqref{eq.derivativef} is given by
	\beq
	\label{eq.vellocal}
	\left.\frac{\d}{\d t} \left[f \circ \gamma \right](t)\right\vert_{t =t_0} = \sum_{i=1}^n \left.\frac{\d}{\d t} x^i(t)\right\vert_{t=t_0} \left.\frac{\partial}{\partial x^i} \left(f \circ \psi^{-1}\right)\right\vert_{\psi(p)}.
	\eeq
This is reminiscent of the usual directional derivative. However, note that we have arrived at this point using only the differential structure of $M$. Thus, the velocity of a particle following a trajectory defined by $\gamma \subset U\subset M$ at the point $p=\gamma(t_0)\in U \subset M$ can be understood as a differential operator acting on functions.
	
	\begin{mydef}[Particle velocity]
	\label{def.velocity}
	Let $M$ be a differentiable manifold, $(U,\psi)$ a local chart around $p=\gamma(t_0)\in U$ where $\gamma:I\subset \mathbb{R} \longrightarrow M$ is a smooth curve. Let $f:U\subset M \longrightarrow \mathbb{R}$ be a smooth real valued function. The velocity of the particle at $p=\gamma(t_0)$ is the directional derivative differential operator whose action on $f$ is given by 
	\beq
	\label{eq.velocity}
	 v\vert_p \left[f \right] \equiv \left.\dt[t] \left[f\circ\gamma \right](t)\right\vert_{t=t_0}.
	\eeq
	\end{mydef}
	
The reader may ponder on the significance of the function $f$ we require in Definition~\ref{def.velocity}. In our construction, we do not have any structure on the manifold allowing us to subtract the positions of two neighbouring events -- as we did in $\mathbb{R}^n$ through the position vector \eqref{eq.vectorvelocity} -- for $M$ carries no affine, let alone vector, structure. What the differentiable structure does place at our disposal is the collection of smooth scalar fields and, with them, the unambiguous operation of differencing the \emph{values} a field takes along the motion. The function $f$ is therefore not a constituent of the velocity but a \emph{probe}. The velocity is the object that reports, for every scalar field, its instantaneous rate of change along the curve, and it is characterised in full by this action. This is the precise sense in which we have traded the ill-defined difference of nearby points for the limit of a well-defined real number.

The velocity \eqref{eq.velocity} was framed without reference to any chart, since the composite $f\circ\gamma$ is a real function of a real variable whose derivative is a number. Its independence of the chart is therefore a property of the construction rather than a hypothesis about the world, and to postulate it would be to postulate a tautology. What does carry content is the weaker demand that this chart-free number be \emph{computable}, that is, that it admit a coordinate representation of the form \eqref{eq.vellocal} in every chart of the atlas. Coordinates are not themselves relevant but a mere computational tool, and the demand is that the tool be available everywhere. To see what structure that demand entails, let us consider another chart $(V,\phi)$ such that $p=\gamma(t_0) \in U\cap V$ together with the diagram
	\beq
	\label{diag.charind}
	\begin{tikzpicture}[]
	\matrix[matrix of math nodes,column sep={100pt,between origins},row
	sep={70pt,between origins},nodes={asymmetrical rectangle}] (s)
	{
	|[name=a1]| 					& |[name=a2]| \psi(U\cap V) \subset \mathbb{R}^n 		& 									\\
	|[name=b1]| I\subset\mathbb{R} 	& |[name=b2]| U\cap V \subset M							&	  |[name=b3]| \mathbb{R}.		\\
	|[name=c1]| 					& |[name=c2]| \phi(U\cap V) \subset \mathbb{R}^n 		& 									\\
	}
	;
	\draw[->] 	
			(b1) edge node[auto] {\(\gamma^U\)} (a2)
			(b1) edge node[auto] {\(\gamma^V\)} (c2)
			(b1) edge node[auto] {\(\gamma\)} (b2)
			(b2) edge node[right] {\(\psi\)} (a2)
			(b2) edge node[right] {\(\phi\)} (c2)
			(b2) edge node[auto] {\(f\)} (b3)
			(a2) edge node[auto] {\(f_U\)} (b3)
			(c2) edge node[auto] {\(f_V\)} (b3)
	;
	\end{tikzpicture}
	\eeq
The coordinate expressions must satisfy
	\beq
	\left.\frac{\d}{\d t}\left[f_U \circ \gamma^U \right](t)\right\vert_{t=t_0} = \left.\frac{\d}{\d t}\left[f_V \circ \gamma^V \right](t)\right\vert_{t=t_0},
	\eeq
where, as before -- cf. equation~\eqref{eq.dtf} -- we have
	\beq
	\left.\frac{\d}{\d t}\left[f_V \circ \gamma^V \right](t)\right\vert_{t=t_0} = \sum_{i=1}^n \left.\frac{\d}{\d t} y^i(t)\right\vert_{t=t_0} \left.\frac{\partial}{\partial y^i} \left(f \circ \phi^{-1}\right)\right\vert_{\phi(p)}.
	\eeq
Indeed,
	\beq
	f_V \circ \gamma^V = \left(f_U \circ\psi\circ \phi^{-1} \right) \circ \left(\phi \circ\psi^{-1}\circ\gamma^U\right) = f_U \circ \gamma^U,
	\eeq
and using the change of coordinates \eqref{eq.coordch1} it follows that
	\begin{align}
	 & \sum_{i=1}^n \left.\frac{\d}{\d t} y^i(t)\right\vert_{t=t_0} \left.\frac{\partial}{\partial y^i} \left(f \circ \phi^{-1}\right)\right\vert_{\phi(p)}\nn\\
	 & \quad = \sum_{i=1}^n \left[\sum_{j=1}^n \left.\left.\left(\frac{\partial y^i}{\partial x^j}\right)\right\vert_{\left(\phi \circ\psi^{-1}\right)\left[x^k_p\right]}\frac{\d}{\d t} x^j(t)\right\vert_{t=t_0}\right] \left.\frac{\partial}{\partial y^i} \left(f \circ \phi^{-1}\right)\right\vert_{\phi(p)}\nn\\
	 	& \quad = \sum_{j=1}^n \left.\frac{\d}{\d t} x^j(t)\right\vert_{t=t_0} \left[\sum_{i=1}^n\left.\left(\frac{\partial y^i}{\partial x^j}\right)\right\vert_{\left(\phi \circ\psi^{-1}\right)\left[x^k_p\right]}\left.\frac{\partial}{\partial y^i}\left(f \circ \phi^{-1} \right)\right\vert_{\phi(p)}\right]\nn\\
		& \quad = \sum_{j=1}^n \left.\frac{\d}{\d t} x^j(t)\right\vert_{t=t_0} \left.\frac{\partial}{\partial x^j}\left(f \circ \psi^{-1} \right)\right\vert_{\psi(p)},
	\end{align}
from which we obtain
	\beq
	\left.\frac{\partial}{\partial x^j}\left(f \circ \psi^{-1} \right)\right\vert_{\psi(p)} = \sum_{i=1}^n\left.\left(\frac{\partial y^i}{\partial x^j}\right)\right\vert_{\left(\phi \circ\psi^{-1}\right)\left[x^k_p\right]}\left.\frac{\partial}{\partial y^i} \left(f \circ \phi^{-1} \right)\right\vert_{\phi(p)},
	\eeq
or, equivalently
	\beq
	\label{eq.jacobian}
		\left.\frac{\partial}{\partial y^j}\left(f \circ \phi^{-1} \right)\right\vert_{\phi(p)} = \sum_{i=1}^n\left.\left(\frac{\partial x^i}{\partial y^j}\right)\right\vert_{\left(\psi \circ\phi^{-1}\right)\left[y^k_p\right]}\left.\frac{\partial}{\partial x^i} \left(f \circ \psi^{-1} \right)\right\vert_{\psi(p)}.
	\eeq
The appearance of the components of the Jacobian matrix in \eqref{eq.jacobian} allows us to see that the change of coordinate maps -- equations \eqref{eq.coordch1} and \eqref{eq.coordch2} -- must be, in fact, diffeomorphisms. That is, the transition maps must be not only homeomorphisms but diffeomorphisms. This is the sense in which the differentiability of the transition maps is not a gratuitous convention but a necessary condition for our definition of the particle velocity to be coordinate independent.

The lesson concerns the status of coordinates rather than the motion of particles. A choice of chart is a matter of what Reichenbach called \emph{descriptive} rather than inductive simplicity~\cite{reichenbach1958philosophy}, a choice among equivalent descriptions which no observation can adjudicate, so that a demand of invariance under it restricts no physics -- the objection Kretschmann pressed against Einstein's demand for general covariance~\cite{kretschmann1917physikalischen,norton1993general}. What such a demand restricts is the formalism, and here it restricts it productively. The differentiable structure of $M$, assumed at the outset as the least that would give the word smoothness a meaning, is thus disclosed not as a convenience adopted in advance but as the exact price of insisting that a chart-free rate of change be expressible in charts at all. It is the one attribute of the spacetime manifold in this paper that the description forces upon us while forcing no physics, and every structure introduced hereafter -- the parallel propagator and the bilinear form alike -- will carry a physical hypothesis of its own.

\emph{How is this notion of velocity related to that of a tangent vector?} In $\mathbb{R}^n$ it is clear that $\vec{v}\vert_{\psi(p)}$ is indeed a vector [cf. equation~\eqref{eq.vectorvelocity}]. Now we will show that the differential operator \eqref{eq.velocity} itself lies in a vector space. 

	\begin{prop}
	\label{prop.vecspace}
	Let $T_p M$ be the set of directional derivatives at a point $p\in M$ whose action on functions is given by equation~\eqref{eq.velocity}. $T_p M$ is a vector space.
	\end{prop}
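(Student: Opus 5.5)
The natural route is to show that $\tpm$ is a linear subspace of the vector space of all real-valued linear maps on $C^\infty(U)$. The latter is a vector space under pointwise operations, $(v+w)[f]=v[f]+w[f]$ and $(\lambda v)[f]=\lambda\, v[f]$. The vector-space axioms for $\tpm$ are then inherited, and it remains to check closure. The obstacle is that a sum of two velocities defined by curves is not obviously again the velocity of some curve, since $M$ has no addition of points. Local coordinates, together with the coordinate expression \eqref{eq.vellocal}, provide the way around this.

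First, fix a chart $(U,\psi)$ with $p\in U$ and, translating if needed, $\psi(p)=0$. By \eqref{eq.vellocal}, every $v\vert_p\in\tpm$ coming from a curve $\gamma$ acts as
\[
v\vert_p[f]=\sum_{i=1}^n v^i\left.\basis[x^i](f\circ\psi^{-1})\right\vert_{\psi(p)},\qquad v^i=\left.\dt[t]x^i(t)\right\vert_{t=t_0}.
\]
Hence $v\vert_p$ is determined by its component $n$-tuple $(v^1,\dots,v^n)\in\mathbb{R}^n$, and the map $v\vert_p\mapsto(v^i)$ is injective onto its image. Second, for arbitrary $(a^1,\dots,a^n)\in\mathbb{R}^n$, take the curve $\gamma(t)=\psi^{-1}(t\,a^1,\dots,t\,a^n)$ for $t$ in a small interval around $0$; this interval exists because $\psi(U)$ is open. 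This curve passes through $p$ at $t=0$, and its velocity has components exactly $a^i$. So every $n$-tuple is realised, and the correspondence $\tpm\leftrightarrow\mathbb{R}^n$ is a bijection.

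Third, closure follows. Given velocities with components $v^i$ and $w^i$ and a scalar $\lambda$, the operator $\lambda v+w$, acting pointwise on $f$, has by linearity of \eqref{eq.vellocal} the components $\lambda v^i+w^i$. By the second step it is therefore the velocity of the curve $t\mapsto\psi^{-1}\bigl(t(\lambda v^i+w^i)\bigr)$, and so lies in $\tpm$. The zero element is the velocity of the constant curve. Consequently $\tpm$ is a subspace, and a vector space of dimension $n$ with basis $\left.\basis[x^i]\right\vert_p$.

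The main point requiring care is that this closure argument is chart-independent, since the curve realising a sum was built in a specific chart. This is settled by observing that the definition \eqref{eq.velocity} and the pointwise operations are chart-free. The Jacobian relation \eqref{eq.jacobian} shows that the components transform linearly between charts, so the linear structure obtained in any chart agrees with the intrinsic pointwise one.
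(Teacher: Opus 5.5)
Your proof is correct, but it is organised differently from the paper's. The paper works curve by curve, with $\psi(p)=0$:
\begin{itemize}
\item it realises $v\vert_p+u\vert_p$ as the velocity of the curve whose coordinates are the sum $\gamma^U+\sigma^U$ of the two given coordinate curves;
\item it obtains nonzero scalar multiples by reparametrisation, since $\gamma\circ s$ has velocity $s'(\tau_0)\,v\vert_p$;
\item it supplies the factor zero separately by the constant curve, because no reparametrisation gives it.
\end{itemize}
You instead show first that every component $n$-tuple is realised, by the straight lines $t\mapsto\psi^{-1}(ta)$. Closure under every linear combination, zero included, then follows in a single step.

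Your route buys three things. It is uniform. It yields the dimension and the coordinate basis in the same stroke, so it absorbs Corollary~\ref{cor.dim}. And it makes explicit the ambient space of linear functionals with pointwise operations, which the paper leaves tacit.

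The paper's route buys a kinematic reading of the scalar structure: multiplying a velocity by a number is traversing the same path with a different clock. The paper draws on this immediately afterwards, and again in Proposition~\ref{prop.reparam}, Lemma~\ref{lem.chain} and Theorem~\ref{theo.inertia}.

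Two small remarks on your proof.
\begin{itemize}
\item Your dimension claim needs injectivity of $v\vert_p\mapsto(v^i)$. For that you must say why the components are fixed by the operator and not merely by the chosen curve: apply the operator to the coordinate functions, $v\vert_p[x^j]=v^j$, as the paper does in Corollary~\ref{cor.dim}. Closure itself does not need this step.
\item Your closing worry about chart dependence is moot. Closure only asks that some curve have the given operator as its velocity, and the paper's sum curve is just as chart-dependent as yours. The Jacobian relation \eqref{eq.jacobian} is not needed.
\end{itemize}
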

	\begin{proof}
	It is sufficient to show that $w\vert_p = v\vert_p + u\vert_p \in T_p M$ for any two directional derivatives $v\vert_p$ and $u\vert_p \in T_p M$ and that $\lambda v\vert_p \in T_p M$ for any value $\lambda \in \mathbb{R}$.
	
	Let $\gamma, \sigma:I\subset \mathbb{R}\longrightarrow U\subset M$ be two curves passing through the point $p\in U\subset M$ at the same time, i.e. $p = \gamma(t_0) = \sigma(t_0)$ for some value $t_0 \in I$. Note that the two curves might have been independently parametrised, however such a situation would only require an additional function mapping the corresponding intervals. Also, note that their coincidence at the point $p$ serves to \emph{synchronise} their parameters. Let $(U,\psi)$ be a coordinate chart such that the directional derivatives acting on a function $f:U\subset M\longrightarrow \mathbb{R}$ are expressed as
		\beq
		v\vert_{p}[f]=%
						\sum_{i=1}^n \left.\frac{\d}{\d t} x^i_\gamma (t)\right\vert_{t=t_0} \left.\frac{\partial}{\partial x^i}\left(f \circ \psi^{-1} \right)\right\vert_{\psi(p)},
		\eeq
and
	\beq
	u\vert_{p}[f]=%
						\sum_{i=1}^n \left.\frac{\d}{\d t} x^i_\sigma (t)\right\vert_{t=t_0} \left.\frac{\partial}{\partial x^i}\left(f \circ \psi^{-1} \right)\right\vert_{\psi(p)},
	\eeq
respectively. Clearly $v\vert_p$ and $u\vert_p \in T_p M$. Here, as before [cf. diagram~\eqref{diag.curve}], we have the corresponding curves on $\psi(U)\subset \mathbb{R}^n$
	\beq
	\gamma^U \equiv \left[ \psi \circ \gamma\right] (t) = \left[x^1_\gamma(t),\cdots,x^n_\gamma(t) \right] \quad \text{and} \quad \sigma^U \equiv \left[ \psi \circ \sigma\right] (t) = \left[x^1_\sigma(t),\cdots,x^n_\sigma(t) \right],
	\eeq
together with their associated position vectors
	\beq
	\label{pro.posvects}
	\vec r_\gamma(t) = \sum_{i=1}^n x^i_\gamma (t)\ \hat e_{(i)} \quad \text{and} \quad \vec r_\sigma(t) = \sum_{i=1}^n x^i_\sigma (t)\ \hat e_{(i)}.
	\eeq
Let us adjust the coordinates of the point $p\in U\subset M$ so that they coincide with the origin, i.e.
	\beq
	\psi(p) = \psi\left[\gamma(t_0) \right] = \psi\left[\sigma(t_0) \right] = \vec 0 \in \mathbb{R}^n.
	\eeq
Thus, the position vector given by the sum of \eqref{pro.posvects}
	\beq
	\vec r_\lambda \equiv \vec r_\gamma + \vec r_\sigma = \sum_{i=1}^n \left[x^i_\gamma(t) + x^i_\sigma (t)\right] \ \hat e_{(i)}
	\eeq
satisfies that $\vec r_\lambda (t_0) = \vec 0$. That is, we have the curve $\lambda^U:I\subset \mathbb{R}\longrightarrow \psi(U)\subset\mathbb{R}^n$ given by
	\beq
	\lambda^U = \left[x^1_\lambda(t),\cdots,x^n_\lambda(t) \right] = \left[x^1_\gamma(t) + x^1_\sigma(t),\cdots,x^n_\gamma(t)+x^n_\sigma(t) \right]
	\eeq
	with $\lambda^U(t_0)=\psi(p)$ and whose velocity at the point $\psi(p)$ is given by
		\beq
		\vec w\vert_{\psi(p)} = \sum_{i=1}^n \left.\frac{\d}{\d t} x^i_\lambda(t)\right\vert_{t=t_0} \hat e_{(i)} = \sum_{i=1}^n \left.\frac{\d}{\d t}\left[x^i_{\gamma}(t) + x^i_\sigma(t) \right]\right\vert_{t=t_0} \hat e_{(i)}.
		\eeq
	Therefore,
		\beq
		\begin{split}
		w\vert_p[f] = v\vert_p[f] + u\vert_p[f] & = \sum_{i=1}^n \left.\frac{\d}{\d t}\left[x^i_{\gamma}(t) + x^i_\sigma(t) \right]\right\vert_{t=t_0} \left.\frac{\partial}{\partial x^i}\left(f \circ \psi^{-1} \right)\right\vert_{\psi(p)}\\
		& = \left.\frac{\d}{\d t} \left[f \circ \lambda \right](t)\right\vert_{t=t_0}
		\end{split}
		\eeq
is the directional derivative along the curve $\lambda:I\subset\mathbb{R}\longrightarrow U \subset M$ given by $\lambda \equiv \psi^{-1} \circ \lambda^U$ at the point $p=\lambda(t_0)$. Hence, $w\vert_p = v\vert_p + u\vert_p\in T_p M$.

Now, let us consider a reparametrisation of the curve $\gamma$, namely $\tilde\gamma:I'\subset \mathbb{R} \longrightarrow U\subset M$ together with a bijective and differentiable function $\tau:I\subset \mathbb{R} \longrightarrow I'\subset \mathbb{R}$ so that the diagram
	\beq
		\label{diag.reparcurve}
	\begin{tikzpicture}[]
	\matrix[matrix of math nodes,column sep={100pt,between origins},row
	sep={70pt,between origins},nodes={asymmetrical rectangle}] (s)
	{
	|[name=a1]| I\subset\mathbb{R}		& |[name=a2]| 					U \subset M		& 	|[name=a3]|	 I'\subset\mathbb{R}			\\
	|[name=b1]|						 	& |[name=b2]| \psi(U) \subset \mathbb{R}^n		&	|[name=b3]| 								\\
	}
	;
	\draw[->] 	
			(a1) edge node[auto] {\(\gamma\)} (a2)
			(a3) edge node[above] {\(\tilde\gamma\)} (a2)
			(a3) edge[bend right=30] node[above] {\(s\)} (a1)
			(a2) edge node[auto] {\(\psi\)} (b2)
			(a1) edge node[below] {\(\gamma^U \)} (b2)
			(a3) edge node[auto] {\(\tilde \gamma^U \)} (b2)
	;
	\end{tikzpicture}
	\eeq
commutes. Here, $\tilde \gamma = \gamma \circ s$ with $t=s(\tau)$ where $\tau\in I'\subset \mathbb{R}$ is the parameter of $\tilde \gamma$ and we set the function $s$ so that $p=\gamma(t_0)=\tilde\gamma(\tau_0)$, that is, $t_0 = s(\tau_0)$. Note that although the image of the curves is the same -- $\gamma(I) = \tilde\gamma(I')\subset U$ -- they are indeed different curves. %
Indeed
	\beq
	\label{eq.tildeu}
	\tilde v\vert_p[f] = \left.\frac{\d}{\d \tau} \left[f \circ \tilde\gamma \right](\tau) \right\vert_{\tau=\tau_0} 
					 	=	\left.\frac{\d }{\d \tau}\left[f \circ \gamma \circ s \right](\tau) \right\vert_{\tau=\tau_0} 
					 	= \left.\frac{\d}{\d \tau} s(\tau) \right\vert_{\tau=\tau_0}\left.\frac{\d}{\d t}\left[f \circ \gamma \right](t) \right\vert_{t=t_0},
	\eeq
from which it follows that $\tilde v\vert_p[f] = s'(\tau_0) v\vert_p[f]$.
Clearly, $\tilde v\vert_p \in T_p M$ since it represents the directional derivative along the curve $\tilde\gamma$ at the point $p\in U$ and, due to the arbitrariness of the reparametrisation, $s'(\tau_0) v\vert_p \in T_p M$ where the prime denotes the derivative of $s$ with respect to the parameter $\tau$. A reparametrisation is invertible, so $s'(\tau_0)$ ranges over the reals other than zero, and the remaining value is furnished by the constant curve $\gamma(t) \equiv p$, whose directional derivative annihilates every function and is the null element of $T_p M$. Hence $\lambda\, v\vert_p \in T_p M$ for every $\lambda \in \mathbb{R}$.

	\end{proof}

	\begin{corollary}
	\label{cor.dim}
	Let $M$ be an $n$-dimensional differentiable manifold and $p\in M$. The tangent space $T_p M$ has dimension $n$. In any chart $(U,\psi)$ around $p$, the coordinate directional derivatives $\left\{\left.\basis[x^i]\right\vert_p\right\}_{i=1}^n$ form a basis, and the assignment $v\vert_p \longmapsto \left(\left.\dt[t] x^1\right\vert_{t_0},\dots,\left.\dt[t] x^n\right\vert_{t_0}\right)$ is a linear isomorphism $T_p M \cong \mathbb{R}^n$.
	\end{corollary}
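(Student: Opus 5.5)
The plan is to define the coordinate map $\Phi_\psi: T_pM \to \mathbb{R}^n$, $v\vert_p \mapsto \bigl(\dot x^1(t_0),\dots,\dot x^n(t_0)\bigr)$, and show in turn that it is well defined, linear, surjective and injective.

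\textbf{Well-definedness.} A single operator $v\vert_p$ may arise from many curves, so we must check that the tuple depends only on the operator. We apply $v\vert_p$ to the coordinate functions $f = x^j\circ\psi$, restricted to $U$; these are smooth on $U$, which is all Definition~\ref{def.velocity} requires. By \eqref{eq.vellocal}, $v\vert_p[x^j\circ\psi] = \dot x^j(t_0)$. The components are thus recovered from the action of the operator alone, and any two curves defining the same operator give the same tuple.

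\textbf{Linearity.} This is read off the proof of Proposition~\ref{prop.vecspace}. The sum was realised by the curve with coordinates $x^i_\gamma + x^i_\sigma$, and the scalar multiple by a reparametrisation, which multiplies the components by $s'(\tau_0)$ as in \eqref{eq.tildeu}. Alternatively, linearity follows directly from evaluating on the coordinate probes, since the vector space operations on $T_pM$ are pointwise on functions.

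\textbf{Surjectivity.} Given $(a^1,\dots,a^n)\in\mathbb{R}^n$, we centre the chart as in Proposition~\ref{prop.vecspace} so that $\psi(p)=\vec 0$. We then take the straight curve $\gamma(t)=\psi^{-1}\bigl(t\,(a^1,\dots,a^n)\bigr)$ for $|t|$ small. Because $\psi(U)$ is open, this curve stays inside $\psi(U)$, and its components at $t_0=0$ are $a^i$. Choosing $a = e_{(i)}$ gives a curve whose operator is exactly $\left.\basis[x^i]\right\vert_p$ acting via $f\circ\psi^{-1}$. Hence each coordinate derivative lies in $T_pM$.

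\textbf{Injectivity.} If all components vanish, \eqref{eq.vellocal} shows that $v\vert_p[f]=0$ for every $f$, so $v\vert_p$ is the null element. Together with linearity this makes $\Phi_\psi$ a linear isomorphism. The preimages of the standard basis, namely the $\left.\basis[x^i]\right\vert_p$, then form a basis: \eqref{eq.vellocal} writes every $v\vert_p$ as $\sum_i \dot x^i(t_0)\left.\basis[x^i]\right\vert_p$, and linear independence follows by evaluating on the probes $x^j\circ\psi$. Therefore $\dim T_pM = n$.

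The step I expect to need the most care is well-definedness. It relies on the coordinate functions being admissible probes, which here is immediate since $f$ need only be smooth on $U$. Chart-independence of the dimension is automatic, because $\Phi_\phi\circ\Phi_\psi^{-1}$ is the Jacobian of the transition map, which \eqref{eq.jacobian} and the diffeomorphism requirement make invertible.
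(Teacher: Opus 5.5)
Your proof is correct and follows essentially the same route as the paper. The paper reads off spanning from the coordinate expression \eqref{eq.vellocal} and obtains independence, equivalently the recovery of the components from the operator, by evaluating on the coordinate probes $x^j$; your map $\Phi_\psi$ makes explicit two points the paper leaves tacit, namely that the tuple depends only on the operator and, via the straight curves $\psi^{-1}(t\,a)$, that each $\left.\frac{\partial}{\partial x^i}\right\vert_p$ is genuinely a velocity and so lies in $T_pM$ (your closing chart-independence remark is harmless but unnecessary, since $\dim T_pM$ is intrinsic).
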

	\begin{proof}
	By the coordinate expression \eqref{eq.vellocal}, every $v\vert_p \in T_p M$ acts on a function as
	\beq
	v\vert_p[f] = \sum_{i=1}^n \left.\dt[t] x^i\right\vert_{t_0} \left.\basis[x^i]\left(f\circ\psi^{-1}\right)\right\vert_{\psi(p)},
	\eeq
	so the operators $\basis[x^i]\vert_p$ span $T_p M$. They are also linearly independent, for the coordinate functions serve as their own probes. Since $\basis[x^i]\vert_p\left[x^j\right] = \delta^j_i$, a vanishing combination applied to $x^j$ returns its own coefficient,
	\beq
	\sum_{i=1}^n \left. c^i \basis[x^i]\right\vert_p = 0 \quad \Longrightarrow \quad 0 = \left(\sum_{i=1}^n \left.c^i \basis[x^i]\right\vert_p\right)\left[x^j\right] = \sum_{i=1}^n c^i \delta^j_i = c^j, \qquad j=1,\dots,n.
	\eeq
	The stated assignment is therefore a linear bijection.
	\end{proof}

Two remarks are in order. On the one hand, from the geometric point of view, we see that the directional derivative operator $v\vert_p\in T_p M$ along a curve $\gamma$ at a given point $p\in U\subset M$ is a vector which can be considered as \emph{tangent} to the curve in the same way as $\vec v_{\psi(p)}$ is tangent to $\gamma^U$ at $\psi(p)\in \psi(U)\subset \mathbb{R}^n$ in the corresponding image in the local chart $(U,\psi)$. On the other hand, from the kinematic point of view, the intuitive notion of velocity is linked to the idea of how fast the position of a particle is changing. However, we do not yet have an inner product or a norm defined on $T_p M$ that allows us to compare the magnitude of tangent vectors. Nonetheless, in proving Proposition~\ref{prop.vecspace} -- cf. equation~\eqref{eq.tildeu} -- we see that a reparametrisation of a curve is morally tantamount to traversing the same \emph{path} at a different \emph{speed}. Here, the path is merely the image of the curve. Although we have not yet properly introduced the notion of speed, we can unambiguously tell whether a function changes \emph{faster} or \emph{slower} at a given point while travelling along the given path, depending on the parametrisation we use.

\subsection{Acceleration}

Now, our inquiry can go a step further to consider the rate of change of the velocity of a particle traversing a curve at a given point. This turns out to be a harder problem since we will have to compare vectors at neighbouring points. Indeed, if $p,q \in M$ are different, $q \neq p$, then $T_p M$ and $T_q M$ are two different vector spaces and we have no means of adding a tangent vector at $p$ with one at $q$. In order to proceed we need to prescribe a procedure to \emph{move} vectors from one point in the manifold to another.

To begin our new task, we set up a procedure to relate tangent vectors not only at different points but in a more general setting, namely, that between different manifolds.

	\begin{mydef}[The push-forward]
	\label{def.push}
	Let $M$ and $N$ be differentiable manifolds, $U\subset M$ an open subset and $\gamma:I\subset\mathbb{R}\longrightarrow U\subset M$ a smooth curve such that $\gamma(t_0) = p\in U$. Let $F:M \longrightarrow N$ be a smooth map and $f:F(U)\subset N \longrightarrow \mathbb{R}$ a real valued differentiable function. The induced linear map $\D_p (F): T_p M \longrightarrow T_{F(p)} N$ whose action on a tangent vector $v\vert_p \in T_p M$,
	\beq
	\D_{p}(F) [v] \equiv F_*\left[v\vert_p\right] = u\vert_{F(p)} \in T_{F(p)} N,
	\eeq
is defined by 
	\beq
	\label{eq.pushf}
	u\vert_{F(p)}\left[f \right] = \left.\frac{\d}{\d t}\left[f \circ (F \circ \gamma) \right]\right\vert_{t=t_0} =\left.\frac{\d}{\d t}\left[(f \circ F) \circ \gamma \right]\right\vert_{t=t_0} = v\vert_p [f \circ F],
	\eeq
 and is called the push-forward of $v$ by $F$ at $p$.
\end{mydef}

The schematic picture of Definition~\ref{def.push} can be summarised by the diagram
		\beq
		\label{diag.compmaps2}
	\begin{tikzpicture}[]
	\matrix[matrix of math nodes,column sep={100pt,between origins},row
	sep={70pt,between origins},nodes={asymmetrical rectangle}] (s)
	{
	|[name=a1]| I \subset \mathbb{R}					& |[name=a2]| 	U\subset M		& |[name=a3]| 				\\
	|[name=b1]|											& |[name=b2]| F(U) \subset N	& |[name=b3]| \mathbb{R}.	\\
	}
	;
	\draw[->] 	
			(a1) edge node[auto] {\(\gamma\)} (a2)
			(a2) edge node[auto] {\(F\)} (b2)
			(a1) edge node[left] {\(F \circ \gamma \)} (b2)
			(b2) edge node[auto] {\(f \)} (b3)
			(a2) edge node[auto] {\(f \circ F \)} (b3)
	;
	\end{tikzpicture}
	\eeq
 Thus we see that the action that $u\vert_{F(p)}\in T_{F(p)} N$ has on $f$ is equivalent to that of $v\vert_p\in T_p M$ on $(f\circ F):U\subset M \longrightarrow \mathbb{R}$.

	In this sense, the vector $u\vert_{F(p)}$ is tangent to the induced curve $(F\circ \gamma)$ at the point $F(p)$ and, as a differential operator, its action on any function $f:F(U)\subset N \longrightarrow \mathbb{R}$ is \emph{pushed forward} from that of $v\vert_p$ on $(f\circ F):U\subset M \longrightarrow \mathbb{R}$. 
	
	Now, let us express the push-forward~\eqref{eq.pushf} in terms of local coordinates. To this end, consider the diagram
	\beq
		\label{diag.compmaps}
	\begin{tikzpicture}[]
	\matrix[matrix of math nodes,column sep={100pt,between origins},row
	sep={70pt,between origins},nodes={asymmetrical rectangle}] (s)
	{
	|[name=a0]|						& |[name=a1]| U \subset M						& |[name=a2]| 		F(U) \subset N							& |[name=a3]|					\\
	|[name=b0]|	I\subset\mathbb{R}	& |[name=b1]| \psi(U) \subset\mathbb{R}^m 		& |[name=b2]| \phi\left[F(U) \right] \subset \mathbb{R}^n	& |[name=b3]| \mathbb{R}.		\\
	}
	;
	\draw[->] 	
			(a1) edge node[auto] {\(F\)} (a2)
			(a2) edge node[auto] {\(\phi\)} (b2)
			(a1) edge node[left] {\(\psi \)} (b1)
			(b1) edge node[auto] {\(\tilde F \)} (b2)
			(b0) edge node[auto] {\(\gamma\)} (a1)
			(b0) edge node[auto] {\(\gamma^U\)} (b1)
			(a2) edge node[auto] {\(f\)} (b3)
			(b2) edge node[auto] {\(f_{F(U)} \)} (b3)
			(b0) edge[bend right=20] node[below] {\(f_{F(U)} \circ \tilde F \circ \gamma^U \)} (b3)
	;
	\end{tikzpicture}
	\eeq
Here, as before [cf. equation~\eqref{eq.equivmaps}] we have
	\beq
	f_{F(U)} \circ \tilde F \circ \gamma^U = \left(f \circ \phi^{-1}\right) \circ \left(\phi \circ F \circ \psi^{-1} \right) \circ \left(\psi \circ \gamma\right) = f \circ F \circ \gamma.
	\eeq
In this case, we can compute
	\begin{align}
	& \left.\frac{\d}{\d t} \left[f_{F(U)} \circ \tilde F \circ \gamma^U \right]\right\vert_{t=t_0}\nonumber\\
	& \quad = \left.\frac{\d}{\d t} f_{F(U)}\left[y^1(t),\cdots,y^n(t) \right]\right\vert_{t=t_0}\nonumber\\
				& \quad = \sum_{i=1}^n \left.\frac{\partial}{\partial y^i} \left(f\circ \phi^{-1}\right)\right\vert_{\phi\left[F(p)\right]} \left.\frac{\d}{\d t} y^i(t)\right\vert_{t=t_0}\nonumber\\
				& \quad = \sum_{i=1}^n \left.\frac{\partial}{\partial y^i} \left(f\circ \phi^{-1}\right)\right\vert_{\phi\left[F(p)\right]} \left[\sum_{j=1}^m \left.\left(\frac{\partial \tilde F^i}{\partial x^j}\right)\right\vert_{(\phi\circ F\circ\psi^{-1}) [x^k_p]} \left.\frac{\d}{\d t}x^j(t) \right\vert_{t=t_0}\right]\nonumber\\
				& \quad = \sum_{i=1}^n \sum_{j=1}^m \left.\left(\frac{\partial \tilde F^i}{\partial x^j}\right)\right\vert_{(\phi\circ F\circ\psi^{-1}) [x^k_p]} \left.\frac{\d}{\d t}x^j(t) \right\vert_{t=t_0} \left.\frac{\partial}{\partial y^i} \left(f\circ \phi^{-1}\right)\right\vert_{\phi\left[F(p)\right]},
	\end{align}
	and it follows that the components of the tangent vector to the induced curve $(F\circ \gamma)$ at $F(p)$ are
	\beq
	 \left.\frac{\d}{\d t} y^i(t)\right\vert_{t=t_0}= \sum_{j=1}^m \left.\left(\frac{\partial \tilde F^i}{\partial x^j}\right)\right\vert_{(\phi\circ F\circ\psi^{-1}) [x^k_p]} \left.\frac{\d}{\d t}x^j(t) \right\vert_{t=t_0}.
	\eeq

 Moreover, let $(U,\psi)$ and $\left(F(U),\phi\right)$ be local coordinate charts for $M$ and $N$ around $p$ and $F(p)$, respectively, so that
	\beq
	v\vert_p[f\circ F] = \sum_{j=1}^m \left.\frac{\d}{\d t} x^j(t) \right\vert_{t=t_0} \left.\frac{\partial}{\partial x^j}\left[(f \circ F) \circ \psi^{-1}\right]\right\vert_{\psi(p)},
	\eeq
	and
	\beq
	u\vert_{F(p)}\left[f \right] = \sum_{i=1}^n \left.\frac{\d}{\d t} y^i(t) \right\vert_{t=t_0} \left.\frac{\partial}{\partial y^i}\left(f \circ \phi^{-1}\right)\right\vert_{\phi[F(p)]}.
	\eeq
 Then
	\beq
	\begin{split}
	\D_p(F)[v][f] & = \sum_{i=1}^n \sum_{j=1}^m \left.\left(\frac{\partial \tilde F^i}{\partial x^j}\right)\right\vert_{(\phi\circ F\circ\psi^{-1}) [x^k_p]} \left.\frac{\d}{\d t}x^j(t) \right\vert_{t=t_0} \left.\frac{\partial}{\partial y^i} \left(f\circ \phi^{-1}\right)\right\vert_{\phi\left[F(p)\right]}\\
	& = u\vert_{F(p)}[f].
	\end{split}
	\eeq

Up to this point every tangent vector has been attached to a single event. The constructions that follow require them assembled along whole curves, and the object that carries them deserves a name. The \emph{tangent bundle} of $M$ is the union $TM \equiv \bigcup_{p\in M} T_p M$ of all its tangent spaces, and a \emph{vector field} is a smooth assignment of a tangent vector $v\vert_p \in T_p M$ to each event $p\in M$, that is, a smooth section of $TM$. We write $\GTM$ for the set of such sections.

	\begin{mydef}[Flow of a vector field]
	\label{def.flow}
	Let $v\in \GTM$, $\gamma:I\subset \mathbb{R}\longrightarrow M$ an integral curve of $v$ passing through $p=\gamma(t_0) \in M$ for some $t_0 \in I$. The flow of the vector field $v$ through the point $p$ is the one-parameter family of mappings $\varphi:(-\epsilon,\epsilon)\times M \longrightarrow M$,
		\beq
		(t,p)\mapsto \varphi_t(p) = \gamma(t_0 + t) \in M, 
		\eeq
	such that
		\begin{enumerate}
			\item $\varphi_{0}(p) = \gamma(t_0) =p$,
			\item $\varphi_{t+s}(p) = \varphi_t\left[\varphi_s (p)\right]$ for every $t,s \in (-\epsilon,\epsilon)$, 
			\item $\varphi^{-1}_t \left[\varphi_t (p) \right] = \varphi_{(-t)}\left[\gamma(t_0 + t)\right]= \gamma(t_0)= p$.
		\end{enumerate}
	\end{mydef}
The existence of the flow is itself an assumption, and a mild one. The smoothness of the vector field $v$ guarantees, through the standard theory of ordinary differential equations, that a unique integral curve issues from each event. With it the flow $\varphi_t$ is defined for $t$ in some interval $(-\epsilon,\epsilon)$ about the origin, in general a proper one. Whether $\epsilon$ may be taken arbitrarily large, so that $\varphi_t$ is defined for every $t\in\mathbb{R}$, is the question of the \emph{completeness} of $v$; we shall assume it whenever we follow a motion for an unbounded lapse of its parameter, and shall signal the assumption at the point of use.

For each $t\in(-\epsilon,\epsilon)$, $\varphi$ induces a map between tangent spaces ${\varphi_t}_* : T_{\gamma(t_0)}M \longrightarrow T_{\gamma(t_0 + t)}M$ such that for every differentiable function $f: M\longrightarrow \mathbb{R}$
we have
	\beq
	\label{diag.push}
	\begin{tikzpicture}[]
	\matrix[matrix of math nodes,column sep={100pt,between origins},row
	sep={70pt,between origins},nodes={asymmetrical rectangle}] (s)
	{
	|[name=a1]| 	M				& |[name=a2]| M 		\\
	|[name=b1]| 				 	& |[name=b2]| \mathbb{R}				 \\
	}
	;
	\draw[->] 	
			(a1) edge node[above] {\(\varphi_t\)} (a2)
			(a2) edge node[right] {\(f\)} (b2)
			(a1) edge node[left] {\(\tilde f = f \circ \varphi_t\)} (b2)
	;
	\end{tikzpicture}
	\eeq
and
	\beq
	\label{eq.flowpush}
	 v\vert_p \left[f \circ \varphi_t \right] = {\varphi_t}_* \left(v\vert_p \right)\left[ f\right] .
	\eeq

This allows us to define the derivative of a vector field $w$ along the flow generated by $v$ as
	\beq
	\label{eq.lieflow}
	\dot w\vert_p \equiv \lim_{t\rightarrow 0} \frac{{\varphi_{t}}^{-1}_*\left(w\vert_{\varphi_t(p)}\right) - w\vert_p}{t} = \left.\dt[t] {\varphi_t}^{-1}_*\left(w\vert_{\varphi_t(p)}\right) \right\vert_{t=0} \equiv L_v w\vert_p \in \tpm.
	\eeq
Here, the map ${\varphi_t}^{-1}_* = {\varphi_{-t}}_*$ carries the vector $w\vert_{\varphi_t(p)}$ back from the advanced event $\varphi_t(p)$ to $p$, so that the difference in the numerator is an element of the single tangent space $T_p M$ and the limit is well defined. A function needs no transport back, since the value $f\left(\varphi_t(p)\right)$ is already a real number, and the flow derivative reduces to the rate of change of $f$ along the motion, that is, $L_v f\vert_p \equiv v\vert_p[f]$ in the sense of Definition~\ref{def.velocity}.

The abstract definition \eqref{eq.lieflow} acquires a concrete and computable form once we let it act on a function. %

	\begin{lemma}[Lie bracket]
	\label{lem.liebracket}
	Let $v,u\in \GTM$ be a pair of vector fields and $\varphi$ the flow generated by $v$. The flow derivative \eqref{eq.lieflow} of $u$ along $v$ is the vector field whose action on any smooth function $f:M\longrightarrow \mathbb{R}$ is
		\beq
		\label{eq.Xdot}
		L_v u\,[f] = v\left[u[f]\right] - u\left[v[f]\right].
		\eeq
	This field is the \emph{Lie bracket} of $v$ and $u$, written $L_v u = [v,u]$~\cite{nakahara2018geometry}.
	\end{lemma}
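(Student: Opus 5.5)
The plan is to evaluate the flow derivative \eqref{eq.lieflow} on an arbitrary smooth probe $f$ and reduce everything to derivatives of real functions of two real variables. Fix $p$ and set, for $t$ small,
\beq
\nn
G(t) \equiv {\varphi_t}^{-1}_*\left(u\vert_{\varphi_t(p)}\right)[f] = u\vert_{\varphi_t(p)}\left[f\circ\varphi_{-t}\right],
\eeq
using \eqref{eq.flowpush} together with property (3) of Definition~\ref{def.flow}, so that ${\varphi_t}^{-1}_*={\varphi_{-t}}_*$. Since the evaluation of a vector on $f$ is linear and the limit in \eqref{eq.lieflow} lives in the single space $T_pM$, we have $L_v u\vert_p[f]=G'(0)$, and it suffices to compute this derivative.

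The key step is to introduce the auxiliary function of two variables $H(s,t)\equiv u\vert_{\varphi_s(p)}\left[f\circ\varphi_{-t}\right]$, so that $G(t)=H(t,t)$ and, by the chain rule, $G'(0)=\partial_s H(0,0)+\partial_t H(0,0)$. For the first term, $t=0$ gives $H(s,0)=u[f](\varphi_s(p))$, the value of the smooth function $u[f]$ along the integral curve of $v$ through $p$; by Definition~\ref{def.velocity} its derivative at $s=0$ is $v\vert_p\left[u[f]\right]$. For the second term, $s=0$ gives $H(0,t)=u\vert_p\left[f\circ\varphi_{-t}\right]$, and I would write $u\vert_p$ in a chart via \eqref{eq.vellocal} as a finite combination of coordinate derivatives $\partial/\partial x^i$ at $\psi(p)$, so that $\partial_t H(0,0)=u\vert_p\left[\partial_t(f\circ\varphi_{-t})\vert_{t=0}\right]$. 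Since $\varphi_{-t}(q)$ is the integral curve of $v$ through $q$ run backwards, $\partial_t (f\circ\varphi_{-t})(q)\vert_{t=0}=-v[f](q)$, giving $-u\vert_p\left[v[f]\right]$. Summing yields \eqref{eq.Xdot}. Finally, I would note that the right-hand side is linear in $f$ and satisfies the Leibniz rule (the second-order terms $v[f]\,u[g]$ cancel in pairs), and its coordinate form $\sum_i\left(v^j\partial_j u^i-u^j\partial_j v^i\right)\partial_i f$ shows it is a first-order directional derivative, hence an element of $T_pM$ by Corollary~\ref{cor.dim}, depending smoothly on $p$.

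The main obstacle is justifying the two analytic manipulations: the chain rule for $H$ and the interchange of $\partial_t$ with the coordinate derivatives hidden inside $u\vert_p$. Both require $H$ (equivalently $(t,q)\mapsto f(\varphi_{-t}(q))$) to be jointly $C^2$, i.e. smooth dependence of the flow on initial data and time. I would invoke this from the same standard ODE theory the paper already cites for the existence of the flow, after which the exchange of mixed partials in the chart is the ordinary symmetry of second derivatives in $\mathbb{R}^{n+1}$.
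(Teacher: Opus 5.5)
Your proof is correct and follows essentially the same route as the paper, which likewise writes the pulled-back action as $u\vert_{\varphi_t(p)}[f\circ\varphi_{-t}]$ and splits the $t$-dependence into the moving base point, giving $v[u[f]]$, and the changing probe, giving $-u[v[f]]$. The paper does this by first-order expansions in $t$ rather than your two-variable chain rule for $H(s,t)$; your version makes explicit the joint smoothness of $(t,q)\mapsto f(\varphi_{-t}(q))$ that the paper uses tacitly when it applies $u\vert_{\varphi_t(p)}$ to its $O(t^2)$ remainder.
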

	\begin{proof}
	The map ${\varphi_t}^{-1}_* = {\varphi_{-t}}_*$ transports $u\vert_{\varphi_t(p)}$ back to $p$ along the diffeomorphism $\varphi_{-t}$, which sends $\varphi_t(p)$ to $p$; by the push-forward definition \eqref{eq.pushf}, its action on $f$ is
		\beq
		\label{eq.liebproof1}
		{\varphi_t}^{-1}_*\left(u\vert_{\varphi_t(p)}\right)[f] = u\vert_{\varphi_t(p)}\left[f\circ\varphi_{-t}\right].
		\eeq
	Expanding the right hand side to first order in $t$ -- using $\left.\dt[t]\left(f\circ\varphi_t\right)\right\vert_{t=0} = v[f]$ for the velocity of the flow [cf. definition \eqref{eq.velocity}] -- gives
		\beq
		\label{eq.liebproof2}
		f\circ\varphi_{-t} = f - t\, v[f] + O(t^2).
		\eeq
	Substituting the expansion~\eqref{eq.liebproof2} into the pulled-back action~\eqref{eq.liebproof1} and applying the companion expansion $h\left(\varphi_t(p)\right) = h(p) + t\, v[h](p) + O(t^2)$ of any smooth function $h$ at the advanced event -- to $h = u[f]$ in the leading term and, within the term already of order $t$, to $h = u\left[v[f]\right]$ -- yields
		\begin{align}
		\label{eq.liebproof3}
		{\varphi_t}^{-1}_*\left(u\vert_{\varphi_t(p)}\right)[f] 	& = u\vert_{\varphi_t(p)}[f] - t\, u\vert_{\varphi_t(p)}\left[v[f]\right] + O(t^2)\nn\\
			& = u[f](p) + t\left(v\left[u[f]\right] - u\left[v[f]\right]\right)(p) + O(t^2).
		\end{align}
	Subtracting $u\vert_p[f] = u[f](p)$, dividing by $t$ and letting $t\rightarrow 0$ in the definition \eqref{eq.lieflow} yields the expression \eqref{eq.Xdot}. The right hand side of \eqref{eq.Xdot} is again a derivation -- hence a genuine vector field -- since expanding $[v,u][fh]$ by the Leibniz rule cancels the terms in products of first derivatives of $f$ and $h$, leaving $f\,[v,u][h] + h\,[v,u][f]$; the antisymmetry $[v,u] = -[u,v]$ and the Jacobi identity are then immediate, standard properties for which we refer the reader to~\cite{nakahara2018geometry}.
	\end{proof}
In this case -- intuitively -- the acceleration of the curve should correspond to the time derivative of its velocity. However, in view of Lemma~\ref{lem.liebracket}, this derivative is the bracket of the velocity field with itself, $L_v v = [v,v] = 0$, so the condition it would impose is trivial and identically satisfied.

Recalling Newtonian dynamics, the acceleration of a trajectory measures the departure of its tangent velocity vector field from an \emph{a priori} notion of straightness when \emph{propagated} along its own integral curves. That is, the manifold $M$ ought to be equipped with an attribute allowing us to uniquely propagate vectors from one point in the manifold to another. This is precisely the notion of a \emph{propagator}. Thus, let us consider the following
	\begin{mydef}[Parallel propagator]
	\label{def.propagator}
	Let $v,w \in \GTM$ be a pair of vector fields, $\gamma:I\subset\mathbb{R}\longrightarrow M$ the integral curve of $v$ passing through $p=\gamma(t_0)$, $t_0\in I$. A propagator along $\gamma$ is a one-parameter family of bijective maps between tangent spaces
	\beq
	P_\gamma\left(t_0,t\right): T_{\gamma(t_0)}M \longrightarrow T_{\gamma(t)}M
	\eeq 
such that
	\begin{enumerate}
	\item $\propagator[\gamma,t_0,t_0]\left[w\vert_{\gamma(t_0)} \right]= w\vert_{\gamma(t_0)}$,
	\item $\propagator[\gamma,t_1,t_2]\left[ \propagator[\gamma,t_0,t_1]\left(w\vert_{\gamma(t_0)} \right)\right] = \propagator[\gamma,t_0,t_2]\left(w\vert_{\gamma(t_0)} \right)$,
	\item $\ipropagator[\gamma,t,t_0]\left[ \propagator[\gamma,t_0,t]\left(w\vert_{\gamma(t_0)} \right)\right] = w\vert_{\gamma(t_0)}$ and, finally,
	\item when acting on functions
	\beq
	\propagator[\gamma,t_0,t]\left[f(p)\right] = f(p) \quad \text{where} \quad p=\gamma(t_0).
	\eeq	
	\end{enumerate}
A propagator should not distort the linear structure of the tangent spaces it connects, that is, transporting a sum amounts to transporting the summands. Thus, we say $P_\gamma\left(t_0,t\right)$ is \emph{linear} if each of its maps is a linear isomorphism,
	\beq
	\label{eq.linprop}
	P_\gamma\left(t_0,t \right)\left[\alpha\, w\vert_{\gamma(t_0)} + \beta\, u\vert_{\gamma(t_0)} \right] = \alpha\, P_\gamma\left(t_0,t \right) \left[w\vert_{\gamma(t_0)} \right] + \beta\, P_\gamma\left(t_0,t \right) \left[u\vert_{\gamma(t_0)} \right],
	\eeq
for every $\alpha,\beta\in \Reals$ and every pair $w\vert_{\gamma(t_0)}, u\vert_{\gamma(t_0)} \in T_{\gamma(t_0)}M$. Moreover, transport should be an attribute of the \emph{path} -- the oriented image of $\gamma$ -- and not of the clock used to traverse it. We say the propagator is \emph{parametrisation independent} if, for every reparametrisation $\sigma$ with $\sigma' \neq 0$,
	\beq
	\label{eq.reparinv}
	P_{\gamma\circ\sigma^{-1}}\left(\sigma(t_0),\sigma(t) \right) = P_\gamma\left(t_0,t\right).
	\eeq
A linear and parametrisation-independent propagator is called \emph{parallel}.
\end{mydef}

Given a parallel propagator, we can introduce a derivative operator acting on vector fields as follows
	\begin{mydef}[Covariant derivative]
	\label{def.covd}
	Let $v,w\in \GTM$ be a pair of vector fields, $P_{\gamma}\left(t_0,t \right)$ a parallel propagator along the integral curve $\gamma:I\subset\mathbb{R} \longrightarrow M$ of $v$ passing through $p=\gamma(t_0)$, $t_0\in I$. The covariant derivative of $w$ along the flow of $v$ at the point $p\in M$  is defined as
		\beq
		 \left.\frac{\D}{\d t} w\right\vert_{t=t_0} \equiv \lim_{t\rightarrow t_0} \frac{P^{-1}_\gamma \left(t,t_0 \right)\left[w\vert_{\gamma(t)} \right] - w\vert_{\gamma(t_0)}}{t-t_0} = \left.\frac{\d}{\d t} P^{-1}_\gamma\left(t,t_0 \right)\left[w\vert_{\gamma(t)} \right]\right\vert_{t=t_0} \equiv \nabla_v w\vert_p.
		\eeq
	\end{mydef}
Here, we have also introduced the notation $\nabla_v w\vert_p$ so that the reader can relate the content of this section to the standard literature. The Leibniz notation, however, serves us to identify the infinitesimal nature of the covariant derivative.

The difference in the numerator of Definition~\ref{def.covd} is taken between vectors belonging to a single tangent space, and it is the propagator that puts them there, as the diagram
	\beq
	\label{diag.covd}
	\begin{tikzpicture}[]
	\matrix[matrix of math nodes,column sep={110pt,between origins},row
	sep={70pt,between origins},nodes={asymmetrical rectangle}] (s)
	{
	|[name=a1]| I\subset\mathbb{R}	& |[name=a2]| T_{\gamma(t)}M 	\\
	|[name=b1]| 				 	& |[name=b2]| T_p M	 			\\
	}
	;
	\draw[->]
			(a1) edge node[left] {\(\ipropagator[\gamma,t,t_0]\circ\, w \circ \gamma\)} (b2)
			(a2) edge node[auto] {\(\ipropagator[\gamma,t,t_0]\)} (b2)
			(a1) edge node[auto] {\(w \circ \gamma\)} (a2)
	;
	\end{tikzpicture}
	\eeq
records. Its upper right node is not one vector space but a different one for each value of $t$, and that is precisely why the vectors $w\vert_{\gamma(t)}$ and $w\vert_{\gamma(t_0)}$ cannot be subtracted as they stand. The diagonal composite lands, for every $t$, in the single space $T_p M$ fixed by the event at which we take the derivative, and there the quotient of Definition~\ref{def.covd} is an ordinary limit of real vectors in a fixed vector space, of the kind we already performed upon functions in the definition \eqref{eq.velocity} of the velocity. The covariant derivative is the derivative of the curve traced along the diagonal, and the propagator is what makes that curve exist.

The covariant derivative satisfies a number of properties which follow from the definition of the parallel propagator. These are usually presented as the defining properties of an affine connection, in terms of which the parallel transport is itself defined; here the order is reversed. We have adopted a posture in which the notion of transport -- not necessarily parallel -- is the more primitive of the two, and an affine connection will emerge as the object satisfying the properties we derive. In formulating Newtonian kinematics, then, we take as given an \emph{a priori} notion of parallel transport, and explore the basic properties of the covariant derivative it induces.
	
	\begin{lemma}
	\label{lem.covfunc}
	Let $v\in \GTM$ be a vector field on $M$ such that $v\vert_p$ is the tangent vector to a curve $\gamma:I\subset\mathbb{R} \longrightarrow M$ at $p=\gamma(t_0)\in M$, $f:M\longrightarrow \mathbb{R}$ a differentiable function. The covariant derivative of $f$ along the flow of $v$ at the point $p\in M$ satisfies
	\beq
	\label{eq.funcD}
	\left.\frac{\D}{\d t} f\,\right\vert_{t=t_0}  = \left.\frac{\d}{\d t} [f \circ \gamma] \right\vert_{t=t_0},
	\eeq
	or, equivalently,
	\beq
	\label{eq.func}
	\nabla_v f\vert_p = v[f]\vert_p.
	\eeq
	\end{lemma}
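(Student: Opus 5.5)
The plan is to apply Definition~\ref{def.covd} verbatim, with the vector field $w$ replaced by the scalar field $f$. The one new ingredient is that the propagator acts on scalars through property (4) of Definition~\ref{def.propagator}. This property says that transport leaves the value of a function unchanged. Because $P_\gamma(t_0,t)$ fixes every real number $f(p)$, and property (3) identifies the inverse map as the one undoing that transport, $P^{-1}_\gamma(t,t_0)$ also fixes the real number $f\left(\gamma(t)\right)$. The numerator in Definition~\ref{def.covd} therefore reduces to $f\left(\gamma(t)\right) - f\left(\gamma(t_0)\right)$. This is a difference of real numbers, so no comparison between tangent spaces is needed.

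The steps are as follows. First, write
\beq
\left.\frac{\D}{\d t} f\right\vert_{t=t_0} = \lim_{t\rightarrow t_0}\frac{P^{-1}_\gamma(t,t_0)\left[f(\gamma(t))\right] - f(\gamma(t_0))}{t-t_0}.
\eeq
Second, invoke property (4), together with (3) to pass to the inverse, and remove the propagator. Third, recognise the remaining quotient as the limit in \eqref{eq.derivativef}. That limit is $\left.\dt[t][f\circ\gamma]\right\vert_{t=t_0}$, which proves \eqref{eq.funcD}. Finally, Definition~\ref{def.velocity} identifies this derivative with $v\vert_p[f]$, because $v\vert_p$ is the tangent to $\gamma$ at $p$. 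This gives \eqref{eq.func}. If desired, one can add the consistency remark that the result agrees with the flow derivative $L_v f = v[f]$ noted after \eqref{eq.lieflow}. In that sense the two derivatives coincide on functions.

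The only step needing care is the second. Property (4) is stated for $P_\gamma(t_0,t)$ acting on the value at $\gamma(t_0)$, whereas the covariant derivative uses the inverse map acting on the value at $\gamma(t)$. I would settle this by applying (4) to the propagator based at $t$, that is, along the same curve started at $\gamma(t)$. Composing with (3) then shows that the inverse must return the same real number. The chain property (2) and parametrisation independence guarantee that this relabelling of the base point is legitimate.
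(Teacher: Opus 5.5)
Your proposal is correct and follows the paper's proof: write the difference quotient of Definition~\ref{def.covd} for $f$, remove the propagator by its action on functions, and recognise the remaining limit as $\left.\frac{\d}{\d t}[f\circ\gamma]\right\vert_{t=t_0} = v[f]\vert_p$ by Definition~\ref{def.velocity}. You also justify that the inverse propagator fixes scalars, which the paper leaves implicit; properties (3) and (4) alone give this, since the propagator acts as the identity on real numbers and therefore so does its inverse, so you do not need parametrisation independence.
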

	\begin{proof}
	From the definition of the action of the parallel propagator on functions we have
		\begin{align}
		\left.\frac{\D}{\d t} f\,\right\vert_{t=t_0}	& = \lim_{t\rightarrow t_0}\frac{\ipropagator[\gamma,t,t_0][f\left(\gamma(t)\right)] - f\left(\gamma(t_0)\right)}{t - t_0}\nonumber\\
													& = \lim_{t\rightarrow t_0}\frac{f\left(\gamma(t)\right) - f\left(\gamma(t_0)\right)}{t- t_0}\nonumber\\
													& = \left.\frac{\d}{\d t} [f \circ \gamma] \right\vert_{t=t_0} = v[f]\vert_p,
		\end{align}
	the last equality by the very definition \eqref{eq.velocity} of the particle velocity.
	\end{proof}
On functions, then, the covariant derivative retains nothing of the propagator and coincides with the flow derivative, $\nabla_v f\vert_p = L_v f\vert_p$.
	
	\begin{lemma}[Leibniz rule]
	\label{lem.leibniz}
	Let $v,w\in \GTM$ be a pair of vector fields over $M$, $f:M \longrightarrow \mathbb{R}$ a differentiable function, $\gamma:I\subset\mathbb{R} \longrightarrow M$ the integral curve of $v$ passing through $p=\gamma(t_0)$, $t_0\in I$. Then
		\beq
		\label{eq.leibnizD}
		\left.\frac{\D}{\d t} (f w) \, \right\vert_{t=t_0} = f(p) \left.\frac{\D}{\d t} w \,\right\vert_{t=t_0} + \left.\frac{\d}{\d t} [f \circ \gamma]\right\vert_{t=t_0} w\vert_p,
		\eeq
	or, equivalently,
		\beq
		\label{eq.leibniz}
		\nabla_v (f w)\vert_{p} = f(p) \nabla_v w \vert_p + v[f] w\vert_p.
		\eeq
	\end{lemma}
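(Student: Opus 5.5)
The plan is to work directly from Definition~\ref{def.covd}, using two properties of the parallel propagator: its linearity \eqref{eq.linprop} and its trivial action on functions (property 4 of Definition~\ref{def.propagator}). First we write the numerator of the difference quotient for the field $fw$ along $\gamma$. At parameter $t$ the transported vector is $P^{-1}_\gamma(t,t_0)\bigl[f(\gamma(t))\, w\vert_{\gamma(t)}\bigr]$. The scalar $f(\gamma(t))$ is a fixed real number at that instant, so linearity \eqref{eq.linprop} lets us pull it out:
\beq
P^{-1}_\gamma(t,t_0)\bigl[f(\gamma(t))\, w\vert_{\gamma(t)}\bigr] = f(\gamma(t))\, P^{-1}_\gamma(t,t_0)\bigl[w\vert_{\gamma(t)}\bigr].
\eeq
This agrees with the statement in property 4 that the propagator leaves the values of functions unchanged.

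Next we add and subtract $f(\gamma(t))\, w\vert_p$ in the numerator. Every term now lies in the single vector space $T_p M$, so this is legitimate, and the quotient splits as
\beq
f(\gamma(t))\,\frac{P^{-1}_\gamma(t,t_0)\bigl[w\vert_{\gamma(t)}\bigr] - w\vert_p}{t-t_0} + \frac{f(\gamma(t)) - f(p)}{t-t_0}\, w\vert_p .
\eeq
We then take the limit $t\to t_0$ in the finite-dimensional space $T_p M \cong \mathbb{R}^n$ given by Corollary~\ref{cor.dim}, where limits of products are products of limits. In the first term, $f(\gamma(t))\to f(p)$ by continuity of $f\circ\gamma$, and the quotient tends to $\frac{\D}{\d t}w\vert_{t_0}$ by Definition~\ref{def.covd}. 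In the second term, the quotient tends to $\frac{\d}{\d t}[f\circ\gamma]\vert_{t_0}$, which equals $v[f]\vert_p$ by Definition~\ref{def.velocity}; equivalently, this is Lemma~\ref{lem.covfunc}. Adding the two limits gives \eqref{eq.leibnizD}, and rewriting in the $\nabla$ notation gives \eqref{eq.leibniz}.

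The only delicate point is the existence of the limit defining $\frac{\D}{\d t}w$. This amounts to assuming that $t\mapsto P^{-1}_\gamma(t,t_0)[w\vert_{\gamma(t)}]$ is differentiable at $t_0$, which Definition~\ref{def.covd} already presupposes. I would state this as the standing smoothness assumption on the propagator and not try to derive it. Granting that assumption, the product rule for limits in $\mathbb{R}^n$ finishes the argument.
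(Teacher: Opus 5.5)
Your proposal is correct and follows the paper's proof essentially line by line: you pull $f(\gamma(t))$ out of the back-transport using the linearity \eqref{eq.linprop}, add and subtract $f(\gamma(t))\,w\vert_p$ in $T_pM$, and split the limit into its two terms. You also state explicitly the continuity of $f\circ\gamma$ that lets the first limit be evaluated at $f(p)$, a step the paper leaves implicit.
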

	\begin{proof}
	Straightforward calculation -- with $q\equiv\gamma(t)$ and using in the second equality the linearity \eqref{eq.linprop} of the parallel propagator,
		\begin{align}
		 \left.\frac{\D}{\d t} \left(f w \right)\right\vert_{t=t_0} 	& = \lim_{t\rightarrow t_0} \frac{\ipropagator[\gamma,t,t_0]\left[f(q) w\vert_{\gamma(t)} \right] - f(p) w\vert_{\gamma(t_0)}}{t-t_0}\nonumber\\
		 									& = \lim_{t\rightarrow t_0} \frac{f(q)\ipropagator[\gamma,t,t_0]\left[w\vert_{\gamma(t)} \right] - f(p) w\vert_{\gamma(t_0)}}{t-t_0}\nonumber\\
											& = \lim_{t\rightarrow t_0} \frac{f(q)\ipropagator[\gamma,t,t_0]\left[w\vert_{\gamma(t)} \right] - f(p) w\vert_{\gamma(t_0)} + f(q) w\vert_{\gamma(t_0)} - f(q) w\vert_{\gamma(t_0)} }{t-t_0}\nonumber\\
											& = \lim_{t\rightarrow t_0} \frac{f(q)\ipropagator[\gamma,t,t_0]\left[w\vert_{\gamma(t)} \right] -f(q) w\vert_{\gamma(t_0)}}{t-t_0} +\lim_{t\rightarrow t_0} \frac{f(q) w\vert_{\gamma(t_0)}-f(p) w\vert_{\gamma(t_0)}}{t-t_0}\nonumber\\
											& = f\left[\gamma(t_0) \right] \left.\frac{\D}{\d t} w\right\vert_{t=t_0} + \left.\frac{\d}{\d t} f\left[\gamma(t)\right]\right\vert_{t=t_0} w\vert_{\gamma(t_0)}.
		\end{align}
	Hence
	\beq
	\nabla_v \left(f w \right)\vert_p =\left. \frac{\D}{\d t} \left(f w \right)\right\vert_{t=t_0} = f(p) \nabla_v w\vert_p + v[f]\vert_{p} w\vert_p.
	\eeq
	\end{proof}

That settles the rescaling of the transported argument. The rescaling of the direction is a separate question, and it is answered by first asking what such a rescaling does to the curve along which the transport is carried out.

	\begin{prop}
	\label{prop.reparam}
	Let $v \in \GTM$ be a vector field on $M$, $f:M\longrightarrow \mathbb{R}$ a differentiable function, $\gamma:I\subset\mathbb{R}\longrightarrow M$ the integral curve of $v$ passing through $p=\gamma(t_0)$. Assume $f(p)\neq 0$. Let $\tilde v = f v$ be a new vector field in a neighbourhood around $p$. Then, there is a reparametrisation
	\beq
	\sigma: I \subset\mathbb{R} \longrightarrow \mathbb{R}, \qquad \sigma(t_0) = s_0, \qquad \left.\frac{\d}{\d t} \sigma\right\vert_{t=t_0}\neq 0,
	\eeq
	 such that the curve $\tilde\gamma = \gamma \circ \sigma^{-1}$ is the integral curve of $\tilde v$ passing through $p=\tilde\gamma(s_0)$. 
	\end{prop}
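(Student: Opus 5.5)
We construct $\sigma$ explicitly by solving a scalar ordinary differential equation along $\gamma$, and then check directly, through the action on a probe function as in Definition~\ref{def.velocity}, that the reparametrised curve has $\tilde v$ as its velocity. Since $f(p)\neq 0$ and $f$ is continuous, there is a subinterval $J\subset I$ around $t_0$ on which $f\circ\gamma$ does not vanish; we shrink $I$ to $J$ and assume this throughout. On $J$ we define
\beq
\sigma(t) \equiv s_0 + \int_{t_0}^{t} \frac{\d t'}{f\left(\gamma(t')\right)} ,
\eeq
so that $\sigma(t_0)=s_0$ and $\sigma'(t) = 1/f(\gamma(t))$. In particular $\sigma'(t_0)=1/f(p)\neq 0$, as the statement demands. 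Because $\sigma'$ has constant sign on $J$, the map $\sigma$ is strictly monotone and differentiable, hence a bijection onto its image $\sigma(J)$. By the inverse function theorem its inverse is differentiable, with $(\sigma^{-1})'(s) = f\left(\gamma(\sigma^{-1}(s))\right)$.

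We then set $\tilde\gamma = \gamma\circ\sigma^{-1}:\sigma(J)\longrightarrow M$. It passes through $p$ at $s_0$, since $\tilde\gamma(s_0)=\gamma(t_0)=p$. Let $h$ be an arbitrary smooth probe and let $s\in\sigma(J)$ with $t=\sigma^{-1}(s)$. The chain rule computation already used in \eqref{eq.tildeu} gives
\beq
\frac{\d}{\d s}\left[h\circ\tilde\gamma\right](s) = \left(\sigma^{-1}\right)'(s)\,\frac{\d}{\d t}\left[h\circ\gamma\right](t) = f\left(\gamma(t)\right)\, v\vert_{\gamma(t)}[h] = (f v)\vert_{\tilde\gamma(s)}[h].
\eeq
The middle equality holds because $\gamma$ is an integral curve of $v$. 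The velocity of $\tilde\gamma$ therefore equals $\tilde v$ at every point of the curve, and $\tilde\gamma$ is an integral curve of $\tilde v$ through $p$.

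Uniqueness of integral curves, which is recalled after Definition~\ref{def.flow}, then identifies $\tilde\gamma$ with \emph{the} integral curve of $\tilde v$ through $p$ on the domain where both are defined.

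The main point needing care is the passage from the pointwise hypothesis $f(p)\neq0$ to a genuine reparametrisation. The integral defining $\sigma$ is well behaved only where $f\circ\gamma$ does not vanish, so the statement is local. We will say explicitly that $I$ is replaced by a neighbourhood $J$ of $t_0$, which is consistent with $\tilde v$ being defined only near $p$. The remaining steps are routine chain rule computations.
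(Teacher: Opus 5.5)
Your proposal is correct and follows the paper's proof essentially step for step. Both define $\sigma(t)=s_0+\int_{t_0}^{t}\mathrm{d}\tau/f[\gamma(\tau)]$ on a neighbourhood where $f\circ\gamma$ does not vanish, invert $\sigma$ because its derivative vanishes nowhere, and use the chain rule to show that $\tilde\gamma=\gamma\circ\sigma^{-1}$ has velocity $f\,v$; your probe-function check and your appeal to uniqueness of integral curves only make explicit what the paper leaves implicit.
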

	\begin{proof}
	Consider the diagram
	\beq
	\label{diag.repar}
	\begin{tikzpicture}[]
	\matrix[matrix of math nodes,column sep={100pt,between origins},row
	sep={70pt,between origins},nodes={asymmetrical rectangle}] (s)
	{
	|[name=a1]| 					& |[name=a2]| M 		\\
	|[name=b1]| I_0\subset\mathbb{R} 	& |[name=b2]| I \subset \mathbb{R}.	 \\
	}
	;
	\draw[->] 	
			(b2) edge node[below] {\(\sigma\)} (b1)
			(b2) edge node[right] {\(\gamma\)} (a2)
			(b1) edge node[auto] {\(\tilde\gamma=\gamma\circ\sigma^{-1}\)} (a2)
	;
	\end{tikzpicture}
	\eeq
and note first that, $f$ being continuous and $f(p)\neq 0$, $f$ vanishes nowhere on a neighbourhood of $p$. On that neighbourhood define
		\beq
		\label{eq.sigmadef}
		\sigma(t) = \int_{t_0}^t \frac{\d \tau}{f\left[\gamma(\tau)\right]} + s_0,
		\eeq
	which satisfies $\sigma(t_0)=s_0$ and
		\beq
		\label{eq.reparam}
		f\left[\gamma(t) \right] = \left[\frac{\d }{\d t}\sigma \right]^{-1}_{t}, \qquad \text{in particular} \qquad f(p) = \left[\frac{\d }{\d t}\sigma \right]^{-1}_{t=t_0}.
		\eeq
	Its derivative therefore vanishes nowhere, so $\sigma$ is invertible on a neighbourhood $I_0$ of $s_0$ and the curve $\tilde\gamma = \gamma\circ\sigma^{-1}$ is defined there, with $p=\tilde\gamma(s_0)$. Its tangent vector is
		\beq
		\left.\frac{\d}{\d s} \tilde\gamma\right\vert_{s} = \left.\frac{\d}{\d t}\gamma \right\vert_{t} \left.\frac{\d }{\d s}\sigma^{-1} \right\vert_{s}=\left.\frac{\d}{\d t}\gamma \right\vert_{t} \left[\frac{\d }{\d t} \sigma \right]^{-1}_{t} = f\left[\gamma(t)\right] v\vert_{\gamma(t)},
		\eeq
	by the chain rule and the identification \eqref{eq.reparam}. Thus $\tilde\gamma$ is the integral curve of $\tilde v = f v$ passing through $p=\tilde\gamma(s_0)$, which is the reparametrisation sought.
	\end{proof}

A rescaling of the velocity is therefore no more than a change of clock along the same path, and it is because the propagator was required to be indifferent to that clock that the covariant derivative inherits the homogeneity we now record.

	\begin{lemma}[Homogeneity of the covariant derivative]
	\label{lem.chain}
	Let $v,w\in \GTM$ be a pair of vector fields, $f: M\longrightarrow \mathbb{R}$ a differentiable function, $\gamma:I\subset\mathbb{R}\longrightarrow M$ the integral curve of $v$ passing through $p=\gamma(t_0)$. Assume $f(p) \neq 0$. Then
		\beq
		\label{eq.chain}
		\nabla_{(fv)}w\vert_p = f(p) \nabla_v w \vert_p.
		\eeq
	\end{lemma}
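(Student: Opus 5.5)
The plan is to reduce the homogeneity to the parametrisation independence \eqref{eq.reparinv} of the parallel propagator, using Proposition~\ref{prop.reparam} to realise the rescaled field $fv$ as a change of clock along the same path. By Proposition~\ref{prop.reparam}, since $f(p)\neq 0$, there is a reparametrisation $\sigma$ with $\sigma(t_0)=s_0$ and $\sigma'(t_0) = 1/f(p)$ such that $\tilde\gamma=\gamma\circ\sigma^{-1}$ is the integral curve of $fv$ through $p=\tilde\gamma(s_0)$. By Definition~\ref{def.covd}, the quantity $\nabla_{(fv)}w\vert_p$ is therefore
\beq
\nabla_{(fv)}w\vert_p = \left.\frac{\d}{\d s} P^{-1}_{\tilde\gamma}\left(s,s_0\right)\left[w\vert_{\tilde\gamma(s)}\right]\right\vert_{s=s_0}.
\eeq

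Next I would invoke \eqref{eq.reparinv}, which gives $P_{\gamma\circ\sigma^{-1}}\left(\sigma(t_0),\sigma(t)\right)=P_\gamma(t_0,t)$. Writing $s=\sigma(t)$, the map $P^{-1}_{\tilde\gamma}(s,s_0)$ equals $P^{-1}_\gamma\left(\sigma^{-1}(s),t_0\right)$, and $w\vert_{\tilde\gamma(s)} = w\vert_{\gamma(\sigma^{-1}(s))}$. The $T_pM$-valued curve $s\mapsto P^{-1}_{\tilde\gamma}(s,s_0)[w\vert_{\tilde\gamma(s)}]$ is thus exactly the composite $W\circ\sigma^{-1}$, where $W(t)\equiv P^{-1}_\gamma(t,t_0)[w\vert_{\gamma(t)}]$ is the diagonal curve of diagram~\eqref{diag.covd}.

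Since both curves take values in the single finite-dimensional space $T_pM\cong\mathbb{R}^n$ (Corollary~\ref{cor.dim}), the ordinary chain rule applies componentwise. It gives
\beq
\left.\frac{\d}{\d s}(W\circ\sigma^{-1})\right\vert_{s_0} = \left.\frac{\d W}{\d t}\right\vert_{t_0}\left[\sigma'(t_0)\right]^{-1} = f(p)\,\nabla_v w\vert_p,
\eeq
using \eqref{eq.reparam}, and this is \eqref{eq.chain}.

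The main obstacle is the bookkeeping of the inverse propagator under reparametrisation. Condition \eqref{eq.reparinv} is stated for $P$ rather than $P^{-1}$ and with a specific argument order, so I must check that it transfers to $P^{-1}_{\tilde\gamma}(s,s_0)$. This follows from the composition and inverse axioms (2)--(3) of Definition~\ref{def.propagator}, which identify $P^{-1}(t,t_0)$ with $P(t,t_0)^{-1}$ consistently.

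A minor point is that $\sigma$ is only a local diffeomorphism near $t_0$. The derivative is local, so this suffices. I would also note the tacit assumption that $W$ is differentiable, which is already presupposed by Definition~\ref{def.covd}.
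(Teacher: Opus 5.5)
Your proposal is correct and follows the paper's own proof essentially step for step. You realise $fv$ through Proposition~\ref{prop.reparam} as a change of clock along $\gamma$, use the parametrisation independence \eqref{eq.reparinv} to identify the back-transported curves along $\tilde\gamma$ and $\gamma$, and close with the chain rule and $[\sigma'(t_0)]^{-1}=f(p)$ from \eqref{eq.reparam}. Your extra bookkeeping for the inverse propagator is sound and more explicit than the paper, which takes that step silently; note only that by axiom (3) $P^{-1}_\gamma(t,t_0)$ is the inverse of $P_\gamma(t_0,t)$, so it suffices to invert both sides of \eqref{eq.reparinv}.
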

	\begin{proof}
	\begin{align}
	\nabla_{(fv)}w\vert_p = \nabla_{\tilde v} w\vert_p 	& = \left.\frac{\d}{\d s} \ipropagator[\tilde\gamma,s,s_0]\left[w\vert_{\tilde\gamma(s)} \right]\right\vert_{s=s_0}\nonumber\\
											& = \left.\frac{\d}{\d t} \ipropagator[\gamma,t,t_0]\left[w\vert_{\gamma(t)} \right]\right\vert_{t=t_0} \left[\frac{\d}{\d t}\sigma\right]^{-1}_{t=t_0}\nonumber\\
											&=f\left(p \right) \nabla_v w\vert_p,
	\end{align}
where the second equality uses the parametrisation independence \eqref{eq.reparinv} of the parallel propagator -- the transports along $\tilde\gamma$ and $\gamma$ coincide -- together with the chain rule, and the last equality follows from the identification \eqref{eq.reparam} of Proposition~\ref{prop.reparam}.
	\end{proof}

Nothing in Definition~\ref{def.propagator} relates the transports along \emph{different} curves through the same event. Each curve carries its own propagator. Two further attributes of the system of propagators must therefore be declared. On the one hand, we assume, throughout, that the system is \emph{local}, that is, the derivative $\nabla_v w\vert_p$ depends on the vector field $v$ only through its value $v\vert_p$, so that transports along curves sharing their velocity at $p$ agree to first order. On the other hand, we also assume it \emph{smooth}, that is, the assignment $v\vert_p \longmapsto \nabla_v w\vert_p$ is smooth for every $w \in \GTM$.

	\begin{lemma}[Additivity of the covariant derivative]
	\label{lem.additive}
	Let $v,u,w,z \in \GTM$ be vector fields. Then
		\beq
		\label{eq.addw}
		\nabla_v \left(w + z\right)\vert_p = \nabla_v w\vert_p + \nabla_v z\vert_p,
		\eeq
	and
		\beq
		\label{eq.addv}
		\nabla_{(v+u)}\, w\vert_p = \nabla_v w\vert_p + \nabla_u w\vert_p.
		\eeq
	\end{lemma}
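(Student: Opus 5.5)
The two identities need different arguments, so I treat them separately. The first, \eqref{eq.addw}, follows from the linearity of the propagator. The second, \eqref{eq.addv}, asks us to add directions, and here the propagator alone is silent. It must come from the locality and smoothness assumptions declared just before the lemma.

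For \eqref{eq.addw}, fix the integral curve $\gamma$ of $v$ through $p=\gamma(t_0)$ and write the covariant derivative of $w+z$ from Definition~\ref{def.covd}. The inverse propagator $\ipropagator[\gamma,t,t_0]$ is a linear isomorphism by \eqref{eq.linprop}, since the inverse of a linear map is linear. Hence
\beq
\ipropagator[\gamma,t,t_0]\left[(w+z)\vert_{\gamma(t)}\right] = \ipropagator[\gamma,t,t_0]\left[w\vert_{\gamma(t)}\right] + \ipropagator[\gamma,t,t_0]\left[z\vert_{\gamma(t)}\right].
\eeq
Both summands are curves in the single vector space $T_pM$. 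I subtract $(w+z)\vert_p = w\vert_p + z\vert_p$ and divide by $t-t_0$. The limit of a sum in a finite-dimensional space (Corollary~\ref{cor.dim}) is the sum of the limits, provided each exists, and each does because it is $\nabla_v w\vert_p$ and $\nabla_v z\vert_p$ respectively. This step is routine.

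For \eqref{eq.addv}, the curve of $v+u$ is neither the curve of $v$ nor that of $u$, so no single propagator relates the three derivatives. My plan is to reduce to a statement about the map $A_p:T_pM\to T_pM$, $X\mapsto\nabla_X w\vert_p$. Locality makes $A_p$ well defined on vectors, since the derivative depends only on $v\vert_p$. Lemma~\ref{lem.chain} gives homogeneity $A_p(\lambda X)=\lambda A_p(X)$ for $\lambda\neq0$, by taking $f$ equal to the constant $\lambda$. For $\lambda=0$, the constant curve yields zero: the propagator at equal parameters is the identity, and $w\vert_{\gamma(t)}=w\vert_p$ along it. The remaining task is additivity of $A_p$.

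This additivity is the main obstacle. Homogeneity plus smoothness does not by itself give linearity; the map $(x,y)\mapsto x^3/(x^2+y^2)$, extended by zero at the origin, is homogeneous of degree one but not additive, and it fails precisely to be smooth at the origin. So I will use smoothness at the origin. First I fix a smooth field $X$ with $X\vert_p$ prescribed. Homogeneity gives $A_p(\lambda X\vert_p)=\lambda A_p(X\vert_p)$ for all real $\lambda$. Since $A_p$ is smooth on $T_pM$ (the declared smoothness assumption), I differentiate at $\lambda=0$ to get
\beq
A_p(X\vert_p)=\left.\frac{\d}{\d\lambda}A_p(\lambda X\vert_p)\right\vert_{\lambda=0}=\D_0(A_p)\left[X\vert_p\right].
\eeq
Thus $A_p$ coincides with its derivative at the origin, which is a linear map. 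This gives additivity in $v\vert_p$ and $u\vert_p$, and locality then transfers the identity back to the fields, giving \eqref{eq.addv}.

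If the smoothness assumption is read only as smoothness away from the origin, this route fails. I would then instead invoke locality to replace the integral curve of $v+u$ by the coordinate curve $t\mapsto\psi^{-1}(\psi(p)+t(v^i+u^i))$, expand $\ipropagator$ to first order in a chart as $\delta^i_j + t\,\Gamma^i_{jk}(\cdot)^k+O(t^2)$, and show that the first-order coefficient is linear in the direction. Proving that linearity, rather than assuming it, is exactly where the difficulty concentrates.
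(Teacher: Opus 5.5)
Your proof is correct and follows the paper's route: linearity of the propagator gives \eqref{eq.addw}, and for \eqref{eq.addv} you use the map $v\vert_p\mapsto\nabla_v w\vert_p$, made well defined by locality, homogeneous by Lemma~\ref{lem.chain}, and equal to its own differential at the origin by the declared smoothness (which the paper assumes on all of $T_pM$, so your fallback paragraph is not needed). The only difference is how you get the value at zero: the paper lets $\alpha\to0$ in the homogeneity identity using continuity, which is the safer step, since property (1) of Definition~\ref{def.propagator} only fixes $P_\gamma(t_0,t_0)$ and says nothing directly about $P_\gamma(t,t_0)$ for $t\neq t_0$ along a constant curve.
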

	\begin{proof}
	The additivity \eqref{eq.addw} in the transported argument is immediate. By the linearity \eqref{eq.linprop} of the parallel propagator, the transport of the sum $(w+z)\vert_{\gamma(t)}$ is the sum of the transports, and the derivative of a sum is the sum of the derivatives.

	For the additivity \eqref{eq.addv} in the direction, fix $w$ and $p$ and consider the map $h: T_p M \longrightarrow T_p M$ given by $h\left(v\vert_p\right) \equiv \nabla_v w\vert_p$, well defined by locality. The homogeneity \eqref{eq.chain} with a constant function $f=\alpha$ gives $h\left(\alpha\, v\vert_p\right) = \alpha\, h\left(v\vert_p\right)$ for every $\alpha \neq 0$, and $h$ is continuous, so letting $\alpha$ tend to zero in that identity places the origin in its kernel, $h(0)=0$. The smoothness assumed above of the system of propagators makes $h$ differentiable there, and therefore
		\beq
		\label{eq.homlin}
		h\left(v\vert_p \right) = \lim_{\epsilon \rightarrow 0^{+}} \frac{h\left(\epsilon\, v\vert_p \right)}{\epsilon} = \left.\frac{\d}{\d \epsilon} h\left(\epsilon\, v\vert_p\right)\right\vert_{\epsilon=0^+},
		\eeq
	the first equality by that homogeneity and the second by the very definition of the one-sided derivative at an origin where $h$ vanishes. The right hand side of \eqref{eq.homlin} is the differential of $h$ at the origin evaluated along $v\vert_p$ -- necessarily a linear function of $v\vert_p$. Additivity follows.
	\end{proof}

Lemmas~\ref{lem.covfunc}, \ref{lem.leibniz} and \ref{lem.additive} exhaust the properties that the covariant derivative inherits from the parallel propagator, that is, it reduces to the ordinary derivative on functions, obeys a Leibniz rule, and is additive and homogeneous of degree one in each of its two arguments. These are, verbatim, the defining properties by which an affine connection is usually introduced. Here they arise as consequences of a more primitive object, the propagator. Therefore, we can now produce a suitable definition.

	\begin{mydef}[Affine connection]
	\label{def.connection}
	 Let $v,w \in \GTM$ be a pair of vector fields on $M$ and $f:M \longrightarrow \mathbb{R}$ a differentiable function. An affine connection is a map
		\beq
		\nabla:\GTM \times \GTM \longrightarrow \GTM,
		\eeq
	sending $(v,w) \longmapsto \nabla_v w$, additive in each of its arguments [cf. Lemma~\ref{lem.additive}], such that the function rule~\eqref{eq.func}, the Leibniz rule~\eqref{eq.leibniz} and the homogeneity~\eqref{eq.chain} are satisfied.
	\end{mydef}

An affine connection provides us with the coordinate expression for the parallel propagator associated with it. Consider a set of coordinates together with its corresponding coordinate basis. At every point on the manifold, the covariant derivative of a basis vector along the integral curve of another is an element of the tangent space at that point, and thus, it can be expressed as the linear combination
		\beq
	\left.\nabla_{\basis[x^i]} \basis[x^j]\right\vert_p = \frac{\d}{\d t}\left[\ipropagator[\psi(x^i),t,t_0]\left.\basis[x^j] \right\vert_{\psi(x^i(t))} \right]_{t=t_0} = \left.\sum_{k=1}^n \Gamma^k_{ij}(p)\basis[x^k]\right\vert_p.
	\eeq
Here, $\psi\left(x^i\right) = \psi\left(0,\dots ,x^i(t),\dots,0\right)$ is the image of a curve along the $\hat e_{(i)}$ direction [cf. diagram~\eqref{diag.curve}, above]; and the $n^3$ functions $\Gamma^k_{ij}:M\longrightarrow \mathbb{R}$ are called the \emph{connection coefficients} and they completely determine the transport at each point in the manifold. Thus, for a pair of vector fields $v,w \in \GTM$ we have	
	\begin{align}
	\nabla_v w\vert_p	& = \left.\nabla_{\left(v^i \basis[x^i]\right)}\left( w^j\basis[x^j]\right)\right\vert_p\nonumber\\
					& = \left. v^i \nabla_{\basis[x^i]}\left(w^j \basis[x^j] \right)\right\vert_p\nonumber\\
					& = \left. v^i \left[w^j \nabla_{\basis[x^i]}\basis[x^j] + \left(\nabla_{\basis[x^i]} w^j\right) \basis[x^j] \right]\right\vert_p\nonumber\\
					& = \left[ v^i w^j \Gamma^k_{ij}\basis[x^k]+\left(\nabla_{v} w^j \right)\basis[x^j] \right]_p\nonumber\\
					& = \left[\left.\dt[t] w^k\left[\gamma(t)\right]\right\vert_{t=t_0} + \Gamma^k_{ij}(p) v^i(p) w^j(p) \right]\left.\basis[x^k]\right\vert_p.
	\end{align}
Here, we have used the Einstein summation convention, the additivity of Lemma~\ref{lem.additive} in both arguments, and the homogeneity~\eqref{eq.chain} and the Leibniz rule~\eqref{eq.leibniz} in the second and third equalities, respectively. Such a coordinate expression enables us to evaluate the parallel transport locally in terms of the connection coefficients. Thus, let us make the following

	\begin{mydef}[Parallel transport]
	\label{def.ptransport}
	Let the pair $(M,\nabla)$ denote a differentiable manifold equipped with an affine connection. Consider a pair of vector fields $v,w \in \GTM$ and $\gamma:I\subset\mathbb{R}\longrightarrow M$ the integral curve of $v$ passing through $p=\gamma(t_0)$. We say that $w$ is transported parallel to $v$ at a point $p\in M$ if 
		\beq
		\label{eq.parallelt}
		\left.\frac{\D}{\d t} w \,\right\vert_{t=t_0}= \left.\frac{\d}{\d t}\ipropagator[\gamma,t,t_0]\left[w\vert_{\gamma(t)} \right]\right\vert_{t=t_0} =\nabla_v w\vert_p =0.
		\eeq 
	If condition~\eqref{eq.parallelt} is satisfied at every point along the curves of $v$ we say $w$ propagates parallel to $v$. 
	\end{mydef}

Since the covariant derivative is defined in terms of the parallel propagator, the uniqueness of the parallel transport we have just defined is a consequence of the bijective property of the map $\propagator[\gamma,t_0,t]$. This can be realised through the following
	\begin{prop}
	\label{prop.uniquetransport}
	Let the pair $(M,\nabla)$ denote a differentiable manifold equipped with an affine connection. Consider a pair of vector fields $v,w \in \GTM$ such that $w$ propagates parallel to $v$. The parallel transport at each point along the integral curve $\gamma:I\subset\mathbb{R}\longrightarrow M$ of $v$ passing through $p \in M$ at $t=t_0 \in I\subset \mathbb{R}$ is unique.
	\end{prop}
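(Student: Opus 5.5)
The plan is to show that the condition $\nabla_v w=0$ along $\gamma$, together with an initial value $w\vert_p$, determines $w\vert_{\gamma(t)}$ for every $t\in I$. The natural candidate is $w\vert_{\gamma(t)}=\propagator[\gamma,t_0,t]\left(w\vert_p\right)$. The argument has two parts: first, this candidate satisfies \eqref{eq.parallelt} at every point; second, any field satisfying \eqref{eq.parallelt} along $\gamma$ with the same initial vector coincides with it.

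For the first part, fix an arbitrary $t_1\in I$ and evaluate the covariant derivative there via Definition~\ref{def.covd}:
\beq
\nabla_v w\vert_{\gamma(t_1)}=\left.\frac{\d}{\d t}\ipropagator[\gamma,t,t_1]\left[\propagator[\gamma,t_0,t]\left(w\vert_p\right)\right]\right\vert_{t=t_1}.
\eeq
I would use the composition and inversion properties (2)--(3) of Definition~\ref{def.propagator} to rewrite the bracket as $\propagator[\gamma,t_0,t_1]\left(w\vert_p\right)$. This vector is independent of $t$, so its derivative vanishes. Hence the candidate propagates parallel to $v$.

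For uniqueness, let $w$ be any field with $\nabla_v w=0$ along $\gamma$ and with the given value at $p$. Define the curve
\beq
c(t)\equiv\ipropagator[\gamma,t,t_0]\left[w\vert_{\gamma(t)}\right],
\eeq
which lies in the single vector space $T_pM$. This is the diagonal curve of diagram~\eqref{diag.covd}. I would compute $c'(t_1)$ at an arbitrary $t_1$ by factoring the inverse transport back to $p$ through $\gamma(t_1)$, using property (2):
\beq
\ipropagator[\gamma,t,t_0]=\ipropagator[\gamma,t_1,t_0]\circ\ipropagator[\gamma,t,t_1].
\eeq
The outer map $\ipropagator[\gamma,t_1,t_0]$ is a fixed linear isomorphism by \eqref{eq.linprop}, so it commutes with differentiation in $t$. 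This gives
\beq
c'(t_1)=\ipropagator[\gamma,t_1,t_0]\left[\nabla_v w\vert_{\gamma(t_1)}\right]=0.
\eeq
A curve in the finite-dimensional space $T_pM\cong\mathbb{R}^n$ (Corollary~\ref{cor.dim}) with identically vanishing derivative on the interval $I$ is constant. Therefore $c(t)=c(t_0)=w\vert_p$, and applying the bijection $\propagator[\gamma,t_0,t]$ returns $w\vert_{\gamma(t)}=\propagator[\gamma,t_0,t](w\vert_p)$. This identifies $w$ with the candidate, and bijectivity is exactly what lets us invert this step.

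The main obstacle is the factorisation step. It needs the composition law (2) to hold for the inverse maps as well as the forward ones, which I would derive from (2) and (3) together. It also needs differentiability of $t\mapsto c(t)$, which I take from the smoothness assumption on the system of propagators. As a cross-check, the same conclusion follows in coordinates: the coordinate expression above turns \eqref{eq.parallelt} into the linear ODE $\dot w^k+\Gamma^k_{ij}v^iw^j=0$, and that ODE has a unique solution for given initial data.
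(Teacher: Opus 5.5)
Your proof is correct, but your main argument is not the paper's. The paper's proof is your closing cross-check: it writes $\nabla_v w=0$ in coordinates as the linear system $\dot w^k=-\Gamma^k_{ij}v^iw^j$ and invokes uniqueness of solutions of an ordinary differential equation with prescribed initial value. You instead stay intrinsic, using only the identity, composition and inverse properties of Definition~\ref{def.propagator} together with its linearity. You show the field $P_\gamma(t_0,t)(w\vert_p)$ is parallel, and that for any parallel field the back-transported curve $c(t)\in T_pM$ has vanishing derivative and is therefore constant. This gives you existence and an explicit formula along with uniqueness. It needs no chart, so no patching when $\gamma$ leaves a coordinate domain, and no ODE theory. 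It also realises the remark the paper makes just before the proposition, that uniqueness follows from the bijectivity of the propagator, which the paper's own proof never uses. You do not need the smoothness assumption for the differentiability of $c$: your factorisation shows that $c'(t_1)$ exists precisely because $\nabla_v w\vert_{\gamma(t_1)}$ does, and the hypothesis grants that. The paper's route, in return, touches $\nabla$ only through its coefficients. It therefore applies to any affine connection in the sense of Definition~\ref{def.connection}, without presupposing that it was generated by a propagator satisfying properties (1)--(3). It is also the form of argument the paper reuses in Proposition~\ref{prop.fwprop}, where the Fermi--Walker propagator is built from the solutions of an ODE.
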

	\begin{proof}
	Consider the coordinate expression for the covariant derivative evaluated at an arbitrary point along the curve $\gamma$
		\beq
		\nabla_v w\vert_q = \left[\dt[t] w^k\left[\gamma(t)\right] + \Gamma^k_{ij}(q) v^i(q) w^j(q) \right]\left.\basis[x^k]\right\vert_q =0.
		\eeq
	This gives us $n$ first order ordinary differential equations for the components of the vector field $w$ evaluated at a point $q=\gamma(t)$, that is,
		\beq
		\label{eq.unique}
		\dt[t] w^k\left(q\right) = - \Gamma^k_{ij}(q) v^i(q) w^j(q).
		\eeq
	Together with the initial condition at $w\vert_{\gamma(t_0)}$, we obtain that the solution to \eqref{eq.unique} is unique.
	\end{proof}

With the transport now known to be unique, we may safely consider the change of a vector field along its own integral curves -- in particular, of the velocity field along itself -- and define its acceleration.

	\begin{mydef}[Acceleration]
	\label{def.acceleration}
	Let us consider a vector field $v\in \GTM$ and $\gamma:I\subset \mathbb{R} \longrightarrow M$ the integral curve of $v$ passing through $p=\gamma(t_0)$ where $t_0\in I$. The acceleration of the curve at the point $p \in M$ is
	\beq
	a\vert_p \equiv \left.\frac{\D}{\d t} v\right\vert_{t=t_0} = \frac{\D}{\d t}\left[\frac{\d}{\d t}\gamma(t) \right]_{t=t_0}=\nabla_v v\vert_p.
	\eeq
	\end{mydef}
	
In this case, the derivative is not identically zero, as with the Lie derivative. Thus, the acceleration measures the failure of the velocity vector field to be propagated parallel along its own integral curves. That is, it is the failure of the integral curves of the vector field $v$ to be straight lines with respect to the transport.
The Lie derivative $L_v v = [v,v]$ vanishes identically [cf. Lemma~\ref{lem.liebracket}] because it compares the velocity field with itself along its own flow, whereas the covariant derivative $\nabla_v v$ compares the transported and the actual velocity at each point, using the independent structure of the propagator -- and only this comparison can fail. Exactly that failure measures the departure of a trajectory from straightness, and it has arisen at the cost of an additional structure for $M$. Moreover, there is an infinity of ways to equip spacetime with a parallel propagator, each defining its own accelerations.  

\subsection{The two derivatives}
\label{subsec.td}

The kinematics of the preceding subsections has furnished two ways of differentiating along a motion, and they come from different places. The Lie derivative was born of the flow [cf. the flow derivative \eqref{eq.lieflow}] and asks how a field is dragged by the motion itself. The covariant derivative was born of the propagator [cf. Definition~\ref{def.covd}] and asks how a field departs from the transport laid down on the manifold. Nothing so far compares them. The comparison is the subject of this subsection.

Two preparations are required. The first is the attribute of a connection that measures the failure of its coefficients to be symmetric.

	\begin{mydef}[Torsion]
	\label{def.torsion}
	The \emph{torsion} of an affine connection $\nabla$ is the map $T:\GTM\times\GTM\longrightarrow\GTM$,
		\beq
		\label{eq.torsion}
		T(u,v) \equiv \nabla_u v - \nabla_v u - [u,v].
		\eeq
	It is antisymmetric, $T(u,v) = -T(v,u)$, and tensorial in each argument; in a coordinate basis $T^k_{ij} = \Gamma^k_{ij} - \Gamma^k_{ji}$. A connection with $T\equiv 0$ is said to be \emph{torsion-free}, and its coefficients are then symmetric in their lower indices.
	\end{mydef}

The second preparation extends both derivatives from vector fields to the multilinear forms upon which they will have to act. The route in each case is the one already travelled, that is, carry the object back to the event at which the derivative is taken and then differentiate, the covariant derivative using the propagator to do the carrying and the Lie derivative using the flow.

	\begin{lemma}[Covariant derivative of a covariant tensor]
	\label{lem.nablatensor}
	Let $A$ be a $k$-linear form on $M$ and $v,u_{(1)},\dots,u_{(k)} \in \GTM$. The covariant derivative of $A$ along $v$ satisfies
		\beq
		\label{eq.nablatensor}
		\left(\nabla_v A\right)\left(u_{(1)},\dots,u_{(k)}\right) = v\left[A\left(u_{(1)},\dots,u_{(k)}\right)\right] - \sum_{i=1}^{k} A\left(u_{(1)},\dots,\nabla_v u_{(i)},\dots,u_{(k)}\right).
		\eeq
	\end{lemma}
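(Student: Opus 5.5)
The plan follows the route announced just before the statement: carry the form back to the event by the propagator, then differentiate. The first task is to say what the propagator does to a $k$-linear form. Since $P_\gamma(t_0,t)$ is a linear isomorphism (equation~\eqref{eq.linprop}), it acts on forms by pullback. Transporting $A\vert_{\gamma(t)}$ back to $p=\gamma(t_0)$ gives the form on $T_pM$
\beq
\left(P^{-1}_\gamma(t,t_0)A\right)\left(X_1,\dots,X_k\right) \equiv A\vert_{\gamma(t)}\left(P_\gamma(t_0,t)X_1,\dots,P_\gamma(t_0,t)X_k\right), \qquad X_i\in T_pM .
\eeq
This is the only choice compatible with property (4) of Definition~\ref{def.propagator}: a scalar built by full contraction of transported objects must be transported unchanged. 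Accordingly I define $(\nabla_v A)\vert_p$ as the $t$-derivative at $t_0$ of this curve of forms in the fixed space of $k$-linear forms on $T_pM$, exactly as in Definition~\ref{def.covd}.

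Next, take $X_i = P^{-1}_\gamma(t,t_0)\bigl[u_{(i)}\vert_{\gamma(t)}\bigr]$. These are the vector fields carried back to $p$, and they form curves in $T_pM$ with value $u_{(i)}\vert_p$ at $t_0$ and derivative $\nabla_v u_{(i)}\vert_p$ there. Property (3) makes $P_\gamma(t_0,t)X_i = u_{(i)}\vert_{\gamma(t)}$, so the defining expression gives the identity
\beq
A\left(u_{(1)},\dots,u_{(k)}\right)\left(\gamma(t)\right) = \left(P^{-1}_\gamma(t,t_0)A\right)\left(X_1(t),\dots,X_k(t)\right).
\eeq
The left side is a real function of $t$, and by Lemma~\ref{lem.covfunc} its derivative at $t_0$ is $v\bigl[A(u_{(1)},\dots,u_{(k)})\bigr]\vert_p$. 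The right side is a multilinear map on finite-dimensional spaces (Corollary~\ref{cor.dim}), evaluated on differentiable curves, so the product rule applies. One term comes from differentiating the form, which gives $(\nabla_vA)(u_{(1)},\dots,u_{(k)})$. The other $k$ terms come from differentiating each argument, which give $A(\dots,\nabla_v u_{(i)},\dots)$. Rearranging yields \eqref{eq.nablatensor}.

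I expect the main obstacle to be justifying the product rule, namely differentiability in $t$ of the transported form. The smoothness of the system of propagators assumed before Lemma~\ref{lem.additive} supplies this once one works in components. In a chart, the transported form has components $A_{j_1\dots j_k}(\gamma(t))$ contracted with the matrix entries of $P_\gamma(t_0,t)$. These entries are differentiable by the uniqueness and smoothness argument of Proposition~\ref{prop.uniquetransport}, since they solve \eqref{eq.unique}. As a check, I will also record the coordinate form
\beq
(\nabla_vA)_{j_1\dots j_k} = v[A_{j_1\dots j_k}] - \sum_i \Gamma^m_{l j_i}v^l A_{j_1\dots m\dots j_k}.
\eeq
This agrees with \eqref{eq.nablatensor} on using the coordinate expression for $\nabla_v u$.
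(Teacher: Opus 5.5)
Your proposal is correct and is essentially the paper's argument. The paper expands each $u_{(i)}$ in the parallel-transported frame $\left\{P_\gamma(t_0,t)\,e_i\right\}$ and then applies the ordinary Leibniz rule to the component functions and to $A$ evaluated on that frame; those component functions are the components of your curves $X_i(t)$, and $A$ on that frame is your pulled-back form, so your multilinear product rule is the same computation written basis-free, with the pullback definition and the needed differentiability made explicit where the paper leaves them implicit.
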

	\begin{proof}
	Let $\{e_i\}$ be a basis of $T_p M$ and $\left\{\propagator[\gamma,t_0,t]\,e_i\right\}$ the frame obtained by transporting it parallel along the integral curve $\gamma$ of $v$, so that each argument reads $u_{(j)} = u^{i}_{(j)}(t)\,\propagator[\gamma,t_0,t]e_i$ in that frame. Differentiate the value $A\left(u_{(1)},\dots,u_{(k)}\right)$ along $v$ at $p$. The ordinary Leibniz rule distributes the derivative over the $k$ component functions and over $A$ itself, the last term being, by the very definition of the covariant derivative -- the rate of change under back-transport by the propagator -- the derivative of $A$ evaluated on the transported frame. By Definition~\ref{def.covd} the derivative of the $i$-th family of component functions reproduces $\nabla_v u_{(i)}$ at $p$, and rearranging yields \eqref{eq.nablatensor}. Nowhere does the argument use the rank, for the frame is parallel whatever the number of slots.
	\end{proof}

The rule~\eqref{eq.nablatensor} states that the propagator distributes over a multilinear pairing, and it is the only property of the extended covariant derivative we shall require.

The Lie derivative extends to the same objects by its own route. Dragging a $k$-linear form $A$ back along $\varphi_{-t}$ and differentiating at $t=0$ defines $L_v A$, exactly as the flow derivative \eqref{eq.lieflow} was defined on vector fields. Since the flow carries the pairing $A\left(u_{(1)},\dots,u_{(k)}\right)$ along with its arguments, the Leibniz rule applied to that scalar leaves
	\beq
	\label{eq.lietensor}
	\left(L_v A\right)\left(u_{(1)},\dots,u_{(k)}\right) = v\left[A\left(u_{(1)},\dots,u_{(k)}\right)\right] - \sum_{i=1}^{k} A\left(u_{(1)},\dots,\left[v,u_{(i)}\right],\dots,u_{(k)}\right),
	\eeq
each argument contributing the bracket of Lemma~\ref{lem.liebracket}. The two rules \eqref{eq.nablatensor} and \eqref{eq.lietensor} differ in exactly one respect, the covariant derivative of an argument standing where the Lie derivative has its commutator, and the comparison of the two derivatives is now a subtraction.

	\begin{prop}[Lie--covariant compatibility]
	\label{prop.liecovgen}
	Let $\alpha$ be a $1$-linear form and $v,u \in \GTM$. Then
		\beq
		\label{eq.liecovgen}
		\left(L_v \alpha\right)(u) = \left(\nabla_v \alpha\right)(u) + \alpha\left(\nabla_u v \right) + \alpha\left(T(v,u) \right).
		\eeq
	More generally, for a $k$-linear form $A$,
		\beq
		\label{eq.liecovk}
		\begin{split}
		\left(L_v A\right)\left(u_{(1)},\dots,u_{(k)}\right) & = \left(\nabla_v A\right)\left(u_{(1)},\dots,u_{(k)}\right)\\
		& \quad + \sum_{i=1}^{k} A\left(u_{(1)},\dots,\nabla_{u_{(i)}} v + T\left(v,u_{(i)}\right),\dots,u_{(k)}\right).
		\end{split}
		\eeq
	\end{prop}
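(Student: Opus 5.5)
The plan is to subtract the two extension rules already recorded, \eqref{eq.lietensor} for the Lie derivative and \eqref{eq.nablatensor} from Lemma~\ref{lem.nablatensor} for the covariant derivative. I would prove the general $k$-linear statement \eqref{eq.liecovk} directly and read off \eqref{eq.liecovgen} as the case $k=1$.

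First, I subtract \eqref{eq.nablatensor} from \eqref{eq.lietensor}, evaluated on the same arguments $u_{(1)},\dots,u_{(k)}$. The term $v\left[A\left(u_{(1)},\dots,u_{(k)}\right)\right]$ is common to both and cancels. This leaves
\beq
\left(L_v A - \nabla_v A\right)\left(u_{(1)},\dots,u_{(k)}\right) = \sum_{i=1}^{k}\Big[A\left(\dots,\nabla_v u_{(i)},\dots\right) - A\left(\dots,[v,u_{(i)}],\dots\right)\Big].
\eeq
Next, I use the linearity of $A$ in its $i$-th slot, which holds pointwise because $A$ is a multilinear form, to combine each bracketed pair into $A\left(\dots,\nabla_v u_{(i)} - [v,u_{(i)}],\dots\right)$.

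The only substantive step is to rewrite the slot entry with Definition~\ref{def.torsion}. Setting $u=v$ and $v=u_{(i)}$ in \eqref{eq.torsion} gives $T(v,u_{(i)}) = \nabla_v u_{(i)} - \nabla_{u_{(i)}} v - [v,u_{(i)}]$. Hence $\nabla_v u_{(i)} - [v,u_{(i)}] = \nabla_{u_{(i)}} v + T(v,u_{(i)})$. Substituting this back and moving $\nabla_v A$ to the right-hand side yields \eqref{eq.liecovk}. For $k=1$, linearity splits the single slot into $\alpha(\nabla_u v) + \alpha(T(v,u))$, which is \eqref{eq.liecovgen}.

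I expect no real obstacle, since the two rules were set up precisely so that the comparison reduces to a subtraction. The one point to watch is the order of arguments in the torsion. The sign convention $T(v,u)$ rather than $T(u,v)$ must match the antisymmetry stated in Definition~\ref{def.torsion}, so I would check it once against the coordinate expression $T^k_{ij} = \Gamma^k_{ij} - \Gamma^k_{ji}$. A second point is that both rules are applied to the same vector fields $u_{(i)}$, so that all objects are evaluated at the same event $p$ and the subtraction takes place in a single space. This is guaranteed because both \eqref{eq.nablatensor} and \eqref{eq.lietensor} are identities between functions on $M$.
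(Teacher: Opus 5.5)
Your proposal is correct and takes essentially the same route as the paper: subtract \eqref{eq.nablatensor} from \eqref{eq.lietensor}, cancel the common term $v[A(u_{(1)},\dots,u_{(k)})]$, and use Definition~\ref{def.torsion} to rewrite $\nabla_v u_{(i)} - [v,u_{(i)}]$ as $\nabla_{u_{(i)}} v + T(v,u_{(i)})$. The only difference is the order of presentation: you treat general $k$ and then specialise to $k=1$, whereas the paper writes out $k=1$ and then repeats the subtraction slot by slot.
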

	\begin{proof}
	The rule~\eqref{eq.lietensor} at $k=1$ gives $\left(L_v\alpha\right)(u) = v\left[\alpha(u)\right] - \alpha\left(\left[v,u\right]\right)$, and the rule~\eqref{eq.nablatensor} at the same $k$ gives $\left(\nabla_v\alpha\right)(u) = v\left[\alpha(u)\right] - \alpha\left(\nabla_v u\right)$. Subtracting, the terms in $v\left[\alpha(u)\right]$ cancel and
		\beq
		\label{eq.liecovproof}
		\left(L_v\alpha\right)(u) - \left(\nabla_v\alpha\right)(u) = \alpha\left(\nabla_v u - \left[v,u\right]\right) = \alpha\left(\nabla_u v + T(v,u)\right),
		\eeq
	the last equality being the definition \eqref{eq.torsion} of the torsion solved for $\nabla_v u$. For a $k$-linear form the same subtraction is performed slot by slot, the rule~\eqref{eq.lietensor} contributing one commutator and the rule~\eqref{eq.nablatensor} one covariant derivative in each argument, and \eqref{eq.liecovk} follows. Nowhere is any argument of $A$ exchanged with another, so no symmetry of the form is used.
	\end{proof}

The identity~\eqref{eq.liecovgen} is the precise sense in which the two derivatives of this section differ. They differ by the covariant derivative of the velocity field in the direction of the argument, and by the torsion. %
The identity holds for a multilinear form of any rank, and each rank furnishes its own comparison of the two rates of change that a motion admits.

\subsection{Inertial motion}
\label{subsec.IM}

\emph{Can uniformity be formulated in terms of the propagator alone?} In Newton's formulation, inertial motion is both \emph{rectilinear} and \emph{uniform}~\cite{newton1999principia}, which is the content of his First Law. The vanishing of the acceleration accounts for the former, that is, the integral curves of $v$ are straight with respect to the transport defined by $\nabla$. Kinematics, as developed thus far, is exclusively the theory of the parallel propagator, and we would keep it so. Such minimality, however, cannot be realised. Uniformity is a statement about the \emph{magnitude} of the velocity and -- as noted in the remarks following Proposition~\ref{prop.vecspace} -- none of the structures introduced so far assigns a magnitude to a vector. The formulation of inertia is thus the precise point at which Newtonian kinematics demands a second, independent attribute of the spacetime manifold. We introduce it here in its minimal form.

	\begin{mydef}[Bilinear structure]
	\label{def.metric}
	Let $M$ be a differentiable manifold. A \emph{bilinear structure} on $M$ is a smooth assignment of a bilinear form
		\beq
		\label{eq.metric}
		g\vert_p : T_p M \times T_p M \longrightarrow \mathbb{R}
		\eeq
	to each event $p\in M$. The scalar $g(v,v)\vert_p$ is called the \emph{squared magnitude} of the vector $v\vert_p$.
	\end{mydef}

Three remarks are in order. First, we demand of the assignment \eqref{eq.metric} neither symmetry nor non-degeneracy. A \emph{metric} structure proper would require, at the very least, symmetry -- with positive definiteness or indefinite signature as further refinements. Second, the kinematics of this section is altogether blind to the antisymmetric part of $g$. Writing $g = g_s + g_a$ for the symmetric and antisymmetric parts, the squared magnitude satisfies $g(v,v) = g_s(v,v)$ and, since differentiation preserves the symmetry type, every kinematical statement of this section constrains $g_s$ alone. Third, no relation between $g$ and $\nabla$ is assumed, that is, the bilinear structure and the parallel propagator are, for the time being, entirely independent attributes of the spacetime manifold.

Recall that the Lie derivative of the velocity field along its own flow is identically zero -- cf. the trivial condition~\eqref{eq.Xdot}. Thus, when the magnitude $g(v,v)$ is dragged along the flow of $v$, the whole of its change is carried by the Lie transport of $g$ itself, that is,
	\beq
	\label{eq.uniform}
	\frac{\d }{\d t} g(v,v) = L_v\left[g(v,v)\right] = \left(L_v g\right)(v,v) + g\left(L_v v, v \right) + g\left(v, L_v v \right) = \left(L_v g\right)(v,v).
	\eeq
This observation renders the following definition natural.

\begin{mydef}[Inertial motion]
\label{def.inertia}
Let $M$ be a differentiable manifold equipped with an affine connection $\nabla$ and a bilinear structure $g$ [cf. Definition~\ref{def.metric}], not necessarily related to one another. An \emph{inertial} motion in $(M,g,\nabla)$ is the flow [cf. Definition~\ref{def.flow}, the field $v$ here assumed complete] $\varphi:\mathbb{R}\times M \longrightarrow M$ infinitesimally generated by a vector field $v\in \GTM$ which is
	\begin{enumerate}
	\item \emph{rectilinear}, that is, its acceleration is identically zero,
		\beq
		\label{eq.rectilinear}
		\nabla_v v = 0,
		\eeq
	\item and \emph{uniform}, that is, the magnitude of the velocity is preserved by the Lie transport along its own flow,
		\beq
		\label{eq.uniformity}
		L_v\left[g(v,v)\right] = \left(L_v g\right)(v,v) = 0.
		\eeq
	\end{enumerate}
\end{mydef}

The uniformity condition~\eqref{eq.uniformity} is weaker than its appearance suggests, and the distinction governs everything that follows. It is a single scalar equation at each event, asserting the vanishing of $L_v g$ upon the one pair of arguments $\left(v\vert_p,v\vert_p\right)$, and it does not assert $L_v g = 0$. That stronger statement would make the flow of $v$ preserve the bilinear structure entire -- the Killing condition on $v$ -- and a manifold whose bilinear structure admits no such field at all may still admit inertial motions in abundance. Two gaps separate the two statements. The antisymmetric part of $L_v g$ annihilates the diagonal whatever $v$ may be, so the uniformity says nothing whatever about it, as the second of the three remarks following Definition~\ref{def.metric} has already anticipated. And we constrain the symmetric part only where we evaluate it, at a single vector per event, namely the velocity itself.

A flat example settles the matter. On $\mathbb{R}^2$ carrying the Euclidean structure and its flat connection, let $v = f(y)\,\basis[x]$ with $f$ positive and non-constant. Each integral curve is a straight line traversed at the constant speed $f(y)$, so that $\nabla_v v = 0$ and $g(v,v) = f(y)^2$ is constant along the flow, so that $\left(L_v g\right)(v,v) = 0$ and the motion is inertial. Yet $L_v g$ does not vanish, its only non-zero value on the coordinate basis being $\left(L_v g\right)\left(\basis[x],\basis[y]\right) = f'(y)$. Nor would a whole basis of vectors suffice in place of the velocity, for the diagonal of $L_v g$ vanishes on each of $\basis[x]$ and $\basis[y]$ separately while $\left(L_v g\right)(z,z) = 2 f'(y)$ on the diagonal direction $z = \basis[x] + \basis[y]$. The congruence is a shear, a family of parallel straight lines whose speeds differ from one line to the next, and no single particle of it can detect that difference. An inertial motion need not be an isometry of the structure which measures it.

Note that there are as many different types of inertial motion for a given manifold as pairs $(g,\nabla)$ it admits. In particular, the usual Euclidean picture corresponds to a flat connection compatible with $g$, in which case the integral curves of the self-parallel vector fields are straight lines traversed at constant speed.

Before asking when inertial motions exist, let us establish the symmetry principle underlying Newtonian kinematics
	\begin{theo}[Principle of Inertia]
	\label{theo.inertia}
	Let $M$ be a differentiable manifold equipped with an affine connection $\nabla$ and a bilinear structure $g$, and let $v\in \GTM$ be a vector field generating an inertial motion in $(M,g,\nabla)$ whose velocity is nowhere zero. Inertial motion is invariant under affine reparametrisations of the integral curves of $v$, and under these alone.
	\end{theo}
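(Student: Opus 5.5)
The plan is to work with the reparametrised field directly. By Proposition~\ref{prop.reparam}, a reparametrisation $\tilde\gamma=\gamma\circ\sigma^{-1}$ of the integral curves of $v$ is generated by the rescaled field $\tilde v = f v$, with $f\left[\gamma(t)\right] = \left(\d\sigma/\d t\right)^{-1}$ nowhere zero. An affine reparametrisation $\sigma(t)=\alpha t+\beta$, $\alpha\neq 0$, is the case $f$ constant along each integral curve, that is, $v[f]=0$. I will therefore compute the two defining conditions of Definition~\ref{def.inertia} for $\tilde v$ in terms of those for $v$, using only the Leibniz rule~\eqref{eq.leibniz}, the homogeneity~\eqref{eq.chain} and the additivity of Lemma~\ref{lem.additive}.

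For rectilinearity, I expand
\[
\nabla_{\tilde v}\tilde v = f\,\nabla_v(fv) = f^2\,\nabla_v v + f\,v[f]\,v .
\]
When $\nabla_v v=0$ this reduces to $f\,v[f]\,v$. Since $f\neq 0$ and $v$ is nowhere zero, it vanishes if and only if $v[f]=0$. Hence rectilinearity is preserved precisely by the reparametrisations that are affine along each integral curve.

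For uniformity, I avoid the Lie derivative of $g$ along $fv$ and use the scalar form~\eqref{eq.uniform}, namely $L_{\tilde v}\left[g(\tilde v,\tilde v)\right]=\tilde v\left[f^2 g(v,v)\right]$. Expanding gives
\[
f\left(2f\,v[f]\,g(v,v) + f^2\,v\left[g(v,v)\right]\right).
\]
With $v\left[g(v,v)\right]=0$ and $v[f]=0$ this vanishes, so affine reparametrisations preserve both conditions. Conversely, rectilinearity alone already forces $v[f]=0$, so no non-affine reparametrisation survives. The uniformity condition then imposes nothing further.

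I expect the main obstacle to be bookkeeping rather than the computation itself. First, \emph{affine} must be read curve by curve: $f$ may vary across the congruence, as in the shear example, while remaining constant along each line. Second, completeness of $\tilde v$ must hold. For $f$ constant on each curve, the flow of $\tilde v$ is $\varphi_{ft}$, which is complete whenever $\varphi$ is. Third, the converse uses the hypothesis that $v$ is nowhere zero, which is exactly where it enters. I would also remark that the sign of $\alpha$ (orientation reversal) is allowed, since only $\sigma'\neq 0$ is required.
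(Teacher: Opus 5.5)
Your proof is correct and follows the paper's own argument. The paper obtains the same transformation law in coordinates, $\tilde a = -\frac{\sigma''}{(\sigma')^3}\,v$ in \eqref{eq.acctransf}, which is exactly your intrinsic $f\,v[f]\,v$ with $f=(\sigma')^{-1}$; its uniformity computation \eqref{eq.uniftransf} coincides with yours, and it likewise lets rectilinearity alone exclude the non-affine reparametrisations. The only differences are minor: you reach the transformation law through the Leibniz rule and homogeneity rather than the coordinate chain rule, and your check that $\tilde v$ remains complete is a detail the paper leaves implicit.
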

	\begin{proof}
	Let $\gamma$ be an integral curve of the vector field $v\in \GTM$. Then, the coordinates of $\gamma$ satisfy the system of second order differential equations defined by the vanishing of the components of its acceleration [cf. the rectilinear condition~\eqref{eq.rectilinear}],
		\beq
		\label{eq.geodesic}
		a\vert_p = \frac{\D}{\d t}\left[\frac{\d}{\d t}\gamma(t) \right]_{t=t_0} = \left[\frac{\d^2}{\d t^2} x^i (t) + \Gamma^i_{jk}\left[\gamma(t)\right] \dt [t] x^j(t) \dt[t] x^k(t)\right]_{t=t_0} \left.\basis[x^i]\right\vert_p = 0.
		\eeq
	Consider a reparametrisation of the form $\sigma:I\subset \mathbb{R}\longrightarrow \mathbb{R}$ [cf. diagram~\eqref{diag.repar}] with $s=\sigma(t)$ and $\sigma'(t)\neq 0$ for every $t \in I$, where the prime denotes the derivative of $\sigma$ with respect to its argument. The coordinates $y^i = y^i(s)$ of the reparametrised curve $\tilde\gamma = \gamma \circ \sigma^{-1}$ satisfy $y^i\left[\sigma(t)\right] = x^i(t)$. Differentiating this relation twice with respect to $t$ and solving for the derivatives of $y^i$ we obtain
		\beq
		\label{eq.chainrepar}
		\frac{\d}{\d s} y^i(s) = \frac{1}{\sigma'}\, \frac{\d}{\d t} x^i(t) \quad \text{and} \quad \frac{\d^2}{\d s^2} y^i(s) = \frac{1}{\left(\sigma'\right)^2}\, \frac{\d^2}{\d t^2} x^i(t) - \frac{\sigma''}{\left(\sigma'\right)^3}\, \frac{\d}{\d t} x^i(t).
		\eeq
	Substituting the derivatives \eqref{eq.chainrepar} into the acceleration of $\tilde\gamma$ yields the transformation law
		\begin{align}
		\label{eq.acctransf}
		\tilde a^i 	& = \frac{\d^2}{\d s^2} y^i + \Gamma^i_{jk} \frac{\d}{\d s} y^j\, \frac{\d}{\d s} y^k\nn\\
					& = \frac{1}{\left(\sigma'\right)^2}\left[\frac{\d^2}{\d t^2} x^i + \Gamma^i_{jk} \frac{\d}{\d t} x^j\, \frac{\d}{\d t} x^k\right] - \frac{\sigma''}{\left(\sigma'\right)^3}\, \frac{\d}{\d t} x^i = - \frac{\sigma''}{\left(\sigma'\right)^3}\, \frac{\d}{\d t} x^i,
		\end{align}
	where the last equality follows from the geodesic equation~\eqref{eq.geodesic}. Since the velocity of the motion is nowhere zero, the acceleration of the reparametrised curve vanishes if and only if $\sigma'' = 0$, that is, if and only if
		\beq
		\label{eq.affinerepar}
		\sigma(t) = \alpha t + \beta \quad \text{with} \quad \alpha,\beta \in \mathbb{R}, \quad \alpha \neq 0.
		\eeq
	Thus, the rectilinear condition~\eqref{eq.rectilinear} is preserved precisely by the affine reparametrisations of $\gamma$.

	Regarding uniformity, recall from the identification \eqref{eq.reparam} of Proposition~\ref{prop.reparam} that the tangent field of the reparametrised curve is $\tilde v = f v$ with $f = \left(\sigma'\right)^{-1}$ along $\gamma$, so that its squared magnitude is $g(\tilde v, \tilde v) = f^2 g(v,v)$. Being a scalar, its change along the flow of $\tilde v$ -- using the Leibniz rule in the second equality and the uniformity \eqref{eq.uniformity} of the original motion, $v[g(v,v)]=0$, in the third -- reads
		\beq
		\label{eq.uniftransf}
		L_{\tilde v}\left[g(\tilde v,\tilde v)\right] = f\, v\left[f^2 g(v,v)\right] = f^3\, v\left[g(v,v)\right] + 2 f^2\, v[f]\, g(v,v) = 2 f^2\, v[f]\, g(v,v).
		\eeq
	Whenever $g(v,v) \neq 0$, the right hand side vanishes if and only if $v[f] = 0$, that is, if and only if $\sigma'$ is constant along $\gamma$ -- once more, the affine reparametrisations \eqref{eq.affinerepar}. Along a null velocity the right hand side vanishes whatever the reparametrisation, so uniformity constrains the clock there not at all. The rectilinear condition has already confined the freedom to \eqref{eq.affinerepar} in either case, and inertial motion is therefore preserved by the affine reparametrisations of the integral curves of $v$ and by no others.
	\end{proof}

The Principle of Inertia thus identifies the affine reparametrisations as the symmetries of inertial motion, that is, the freedom left in the description of an inertial particle is precisely the choice of origin and unit for its time parameter. Note that in the indefinite case there may be motions with $g(v,v) = 0$ everywhere, for which the uniformity condition~\eqref{eq.uniformity} is trivially preserved by every reparametrisation; for these, the rectilinear condition alone singles out the affine class.

The symmetry carries a warning. Rectilinearity -- and with it inertia -- is a property not of the \emph{path} traced by the particle, but of the parametrised motion. A non-affine reparametrisation of an inertial motion traverses the very same events while failing to be rectilinear, and is therefore no longer inertial. Consequently, any demand concerning the \emph{existence} of inertial motions must be formulated modulo the affine symmetry, that is, the datum is an event together with a direction, and the motion is to be sought in some -- necessarily affine -- parametrisation.

The reader may wonder whether the two conditions in Definition~\ref{def.inertia} are independent. To settle the question we must first understand how the covariant derivative interacts with the bilinear structure. Following the same route as with vector fields -- back-transport by the parallel propagator, then differentiate -- the derivative of $g$ along $v$ inherits a Leibniz rule with respect to the natural pairing.

	\begin{lemma}[Covariant derivative of the bilinear structure]
	\label{lem.nablag}
	Let $g$ be a bilinear structure on $M$ and $u,v,w\in \GTM$. The covariant derivative of $g$ along $v$ satisfies
		\beq
		\label{eq.nablag}
		\left(\nabla_v g\right)(u,w) = v\left[g(u,w) \right] - g\left(\nabla_v u, w \right) - g\left(u, \nabla_v w \right).
		\eeq
	\end{lemma}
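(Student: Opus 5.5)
The identity is the rank-two case of Lemma~\ref{lem.nablatensor}, and I would prove it by going back to the construction of that lemma rather than merely citing it, because the form $g$ has no symmetry and the proof must not exchange its two slots. The plan is to carry the bilinear structure back to $p$ with the parallel propagator, exactly as Definition~\ref{def.covd} does for vector fields. Concretely, define the transported form at $p$ by
\beq
\left(P^{*}_{t} g\right)(u_p,w_p) \equiv g\vert_{\gamma(t)}\left(\propagator[\gamma,t_0,t]\,u_p,\ \propagator[\gamma,t_0,t]\,w_p\right),
\eeq
and set $(\nabla_v g)\vert_p(u_p,w_p)$ to be its $t$-derivative at $t_0$. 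This is a bilinear form on the single space $T_pM$ for every $t$, so the derivative is an ordinary limit. Linearity \eqref{eq.linprop} of the propagator guarantees that the result is bilinear.

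The key steps are as follows. First, fix a basis $\{e_i\}$ of $T_pM$ and its parallel frame $E_i(t)=\propagator[\gamma,t_0,t]e_i$. Second, expand $u\vert_{\gamma(t)}=u^i(t)E_i(t)$ and $w\vert_{\gamma(t)}=w^j(t)E_j(t)$. Since $P^{-1}_\gamma(t,t_0)u\vert_{\gamma(t)}=u^i(t)e_i$, Definition~\ref{def.covd} gives $\nabla_v u\vert_p=\dot u^i(t_0)e_i$, and likewise for $w$. Third, write the scalar
\beq
g(u,w)\circ\gamma(t)=u^i(t)\,w^j(t)\,g\vert_{\gamma(t)}\left(E_i(t),E_j(t)\right),
\eeq
and differentiate it at $t_0$. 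By Lemma~\ref{lem.covfunc}, the left side yields $v[g(u,w)]\vert_p$. The product rule on the right produces three pieces:
\begin{itemize}
\item $\dot u^i w^j g_{ij}=g(\nabla_v u,w)$;
\item $u^i\dot w^j g_{ij}=g(u,\nabla_v w)$;
\item $u^iw^j\,\tfrac{\d}{\d t}g(E_i,E_j)\vert_{t_0}=(\nabla_v g)(u,w)$, by the definition above.
\end{itemize}
Rearranging gives \eqref{eq.nablag}. Throughout, the first slot always receives $u$ and the second $w$, so no symmetry of $g$ is used.

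The main obstacle is the third piece: justifying that $\tfrac{\d}{\d t}g(E_i,E_j)$ is a legitimate definition of $\nabla_v g$, independent of the chosen basis and behaving as a connection should. Basis independence follows from bilinearity, since a change of basis at $p$ is constant in $t$ and is carried along by the linear propagator. For the connection properties, I would check the following. The function rule is inherited from Lemma~\ref{lem.covfunc}. Homogeneity in $v$ follows from the parametrisation independence \eqref{eq.reparinv}, as in Lemma~\ref{lem.chain}. Locality and additivity in $v$ follow from the standing locality and smoothness assumptions, as in Lemma~\ref{lem.additive}. With these checks in hand, the identity is consistent with the earlier definitions, and it matches Lemma~\ref{lem.nablatensor} at $k=2$.
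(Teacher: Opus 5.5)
Your proof is correct and is essentially the paper's: the paper simply cites Lemma~\ref{lem.nablatensor} at $k=2$, and that lemma's proof is the same parallel-frame expansion and product-rule argument you carry out, with your explicit pull-back definition of $\nabla_v g$ making precise what the paper calls ``the derivative of $A$ evaluated on the transported frame.'' The closing checks of the connection properties are not needed for the identity itself, which follows directly from your definition.
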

	\begin{proof}
	The bilinear structure is a $2$-linear form, so this is the case $k=2$ of the rule~\eqref{eq.nablatensor}, rearranged.
	\end{proof}

The vanishing of the derivative \eqref{eq.nablag} has a transparent meaning in terms of our primitive notion of transport, that is, the parallel propagator is an \emph{isometry} between tangent spaces, preserving the pairing infinitesimally,
	\beq
	\label{eq.isometry}
	v\left[g(u,w)\right] = g\left(\nabla_v u, w \right) + g\left(u, \nabla_v w\right).
	\eeq
In this case, we say the connection $\nabla$ is \emph{compatible} with the bilinear structure $g$.

Now, evaluating the definition \eqref{eq.nablag} on the triple $(v,v,v)$ and rearranging, the change of the magnitude along the flow reads
	\beq
	\label{eq.magrate}
	v\left[g(v,v)\right] = \left(\nabla_v g\right)(v,v) + g\left(\nabla_v v, v\right) + g\left(v, \nabla_v v\right).
	\eeq
Hence, along a rectilinear motion -- one satisfying the self-parallel condition~\eqref{eq.rectilinear} -- the rate of change of the magnitude reduces to $\left(\nabla_v g\right)(v,v)$, which need not vanish for a generic pair $(g,\nabla)$. For a compatible connection -- cf. the isometry property~\eqref{eq.isometry} -- uniformity is thus a consequence of straightness.

The converse fails, and the rate \eqref{eq.magrate} shows exactly how. For a compatible connection its first term drops and uniformity becomes
	\beq
	\label{eq.uniformorth}
	g\left(\nabla_v v, v\right) + g\left(v, \nabla_v v\right) = 2\, g_s\left(\nabla_v v, v\right) = 0,
	\eeq
which demands of the acceleration not that it vanish but only that it be orthogonal to the velocity in the symmetric part of the pairing. Rectilinearity is therefore sufficient for uniformity without being necessary, and the two conditions of Definition~\ref{def.inertia} are independent of one another.

A charged particle furnishes the physical instance. Let $g$ be symmetric, let $F$ be an antisymmetric bilinear structure on $M$, and let a motion have its acceleration fixed by
	\beq
	\label{eq.lorentz}
	g\left(\nabla_v v, u\right) = \kappa\, F(v,u),
	\eeq
for every $u\in \GTM$, with $\kappa$ a constant. Setting $u=v$ and invoking the antisymmetry of $F$ gives $g\left(\nabla_v v, v\right) = \kappa\, F(v,v) = 0$, so that the condition~\eqref{eq.uniformorth} holds while the acceleration is nowhere zero. This is the Lorentz force, with $F$ the electromagnetic field and $\kappa$ the ratio of the charge of the particle to its mass, and the motions it produces are the magnetic curves~\cite{lopezmonsalvo2025noether}. The field bends such a motion at every event, and it is nevertheless traversed at an unchanging magnitude of velocity. It satisfies the second condition of Definition~\ref{def.inertia} and fails the first. The antisymmetry of $F$ is the whole of the reason, and it is worth noticing that we have used no property of the electromagnetic field beyond that antisymmetry.

One purely algebraic fact intervenes twice, and we record it before it is needed. A multilinear form is not determined by its values on the diagonal, where all its arguments coincide; what those values do determine is its symmetric part, and they determine it completely.

	\begin{lemma}[Polarisation]
	\label{lem.polarisation}
	Let $V$ be a real vector space, $B:V^k\longrightarrow \mathbb{R}$ a $k$-linear form and $B_\Delta(z) \equiv B(z,\dots,z)$ its restriction to the diagonal. If $B_\Delta$ vanishes identically, so does the total symmetrisation of $B$, that is, the sum of $B$ over the $k!$ permutations of its arguments. In particular, a symmetric $k$-linear form vanishing on the diagonal is itself zero.
	\end{lemma}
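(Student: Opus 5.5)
The plan is to prove the polarisation formula in its multinomial form: expand $B_\Delta$ on a linear combination of $k$ vectors and read off the coefficient of the monomial in which each scalar appears exactly once. Fix $z_{(1)},\dots,z_{(k)}\in V$ and real numbers $\lambda_1,\dots,\lambda_k$, and set $z=\sum_{j}\lambda_j z_{(j)}$. By multilinearity of $B$ in each of its $k$ slots,
\beq
B_\Delta(z) = \sum_{j_1,\dots,j_k=1}^{k} \lambda_{j_1}\cdots\lambda_{j_k}\, B\left(z_{(j_1)},\dots,z_{(j_k)}\right),
\eeq
which is a homogeneous polynomial of degree $k$ in $(\lambda_1,\dots,\lambda_k)$.

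The key step is to isolate the coefficient of the monomial $\lambda_1\lambda_2\cdots\lambda_k$. A term of the expansion contributes to that monomial exactly when the tuple $(j_1,\dots,j_k)$ is a permutation of $(1,\dots,k)$. The coefficient is therefore
\beq
\sum_{\pi\in S_k} B\left(z_{(\pi(1))},\dots,z_{(\pi(k))}\right),
\eeq
which is the total symmetrisation of $B$ evaluated on $\left(z_{(1)},\dots,z_{(k)}\right)$.

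Now suppose $B_\Delta\equiv 0$. Then this polynomial vanishes for every real value of the $\lambda_j$. A real polynomial that vanishes identically on $\mathbb{R}^k$ has all its coefficients equal to zero; this follows by induction on the number of variables, or by differentiating with $\partial^k/\partial\lambda_1\cdots\partial\lambda_k$ at the origin. In particular the coefficient above vanishes, and since the $z_{(j)}$ were arbitrary, the symmetrisation of $B$ is the zero form.

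For the final clause, take $B$ symmetric. Each of the $k!$ terms of its symmetrisation then equals $B\left(z_{(1)},\dots,z_{(k)}\right)$, so the symmetrisation is $k!\,B$. Dividing by $k!\neq 0$, which is legitimate because the field is $\mathbb{R}$, gives $B=0$.

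The only step needing care is the coefficient extraction, where one must check that each permutation is counted exactly once and that repeated indices contribute only to other monomials. If one prefers to avoid the polynomial argument, the same conclusion follows from the explicit inclusion–exclusion identity
\beq
\sum_{\pi\in S_k} B\left(z_{(\pi(1))},\dots,z_{(\pi(k))}\right) = \sum_{\emptyset\neq S\subseteq\{1,\dots,k\}} (-1)^{k-|S|}\, B_\Delta\left(\textstyle\sum_{j\in S} z_{(j)}\right).
\eeq
This identity is verified by expanding each term on the right and checking that every tuple $(j_1,\dots,j_k)$ which misses some index is cancelled.
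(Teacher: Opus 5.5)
Your proof is correct, but its main line differs from the paper's. The paper proves the lemma through the Mazur--Orlicz identity \eqref{eq.polarisation} alone. It expands each $B_\Delta\bigl(\sum_{i\in S}v_i\bigr)$, argues that every term missing some argument cancels under the alternating signs over the supersets of the indices it uses, and is left with the $k!$ permutation terms. That is exactly the identity you relegate to your closing remark.

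Your primary route instead reads the total symmetrisation off as the coefficient of $\lambda_1\cdots\lambda_k$ in the homogeneous polynomial $\lambda\mapsto B_\Delta\bigl(\sum_j\lambda_j z_{(j)}\bigr)$. This removes the sign bookkeeping entirely: terms with repeated indices land on other monomials, so nothing has to cancel. The price is an appeal to the fact that a polynomial vanishing on all of $\mathbb{R}^k$ has zero coefficients (or to differentiation), and so to the hypothesis at every real combination of the $z_{(j)}$. The paper's identity uses $B_\Delta$ only at the $2^k-1$ subset sums. It is a finite identity with integer coefficients, needing no analysis and no infinitude of the field.

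The two are really one argument in continuous and discrete form. The alternating sum over subsets is the mixed finite difference $\Delta_1\cdots\Delta_k$ of your polynomial at the origin with unit steps. For a homogeneous polynomial of degree $k$ in $k$ variables, that difference equals the mixed partial $\partial^k/\partial\lambda_1\cdots\partial\lambda_k$ you invoke, and both return the coefficient of $\lambda_1\cdots\lambda_k$; this is the iterated-difference proof of Thomas that the paper cites. Over $\mathbb{R}$, either version suffices for both uses of the lemma, in Proposition~\ref{prop.killing} and Corollary~\ref{cor.antisym}.
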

	\begin{proof}
	Expand $B_\Delta\!\left(\sum_{i\in S} v_i\right)$ by multilinearity for each non-empty $S \subseteq \{1,\dots,k\}$ and form the alternating sum
		\beq
		\label{eq.polarisation}
		\sum_{\emptyset \neq S \subseteq \{1,\dots,k\}} (-1)^{k-\vert S \vert}\, B_\Delta\!\left(\sum_{i\in S} v_i\right) = \sum_{\rho} B\left(v_{\rho(1)},\dots,v_{\rho(k)}\right),
		\eeq
	the sum on the right running over the $k!$ permutations $\rho$ of $\{1,\dots,k\}$. The identity is due to Mazur and Orlicz~\cite{mazur1934grundlegende}, and Thomas gives a short proof of it by iterated difference operators~\cite{thomas2014polarization}; both state it for a symmetric form, where the right hand side reads $k!\,B$, and the general case follows because the restriction to the diagonal sees only the total symmetrisation. Indeed, a term of the expansion in which some $v_i$ is absent occurs in the subsets $S$ containing the arguments it does use and in all their supersets missing $i$, with alternating signs, and cancels; the terms surviving are precisely those using each argument exactly once, which is the right hand side. Every summand on the left vanishes by hypothesis, hence the total symmetrisation of $B$ vanishes. If $B$ is symmetric, each of the $k!$ summands on the right equals $B(v_1,\dots,v_k)$ and the conclusion is $B = 0$.
	\end{proof}

The identity~\eqref{eq.polarisation} is a statement about multilinear algebra alone, owing nothing to the manifold or to the transport. We use it for a trilinear form assembled from the derivative of $g$, and once more for a bilinear one.

Remarkably, compatibility is sufficient but not necessary, and the precise demand that inertia places upon the pair $(g,\nabla)$ is the content of the following

	\begin{prop}[Inertia and compatibility]
	\label{prop.killing}
	Let $M$ be a differentiable manifold equipped with an affine connection $\nabla$ and a bilinear structure $g$. The following statements are equivalent.
	\begin{enumerate}
	\item The totally symmetric part of the derivative \eqref{eq.nablag} vanishes, that is,
		\begin{align}
		\label{eq.killing}
		&\left(\nabla_u g\right)(v,w) + \left(\nabla_v g\right)(w,u) + \left(\nabla_w g\right)(u,v)\nn\\
		&\qquad + \left(\nabla_u g\right)(w,v) + \left(\nabla_v g\right)(u,w) + \left(\nabla_w g\right)(v,u) = 0,
		\end{align}
	for every $u,v,w \in \GTM$.
	\item Every rectilinear motion in $(M,g,\nabla)$ is uniform.
	\item Through every event $p\in M$ and every direction at $p$ there passes an inertial motion, unique up to the affine freedom of Theorem~\ref{theo.inertia}.
	\end{enumerate}
	\end{prop}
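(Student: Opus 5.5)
The proof reduces all three statements to one pointwise condition: the vanishing of the diagonal of the trilinear form
\[
B_p(u,v,w) \equiv \left(\nabla_u g\right)(v,w)\vert_p .
\]
Since $g$ need not be symmetric, $B$ has no symmetry in its last two slots. The left hand side of \eqref{eq.killing} is exactly the sum of $B$ over the six permutations of its arguments, and its diagonal is $B_\Delta(v)=\left(\nabla_v g\right)(v,v)$. The two directions of Lemma~\ref{lem.polarisation} then give the bridge. If the total symmetrisation vanishes, evaluating it at $u=v=w=z$ gives $6\,B_\Delta(z)=0$. Conversely, if $B_\Delta\equiv 0$ on every $T_pM$, the lemma yields \eqref{eq.killing}. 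So statement (1) is equivalent to
\beq
\left(\nabla_z g\right)(z,z)\vert_p = 0 \quad \text{for every } p\in M,\ z\in T_pM. \nn
\eeq
I will call this condition (1$'$) and cycle $(1)\Rightarrow(2)\Rightarrow(3)\Rightarrow(1)$.

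\textbf{Step $(1)\Rightarrow(2)$.} Take a rectilinear $v$, so $\nabla_v v=0$. The rate \eqref{eq.magrate} collapses to $v\left[g(v,v)\right]=\left(\nabla_v g\right)(v,v)$, which vanishes by (1$'$). Hence \eqref{eq.uniformity} holds and the motion is uniform.

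\textbf{Step $(2)\Rightarrow(3)$.} This is the existence step, and I expect it to be the main obstacle.
\begin{itemize}
\item A single curve is easy. By the coordinate form \eqref{eq.geodesic}, the system $\ddot x^i+\Gamma^i_{jk}\dot x^j\dot x^k=0$ with data $\left(p,z\right)$ has a unique local solution by standard ODE theory, as in Proposition~\ref{prop.uniquetransport}.
\item Definition~\ref{def.inertia} asks for more: a vector field $v$ whose integral curve through $p$ is that geodesic. I would construct it as follows. Choose a hypersurface $\Sigma$ through $p$ transverse to $z$. Extend $z$ to a smooth field along $\Sigma$ transverse to it. Launch from each point of $\Sigma$ the geodesic with that initial velocity. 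The map $(s,q)\mapsto\gamma_q(s)$ has invertible differential at $(0,p)$, so by the inverse function theorem it is a local diffeomorphism. Its pushed-forward $\partial_s$ is then a field $v$ with $\nabla_v v=0$ near $p$.
\item By (2), this rectilinear $v$ is uniform, hence inertial.
\item Uniqueness up to the affine freedom follows from Theorem~\ref{theo.inertia} together with ODE uniqueness of the geodesic with given initial data.
\end{itemize}
The delicate point is the completeness hypothesis built into Definition~\ref{def.inertia}. I would state honestly that the construction is local, and that completeness is assumed in the sense announced after Definition~\ref{def.flow}.

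\textbf{Step $(3)\Rightarrow(1)$.} Fix $p$ and any $z\in T_pM$ with $z\neq 0$. By (3) there is an inertial $v$ with $v\vert_p$ proportional to $z$, after an affine rescaling if needed. Since $v$ is both rectilinear and uniform, \eqref{eq.magrate} gives $0=v\left[g(v,v)\right]=\left(\nabla_v g\right)(v,v)$ at $p$. By homogeneity of degree three in $z$, $B_\Delta(z)=0$, and for $z=0$ this is trivial. This is (1$'$), hence (1) by polarisation.

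Note that this direction only evaluates at the single event $p$, using the single vector $v\vert_p$. That is exactly why the flat shear example is no obstruction: pointwise diagonal data suffice to reach (1).
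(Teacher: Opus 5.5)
Your proposal is correct and follows the paper's proof step for step. The paper uses the same reduction of (1) to the diagonal condition $\left(\nabla_z g\right)(z,z)=0$ through Lemma~\ref{lem.polarisation}, the same cycle $(1)\Rightarrow(2)\Rightarrow(3)\Rightarrow(1)$ through the rate \eqref{eq.magrate}, and the same cubic-homogeneity argument at a single event. Your transversal-hypersurface construction of the geodesic field, and your caveat about the completeness demanded by Definition~\ref{def.inertia}, only make explicit what the paper's proof of $(2)\Rightarrow(3)$ leaves at ``assemble smoothly, in a neighbourhood of $p$''; the caveat is warranted, since the paper's argument there is equally local.
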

	\begin{proof}
	Denote by $S(u,v,w)$ the left hand side of the condition~\eqref{eq.killing} -- the sum of $\left(\nabla_u g\right)(v,w)$ over the six permutations of the arguments -- which is totally symmetric by construction, whatever the symmetries of $g$. On the diagonal it reduces to
		\beq
		\label{eq.Sdiag}
		S(v,v,v) = 6 \left(\nabla_v g \right)(v,v).
		\eeq

	\emph{(1) implies (2).} Along any rectilinear motion, the self-parallel condition~\eqref{eq.rectilinear} together with the rate \eqref{eq.magrate} give
		\beq
		\label{eq.rectuniform}
		v\left[g(v,v)\right] = \left(\nabla_v g\right)(v,v) = \tfrac{1}{6}\, S(v,v,v) = 0,
		\eeq
	and the motion is uniform.

	\emph{(2) implies (3).} Fix an event $p \in M$ and a vector $v\vert_p \in T_p M$ along the prescribed direction. The rectilinear condition is, in coordinates, the second order system \eqref{eq.geodesic}; by the standard existence theorem for such systems, its solutions realise the initial pair $\left(p, v\vert_p\right)$ and assemble smoothly, in a neighbourhood of $p$, into a vector field all of whose integral curves are rectilinear. The flow of this field is rectilinear by construction and uniform by the hypothesis (2), hence an inertial motion through the prescribed event and direction. As for uniqueness, any two rectilinear motions through $p$ with proportional velocities at $p$ are affine reparametrisations of one another, by the uniqueness of solutions of the system \eqref{eq.geodesic} together with the transformation law \eqref{eq.acctransf}.

	\emph{(3) implies (1).} Fix an event $p$ and any $v\vert_p \neq 0$. By hypothesis, some inertial motion passes through $p$ along the direction of $v\vert_p$; its generating field $u$ satisfies $u\vert_p = \lambda\, v\vert_p$ with $\lambda \neq 0$. Being rectilinear and uniform, the rate \eqref{eq.magrate} along this motion yields $\left(\nabla_u g\right)(u,u)\vert_p = 0$; since this expression is pointwise and cubic in $u\vert_p$,
		\beq
		\label{eq.killingdiag}
		\left(\nabla_v g\right)(v,v)\vert_p = \lambda^{-3} \left(\nabla_u g\right)(u,u)\vert_p = 0.
		\eeq
	By the diagonal evaluation \eqref{eq.Sdiag}, $S(v,v,v)\vert_p = 0$ for every $v\vert_p \in T_p M$, since the case $v\vert_p = 0$ is trivial. Being totally symmetric and vanishing on the diagonal, $S$ vanishes identically by Lemma~\ref{lem.polarisation}, which is the condition~\eqref{eq.killing}.
	\end{proof}

Proposition~\ref{prop.killing} identifies the precise demand that the Principle of Inertia places upon the pair $(g,\nabla)$ and, in accordance with the second remark following Definition~\ref{def.metric}, it constrains the symmetric part of $g$ alone. When the bilinear structure is symmetric, the condition~\eqref{eq.killing} collapses to twice its cyclic part,
	\beq
	\label{eq.killingcyc}
	\left(\nabla_u g\right)(v,w) + \left(\nabla_v g\right)(w,u) + \left(\nabla_w g\right)(u,v) = 0,
	\eeq
the statement that $g$ is a \emph{Killing tensor} of the connection, that is, the squared magnitude is a first integral of every rectilinear motion. This is strictly weaker than compatibility. Indeed, let $g$ be symmetric, let $\nabla'$ be any connection compatible with $g$, let $Z\in \GTM$ be an arbitrary vector field, and set $\theta \equiv g(Z,\cdot)$. The deformed connection
	\beq
	\label{eq.thetafamily}
	\nabla_u w = \nabla'_u w + \theta(u)\, w + \theta(w)\, u - 2\, g(u,w)\, Z
	\eeq
is torsion-free whenever $\nabla'$ is, satisfies the Killing condition~\eqref{eq.killing} for every choice of $Z$, and yet a direct computation gives
	\beq
	\label{eq.nablagtheta}
	\left(\nabla_u g\right)(v,w) = g(u,v)\,\theta(w) + g(u,w)\,\theta(v) - 2\, g(v,w)\,\theta(u),
	\eeq
which is non-zero whenever $\theta$ is. Every rectilinear motion of the family \eqref{eq.thetafamily} is therefore uniform, while its propagator fails to preserve the pairing. \emph{Inertia alone cannot tell these connections from a compatible one}.

Full compatibility acquires, in turn, its own kinematical characterisation. Let us demand that inertial motion preserve the magnitude not only of its own velocity but of \emph{every} vector it carries along -- that is, that observers moving inertially may transport \emph{standards of magnitude} unchanged. For a vector field $w$ propagated parallel along a rectilinear motion, $\nabla_v w = 0$, the rate of change of its squared magnitude is $v\left[g(w,w)\right] = \left(\nabla_v g\right)(w,w)$, and the demand reads $\left(\nabla_v g\right)(w,w) = 0$ for arbitrary and independent $v, w \in \GTM$. Lemma~\ref{lem.polarisation}, applied at each fixed $v$ to the bilinear form $\left(\nabla_v g\right)(\cdot,\cdot)$ in the transported argument, now forces the vanishing of the symmetric part of $\nabla_v g$ in its last two slots -- the compatibility $\nabla g_s = 0$ of everything kinematics can see. The full isometry property~\eqref{eq.isometry} is a genuinely stronger assumption only as regards the antisymmetric part of $g$, to which every statement of this section is blind. In short, uniformity of the trajectories demands the Killing condition~\eqref{eq.killing}; transportability of magnitudes demands the compatibility of the symmetric part.

The whole of the kinematics developed in this section can now be gathered in a single commutative diagram, in the same spirit as the diagrams that have guided us throughout. Let $v \in \GTM$ generate the flow $\varphi$ [cf. Definition~\ref{def.flow}], let $p = \gamma(t_0)$ be an event on one of its integral curves, and consider
	\beq
	\label{diag.inertia}
	\begin{tikzpicture}[]
	\matrix[matrix of math nodes,column sep={100pt,between origins},row
	sep={70pt,between origins},nodes={asymmetrical rectangle}] (s)
	{
	|[name=a1]| M				& |[name=a2]| 				& |[name=a3]| M							\\
	|[name=b1]| T_{p} M			& |[name=b2]| 				& |[name=b3]| T_{\varphi_t(p)} M		\\
	|[name=c1]| 				& |[name=c2]| \mathbb{R}.	& |[name=c3]| 							\\
	}
	;
	\draw[->]
			(a1) edge node[above] {\(\varphi_t\)} (a3)
			(a1) edge node[left] {\(v\)} (b1)
			(a3) edge node[right] {\(v\)} (b3)
			(b1) edge node[above] {\(\propagator[\gamma,t_0,t_0+t]\)} (b3)
			(b1) edge node[below] {\(g\left(\,\cdot\,,\cdot\,\right)\ \) } (c2)
			(b3) edge node[below] {\(\ \ g\left(\,\cdot\,,\cdot\,\right)\)} (c2)
	;
	\end{tikzpicture}
	\eeq
Here, the vertical arrows assign to each event the velocity of the motion, the upper horizontal arrow advances the events along the flow, the middle horizontal arrow is the parallel propagator along the integral curve, and the two lower legs evaluate the squared magnitude \eqref{eq.metric} on the corresponding tangent spaces. The flow $\varphi$ is an inertial motion in $(M,g,\nabla)$ if and only if the diagram~\eqref{diag.inertia} commutes for every $t$. Commutativity of the upper square is the rectilinear condition~\eqref{eq.rectilinear}, that is, the transport carries the velocity into the velocity,
	\beq
	\label{eq.diaginertia}
	\propagator[\gamma,t_0,t_0+t]\left[v\vert_{p} \right] = v\vert_{\varphi_t(p)}.
	\eeq
Granted it, the agreement of the two descents from the upper row to $\mathbb{R}$ is the uniformity condition~\eqref{eq.uniformity}, that is, the squared magnitude of the velocity is the same whether it is read at departure or at arrival. The Principle of Inertia (Theorem~\ref{theo.inertia}) completes the picture. The reparametrisations of the integral curves preserving the commutativity of the diagram -- the propagator being indifferent to them by the parametrisation independence \eqref{eq.reparinv} -- are precisely the affine ones.

Before turning to the symmetries, let us take stock of the assumptions actually in force. The kinematics of this section rests on exactly two independent structures, namely, a parallel propagator -- equivalently, an affine connection $\nabla$ -- and a bilinear structure $g$. Of $g$ we have required neither symmetry, nor non-degeneracy, nor any signature; of $\nabla$, no torsion condition whatsoever; and of the pair, no compatibility beyond what inertia itself imposes -- the vanishing \eqref{eq.killing} of the symmetrised derivative of $g$, strengthened to the compatibility of its symmetric part if magnitudes are to be transportable. Symmetry, non-degeneracy, signature, full compatibility and torsion-freeness are none of them demanded by the description of motion, and none of them is assumed here.

\subsection{The symmetries of inertial motion}

The Principle of Inertia was established before the question of existence was raised, and it is the first of the symmetries of inertial motion. It concerns the parameter -- an inertial motion remains inertial under the affine reparametrisations of its integral curves, and under no others (Theorem~\ref{theo.inertia}). The freedoms that remain are of an entirely different nature. They act not upon the clock that times the motion but upon the two structures out of which the motion was built, and each exposes a datum of those structures that inertial motion cannot resolve. They are two faces of a single principle, and we establish them in turn before gathering them into it.

The first face concerns the bilinear form. Of the two structures from which inertial motion was built, $g$ enters only up to its overall scale, for multiplying it by a positive constant alters nothing that has been said. The freedom deserves to be recorded precisely.

	\begin{prop}[Homothety invariance of inertial motion]
	\label{prop.homothety}
	Let $\lambda > 0$ be a constant. The inertial motions of $(M,g,\nabla)$ and of $(M,\lambda g,\nabla)$ coincide; moreover the compatibility of $\nabla$ with $g$ -- the isometry property~\eqref{eq.isometry} -- and the Killing condition~\eqref{eq.killing} are unchanged by the rescaling $g \mapsto \lambda g$. The inertial structure depends on the bilinear form only through its homothety class. Conversely, let $f:M\longrightarrow \mathbb{R}_{>0}$ be smooth and let the inertial motions of $(M,f g,\nabla)$ coincide with those of $(M,g,\nabla)$. Then $f$ is constant along every inertial motion of non-null velocity and, if $M$ is connected, the symmetric part of $g$ vanishes at no event, and inertial motions pass through every event in every direction [cf. Proposition~\ref{prop.killing}], $f$ is constant.
	\end{prop}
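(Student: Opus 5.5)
The forward direction is a direct check of the two conditions of Definition~\ref{def.inertia} and of the two structural conditions, all of which are linear or homogeneous in $g$. The rectilinear condition~\eqref{eq.rectilinear} does not involve $g$ at all. For the uniformity condition~\eqref{eq.uniformity}, $\lambda$ is constant, so $L_v[\lambda g(v,v)] = \lambda\, L_v[g(v,v)]$, and since $\lambda\neq 0$ the two vanish together. For compatibility, Lemma~\ref{lem.nablag} and the fact that $v[\lambda]=0$ give $\nabla_v(\lambda g) = \lambda\,\nabla_v g$; hence both the isometry property~\eqref{eq.isometry} and the totally symmetric combination~\eqref{eq.killing} are multiplied by $\lambda$, and their vanishing is unaffected. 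The inertial structure therefore depends on $g$ only through its homothety class.

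For the converse, fix an inertial motion $v$ of $(M,g,\nabla)$ with $g(v,v)\neq 0$. By hypothesis $v$ is also inertial for $fg$. Rectilinearity is common to both, and uniformity for $fg$ gives, by the Leibniz rule,
\beq
0 = v\left[f\,g(v,v)\right] = v[f]\,g(v,v) + f\,v\left[g(v,v)\right] = v[f]\,g(v,v),
\eeq
where the last term vanishes by uniformity for $g$. Since $g(v,v)\neq 0$, we get $v[f]=0$, so $f$ is constant along every inertial motion of non-null velocity.

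For the global statement, fix $p$. Because $g_s\vert_p\neq 0$ is symmetric and nonzero, Lemma~\ref{lem.polarisation} applied at $k=2$ shows that $g_s$ is not identically zero on the diagonal. Thus the set $\mathcal{O}_p=\{z\in T_pM : g(z,z)\neq 0\}$ is nonempty, and it is open by continuity, hence dense in $T_pM$: the complement of a nonzero quadratic form's zero set is open and dense. By item (3) of Proposition~\ref{prop.killing}, through $p$ along each $z\in\mathcal{O}_p$ there passes an inertial motion $u$ with $u\vert_p=\lambda z$, so the previous step gives $\d f\vert_p(z)=u[f]\vert_p/\lambda=0$. The linear form $\d f\vert_p$ then vanishes on a dense set, hence identically. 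Since $p$ is arbitrary, $\d f=0$ on $M$, and connectedness makes $f$ constant.

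The main obstacle is this last step: passing from directions with $g(z,z)\neq 0$ to all of $T_pM$. This needs the density of the non-null cone, which is exactly where the hypothesis that $g_s$ vanishes at no event enters, together with the continuity of $\d f\vert_p$. One subtlety is that the inertial motions supplied by Proposition~\ref{prop.killing} are possibly only local flows, whereas Definition~\ref{def.inertia} assumes completeness. I would note that only the germ at $p$ is used, and read existence in that local sense.
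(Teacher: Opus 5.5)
Your proof is correct, and up to its last step it is the paper's. The forward direction is first-degree homogeneity in $g$, and the converse rests on $v[f\,g(v,v)] = v[f]\,g(v,v)$ along a motion uniform for $g$.

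The two routes differ only in how they pass from the non-null directions at $p$ to all of $T_pM$.

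The paper argues by spanning. Take a non-null $v\vert_p$ and a null $u\vert_p$. The squared magnitude along $v\vert_p + t\,u\vert_p$ is the affine function $g(v,v) + 2t\,g_s(v,u)$, which vanishes for at most one value of $t$. Hence $u\vert_p$ is a nonzero multiple of the difference of two non-null vectors, and linearity of $\d f\vert_p$ finishes the argument.

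You argue instead by density of the non-null set together with continuity of $\d f\vert_p$. The paper's version is purely algebraic and uses nothing beyond the linearity of the form being annihilated. Yours needs the analytic fact that a nonzero polynomial cannot vanish on an open set. In exchange, it works for any continuous function of the direction, which is exactly how the paper handles the quadratic $\tau_v$ in Corollary~\ref{cor.forcefreegen}.

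Two small points. As worded, your phrase ``open by continuity, hence dense'' is a non sequitur: openness gives no density, and the real reason is the one you state after the colon. On the other hand, you make explicit, via Lemma~\ref{lem.polarisation}, that $g_s\vert_p \neq 0$ supplies a non-null vector at $p$; the paper's spanning argument takes this for granted.

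Your remark about completeness, Definition~\ref{def.inertia} versus the local flows supplied by Proposition~\ref{prop.killing}, is fair, and the paper passes over it.
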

	\begin{proof}
	The rectilinear condition~\eqref{eq.rectilinear} does not involve $g$, and the uniformity \eqref{eq.uniformity}, compatibility \eqref{eq.isometry} and Killing condition~\eqref{eq.killing} are each homogeneous of the first degree in $g$, hence insensitive to a constant factor. Every condition defining an inertial motion is blind to the scale of $g$.

	For the converse, rectilinearity is again untouched, while the uniformity of the rescaled structure along a motion uniform for the original one reads
		\beq
		\label{eq.uniflocal}
		L_v\left[f\, g(v,v)\right] = f\, L_v\left[g(v,v)\right] + v[f]\, g(v,v) = v[f]\, g(v,v),
		\eeq
	which vanishes if and only if $v[f] = 0$ wherever $g(v,v) \neq 0$. Granted inertial motions through every event and direction, $\d f\vert_p$ annihilates every non-null vector at every $p$. Wherever the symmetric part of $g\vert_p$ does not itself vanish those vectors span $T_p M$. Indeed, let $v\vert_p$ be non-null and $u\vert_p$ null. The squared magnitude along $v\vert_p + t\, u\vert_p$ is the affine function $g(v,v) + 2t\, g_{(\rm s)}(v,u)$ of the real parameter $t$, which fails to vanish for all but at most one value of it, so that family contains two non-null vectors whose difference is a non-zero multiple of $u\vert_p$. Hence $\d f = 0$ and $f$ is constant on a connected $M$.
	\end{proof}

The uniqueness we recover here is the elementary, single-metric shadow of a harder question. \emph{Given only $\nabla$, how many bilinear structures are compatible with it in the sense of the isometry property~\eqref{eq.isometry}?} The holonomy representation of the connection governs the answer, which collapses to a constant scale only when that representation is irreducible~\cite{schmidt1973conditions}. We bypass that global question here, since $g$ is given from the outset and only its own ray is at stake.

This freedom is invisible to the whole of the kinematics. Whatever physical attribute the scale of the bilinear form may be called upon to carry, inertial motion cannot resolve it, for every representative of a single homothety class prescribes the very same inertial trajectories, and no observation of a particle in inertial motion can tell one representative from another.

The connection carries the companion freedom, the second face of the same principle. Where the bilinear form entered the inertial structure only up to its scale, the connection enters only through part of itself; the remainder is a tensor that no inertial motion can detect. That tensor is the torsion of Definition~\ref{def.torsion}, which entered above only as a guarantee that the deformed connections \eqref{eq.thetafamily} transported in no exotic manner, and which we now examine in its own right.

	\begin{prop}[Torsion invariance of inertial motion]
	\label{prop.torsion}
	Let $\nabla$ and $\bar\nabla$ be affine connections on $M$ with the same symmetric part, $\bar\Gamma^k_{(ij)} = \Gamma^k_{(ij)}$ -- equivalently, differing only in their torsion \eqref{eq.torsion}. Then the inertial motions of $(M,g,\nabla)$ and of $(M,g,\bar\nabla)$ coincide. The inertial structure depends on the connection only through its symmetric part.
	\end{prop}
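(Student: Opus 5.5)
The plan is to check the two defining conditions of Definition~\ref{def.inertia} separately, showing that each is unchanged when $\nabla$ is replaced by $\bar\nabla$. The starting point is the difference tensor. Two affine connections satisfying the rules of Definition~\ref{def.connection} differ by a map $D(u,w) \equiv \bar\nabla_u w - \nabla_u w$. This map is tensorial in both slots: in $w$ because the Leibniz terms $v[f]\,w$ cancel, and in $u$ by homogeneity and additivity. In a coordinate basis its components are $D^k_{ij} = \bar\Gamma^k_{ij} - \Gamma^k_{ij}$. The hypothesis $\bar\Gamma^k_{(ij)} = \Gamma^k_{(ij)}$ says precisely that $D$ is antisymmetric, $D(u,w) = -D(w,u)$. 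Comparing with Definition~\ref{def.torsion} gives $\bar T - T = 2D$, which justifies the word ``equivalently'' in the statement.

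For the rectilinear condition~\eqref{eq.rectilinear}, I would use the coordinate form of the acceleration in the geodesic system~\eqref{eq.geodesic}. The connection enters only through $\Gamma^i_{jk}\,\dot x^j \dot x^k$, and the contraction with the symmetric product $\dot x^j\dot x^k$ sees only $\Gamma^i_{(jk)}$. Intrinsically the same point reads
\[
\bar\nabla_v v = \nabla_v v + D(v,v) = \nabla_v v,
\]
since $D$ is antisymmetric. So a vector field is self-parallel for $\nabla$ exactly when it is self-parallel for $\bar\nabla$.

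The uniformity condition~\eqref{eq.uniformity} needs no work. It is stated through the Lie derivative, $L_v\left[g(v,v)\right] = v\left[g(v,v)\right]$, which is built from the flow alone (Lemma~\ref{lem.liebracket}) and does not involve the connection. Hence uniformity for $(M,g,\nabla)$ and for $(M,g,\bar\nabla)$ are literally the same equation. The flows that are both rectilinear and uniform therefore coincide, and the inertial structure depends only on the symmetric part of the connection.

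There is no real obstacle in this argument. The one point needing care is to establish that $D$ is a genuine tensor, so that the antisymmetry of its coefficients is meaningful independently of coordinates. I would get this from the coordinate expression for $\nabla_v w$ derived after Definition~\ref{def.connection}, where the derivative term $\frac{\d}{\d t}w^k$ is identical for both connections and drops out of the difference. As a consistency check, I would also note that the Killing condition~\eqref{eq.killing} is itself torsion-blind. By Lemma~\ref{lem.nablag},
\[
(\bar\nabla_u g)(v,w) = (\nabla_u g)(v,w) - g\bigl(D(u,v),w\bigr) - g\bigl(v,D(u,w)\bigr).
\]
The correction terms cancel under total symmetrisation in $(u,v,w)$ because $D$ is antisymmetric. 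So the existence statement of Proposition~\ref{prop.killing} transfers between the two connections as well.
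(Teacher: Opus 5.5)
Your proposal is correct and matches the paper's proof: the contraction with $v^i v^j$ retains only $\Gamma^k_{(ij)}$, so $\nabla_v v = \bar\nabla_v v$, and the uniformity condition involves only the flow and $g$, not the connection. Two of your points are correct but go beyond what the statement needs: the difference-tensor bookkeeping (including $\bar T - T = 2D$) and your check that the totally symmetrised Killing condition is blind to torsion.
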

	\begin{proof}
	In the rectilinear condition~\eqref{eq.rectilinear} the connection coefficients are contracted against the symmetric product $v^i v^j$ of the velocity components, so only the symmetric part $\Gamma^k_{(ij)}$ enters the acceleration $\nabla_v v$. Connections sharing that symmetric part assign the same acceleration to every field, $\nabla_v v = \bar\nabla_v v$, hence the same rectilinear motions, while the uniformity condition~\eqref{eq.uniformity} does not involve the connection at all. Every condition defining an inertial motion is blind to the torsion of $\nabla$.
	\end{proof}

The torsion is to the connection what the scale was to the bilinear form, namely, a part of the structure that inertial motion cannot resolve. Here it enters solely as a symmetry of inertial motion, and nothing in the description of motion constrains it.

The two freedoms are one statement, and we record it as such.

	\begin{theo}[Inertial-structure equivalence]
	\label{theo.ise}
	The inertial motions of $(M,g,\nabla)$ depend on the pair $(g,\nabla)$ only through the homothety class of $g$ and the symmetric part of $\nabla$. Consequently no observation of a particle in inertial motion resolves either the scale of the bilinear form or the torsion of the connection.
	\end{theo}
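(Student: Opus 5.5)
The plan is to assemble Theorem~\ref{theo.ise} directly from Proposition~\ref{prop.homothety} and Proposition~\ref{prop.torsion}. The one new ingredient is that the two substitutions can be performed together, in either order. Write the data as a pair $(g,\nabla)$. The claim is then that the set of inertial motions of $(M,g,\nabla)$ equals that of $(M,\lambda g,\bar\nabla)$ whenever $\lambda>0$ is constant and $\bar\Gamma^k_{(ij)}=\Gamma^k_{(ij)}$.

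First I pass from $(M,g,\nabla)$ to $(M,\lambda g,\nabla)$ by Proposition~\ref{prop.homothety}, which leaves the inertial motions unchanged. Next I pass from $(M,\lambda g,\nabla)$ to $(M,\lambda g,\bar\nabla)$ by Proposition~\ref{prop.torsion}, applied with the bilinear structure $\lambda g$. This is legitimate because that proposition holds for an arbitrary bilinear structure.

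To make the factorisation transparent, I would also note how the two conditions of Definition~\ref{def.inertia} split:
\begin{itemize}
\item The rectilinear condition~\eqref{eq.rectilinear} involves only $\nabla$, and only its symmetric part, since $\Gamma^k_{ij}$ is contracted against $v^iv^j$ in the system~\eqref{eq.geodesic}.
\item The uniformity condition~\eqref{eq.uniformity} involves only $g$, and is homogeneous of degree one in it.
\end{itemize}
Each condition is therefore separately invariant under the full transformation group of pairs (constant rescalings of $g$ times torsion changes of $\nabla$). Hence their conjunction is invariant as well. No interaction term arises, because neither condition contains both $g$ and $\nabla$.

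The ``consequently'' clause is an interpretive corollary. An observation of a particle in inertial motion records at most its parametrised worldline, an integral curve of an inertial $v$ up to the affine freedom of Theorem~\ref{theo.inertia}. Two pairs related as above produce identical families of such worldlines, so no such record can distinguish them. In particular, it can fix neither $\lambda$ nor the torsion $T^k_{ij}=\Gamma^k_{ij}-\Gamma^k_{ji}$, which by Definition~\ref{def.torsion} is exactly the datum discarded when only $\Gamma^k_{(ij)}$ is retained.

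The only step needing care, which I expect to be the main (mild) obstacle, is to state precisely what ``only through'' means. It means invariance under this equivalence relation, not that the equivalence class is the exact fibre. Exactness would be a stronger claim. The converse half of Proposition~\ref{prop.homothety} supplies it for the scale, under its hypotheses. For the connection, exactness would require showing that the geodesic sprays determine $\Gamma^k_{(ij)}$. That follows from Lemma~\ref{lem.polarisation} applied to the difference of symmetric parts evaluated on $(v,v)$, given inertial motions through every event and direction as in Proposition~\ref{prop.killing}. I would include this remark but keep the theorem at the invariance level the statement asserts.
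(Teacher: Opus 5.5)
Your proposal is correct and matches the paper's proof, which erases the scale by Proposition~\ref{prop.homothety} and the torsion by Proposition~\ref{prop.torsion} in turn; your explicit composition, and your remark that neither defining condition involves both $g$ and $\nabla$, simply spell out that one-line argument. One caution on your closing remark: the converse half of Proposition~\ref{prop.homothety} gives exactness only within the conformal class of $g$ at fixed $\nabla$, and the paper notes right after the theorem that the full converse fails ($g_1\oplus g_2$ and $g_1\oplus c\,g_2$ share one torsion-free compatible connection without being rescalings of each other), so you are right to keep the theorem at the invariance level.
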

	\begin{proof}
	The two data are independent attributes of the pair, and each is erased in turn by the homothety invariance of Proposition~\ref{prop.homothety} and the torsion invariance of Proposition~\ref{prop.torsion}.
	\end{proof}

Theorem~\ref{theo.ise} asserts that blindness, and not its converse, for two pairs may share their inertial motions without differing by a rescaling and a torsion. On a reducible manifold the bilinear structures $g_1 \oplus g_2$ and $g_1 \oplus c\, g_2$ admit, for every constant $c>0$, one and the same torsion-free compatible connection, and neither is a rescaling of the other. That is the reducible case of the remark following Proposition~\ref{prop.homothety}, and the reason we state the theorem in one direction only.

Each of the structures we gave the manifold thus carries a datum that the description of motion leaves unfixed, the scale in the case of the bilinear form and the torsion in the case of the connection, and inertial motion is blind to both. Whatever physical attributes we call upon these two data to carry, bodies differing in them alone traverse the very same inertial trajectories. Such an indifference -- obtained from the mere blindness of the inertial structure, and before any notion of gravity has been introduced -- is the geometric precursor of the \emph{universality of free fall}, and we shall refer to Theorem~\ref{theo.ise} as \emph{inertial-structure equivalence}. The question of which features of a geometry are fixed by observation, and which remain free, is the subject of the axiomatic programme of Ehlers, Pirani and Schild~\cite{ehlers1972geometry}, where light propagation determines the conformal structure of spacetime and free fall its projective one. Theorem~\ref{theo.ise} answers the same question for the two data that the present construction leaves unfixed. The equivalence principle, in its usual non-relativistic form, is a statement about gravitation, namely that the coupling to the Newtonian potential is measured by the same mass that measures the inertia. A geometric reading trades the potential for a curved and time-dependent spatial metric, which forces the equality of the two masses by that very requirement~\cite{kapustin2021nonrelativistic}. We have introduced no gravitational coupling here, and none is needed. What Theorem~\ref{theo.ise} asserts is that inertial motion cannot resolve two data of the structures from which it was built, so that bodies differing in those data alone traverse the same trajectories. That a gravitational mass should afterwards prove equal to the inertial one is a further and separate claim, of which the present blindness is the geometric precursor.

Inertial motion was built from a parameter and two structures, and to each answers a symmetry, a transformation carrying every inertial motion to an inertial motion whatever the manifold on which it takes place. The Principle of Inertia (Theorem~\ref{theo.inertia}) acts on the parameter, namely, the affine reparametrisations of the worldline. Inertial-structure equivalence (Theorem~\ref{theo.ise}) acts on the structures, in its two faces -- the constant rescalings of the bilinear form (Proposition~\ref{prop.homothety}) and the alteration of the torsion of the connection (Proposition~\ref{prop.torsion}). An inertial particle is thus indifferent to the clock that times it, to the scale that weighs it and to the twist of the transport that carries it, and the last two indifferences are one and the same.

\section{Dynamics}
\label{sec.dyn}

Kinematics described the motion of a single particle independently of its causes, and everything it built -- the tangent vector, the parallel propagator, the connection, the bilinear form -- lived on the tangent bundle of the spacetime manifold alone. Dynamics is where the causes of motion enter, and in this section we develop, in the same constructive spirit, the complementary \emph{dual} framework in which they find their natural expression.

The objects of dynamics inhabit the dual bundle. We denote by $T^* M$ the \emph{cotangent bundle}, the collection of the duals $T^*_p M$ of the tangent spaces, whose elements -- \emph{1-forms} or co-vectors -- act linearly on tangent vectors to return a real number. As with the vector fields of the previous section, we write $\GTsM$ for its smooth sections.

Dynamics needs one further operation, the pairing of a co-vector with the velocity of a motion, dual to the evaluation of a function on a vector. This pairing is the \emph{interior product}.

	\begin{mydef}[Interior product]
	\label{def.interior}
	Let $\omega$ be a $k$-form and $v\in \GTM$ a vector field. The interior product $\iota_v \omega$ is the $(k-1)$-form obtained by inserting $v$ into the first slot of $\omega$,
		\beq
		\label{eq.interior}
		\iota_v \omega\left[u_{(1)},\hdots,u_{(k-1)}\right] = \omega \left[v, u_{(1)},\hdots, u_{(k-1)}\right],
		\eeq
	where $u_{(i)} \in \GTM$.
	\end{mydef}

For a 1-form $\alpha$ the interior product reduces to the pairing $\iota_v \alpha = \alpha(v)$, a scalar.

With the interior product at our disposal, together with the exterior derivative, we may ask how a 1-form $\alpha \in \GTsM$ changes along the flow of a vector field $v$. \emph{Cartan's identity} furnishes the answer, and we adopt it as the definition of the Lie derivative of a 1-form,
	\beq
	\label{eq.momch}
	\frac{\d }{\d t} \alpha \equiv L_v \alpha = \iota_v \d \alpha + \d \left(\iota_v \alpha\right).
	\eeq

The first term on the right hand side of~\eqref{eq.momch} acts on a vector field $u$ to give
	\beq
	 \left[\iota_v \d \alpha\right] (u) = v\left[\alpha(u) \right] - u \left[\alpha(v) \right] - \alpha\left( [v,u]\right),
	\eeq
while the second is
	\beq
	\left[\d \iota_v  \alpha\right] \left(u \right) = \d \left[\alpha(v) \right] (u) = u \left[\alpha(v) \right],
	\eeq
where the last equality follows from the definition of the differential of a function, that is, $\d f (u) = u (f)$.
Thus the change of a 1-form $\alpha$ along a curve $\gamma$, acting on an arbitrary vector field $u$, reads
	\begin{align}
	\label{eq.dp}
	\left[L_v \alpha \right](u)	& = v\left[\alpha(u) \right] - \alpha\left([v,u] \right)\nonumber\\
				& = L_v \left[\alpha(u) \right] - \alpha \left[L_v u \right]\nonumber\\
				& = \left\{L_v,\alpha \right\}(u).
	\end{align}
Here, the bracket denotes the \emph{commutator} of the Lie derivative and the 1-form $\alpha$ acting on a vector field $u$. Its content appears in the square
	\beq
	\label{diag.dp}
	\begin{tikzpicture}[]
	\matrix[matrix of math nodes,column sep={130pt,between origins},row
	sep={60pt,between origins},nodes={asymmetrical rectangle}] (s)
	{
	|[name=a1]| \GTM	& |[name=a2]| C^\infty(M)	\\
	|[name=b1]| \GTM	& |[name=b2]| C^\infty(M).	\\
	}
	;
	\draw[->]
			(a1) edge node[above] {\(\alpha\)} (a2)
			(a1) edge node[left]  {\(L_v = \left[v,\,\cdot\,\right]\)} (b1)
			(a2) edge node[right] {\(L_v = v\left[\,\cdot\,\right]\)} (b2)
			(b1) edge node[below] {\(\alpha\)} (b2)
	;
	\end{tikzpicture}
	\eeq
Its horizontal arrows contract a vector field with $\alpha$, while its vertical arrows are the Lie derivative -- acting as the bracket $\left[v,\,\cdot\,\right]$ on vector fields and as $v\left[\,\cdot\,\right]$ on functions. By the commutator structure \eqref{eq.dp}, the two paths from $\GTM$ to $C^\infty(M)$ differ by $\left(L_v\alpha\right)(u)$, so the square commutes if and only if $L_v\alpha = 0$.

\subsection{Momentum and force}
Velocity -- an element of the tangent bundle -- is the whole of the kinematic content of a motion. Dynamics turns instead on the \emph{dual} objects. The change of a 1-form along the flow is what equation~\eqref{eq.dp} measures. To carry the kinematic velocity into this dual arena we require a map from the tangent to the cotangent bundle; the bilinear structure of Definition~\ref{def.metric} -- introduced in the previous section merely to select the uniform motions -- furnishes precisely one, and we read the passage it effects as a \emph{constitutive relation}. Every bilinear structure induces the \emph{musical} map
	\beq
	\label{eq.gflat}
	g^\flat: 	 T_p M	 \longrightarrow 	 T^*_p M \quad  \text{with} \quad u \longmapsto 	u^\flat = g^\flat (u) \equiv g\left(u, \cdot \right),
	\eeq
whose coordinate expression is $u^\flat = g_{ij} u^j\, \d x^i$, where $g_{ij}$ denote the components of $g$ in the coordinate basis. We recover the squared magnitude of a vector through the natural pairing,
	\beq
	\label{eq.flatpairing}
	u^\flat(u) = g_{ij}\, u^i u^j = g(u,u).
	\eeq
When the map \eqref{eq.gflat} is a linear isomorphism at every event -- in which case we say the bilinear structure is \emph{non-degenerate} -- it possesses an inverse
	\beq
	\label{eq.gsharp}
	g_\sharp \equiv \left(g^\flat\right)^{-1}:T^*_p M \longrightarrow T_p M \quad  \text{with} \quad \omega \longmapsto \omega_\sharp = g_\sharp (\omega) = g^{ij} \omega_j \frac{\partial}{\partial x^i},
	\eeq
where $g^{ij}$ are the components of the matrix inverse of $\left(g_{ij}\right)$. A symmetric and non-degenerate bilinear structure is called a \emph{metric}; if the metric is, moreover, positive definite, the pair $(M,g)$ is a \emph{Riemannian} manifold, and otherwise we say $(M,g)$ is \emph{indefinite}.

More than a geometric convenience, a bundle morphism of this kind -- from the tangent to the cotangent bundle -- is an attribute by which one particle differs from another. A particle has no extension, so one is told from another by the \emph{constitutive data} it carries, the attributes that decide how we convert the kinematic objects of the previous section into dual ones. That a body should be without extension does not deprive it of structure, and we shall not presume at the outset how many such data there are. The morphism before us is the first of them, its \emph{constitutive morphism}. Which co-vector a given velocity receives depends upon it alone, and the whole of its freedom is an overall scale relative to the ambient map \eqref{eq.gflat}.

That scale is exactly the datum the kinematics of the previous section cannot resolve. There we let the bilinear structure enter only through the squared magnitude $g(v,v)$ that selects the uniform motions and through the connection compatible with it, and a constant rescaling $g \mapsto \lambda g$ leaves both unchanged. Kinematics therefore fixes the musical map \eqref{eq.gflat} only up to its \emph{homothety class}
	\beq
	\label{eq.homothety}
	\left[g^\flat\right] \equiv \left\{\lambda\, g^\flat \; : \; \lambda \in \mathbb{R}_{>0}\right\},
	\eeq
and a particle attribute is the selection of a single representative within it.

	\begin{mydef}[Inertial mass]
	\label{def.mass}
	Let $g^\flat$ be the spacetime musical map \eqref{eq.gflat} and $\left[g^\flat\right]$ its homothety class \eqref{eq.homothety}. A \emph{particle} carries, as the first of its constitutive data, a choice of representative $G^\flat \in \left[g^\flat\right]$, its \emph{constitutive morphism}. The \emph{inertial mass} of the particle is the positive number $m$ that selects the representative,
		\beq
		\label{eq.mass}
		G^\flat = m\, g^\flat.
		\eeq
	The mass so defined is a pure ratio and carries no dimension of its own, since the dimension of mass resides in the representative $g^\flat$ against which it is measured.
	\end{mydef}

	\begin{mydef}[Momentum]
	\label{def.mom}
	Let $(M,g)$ be a differentiable manifold equipped with a bilinear structure [cf. Definition~\ref{def.metric}], and let $\gamma:I\subset\mathbb{R}\longrightarrow M$ be a curve with velocity $v\in T_p M$ at each event $p$. The \emph{momentum} of a particle of constitutive morphism $G^\flat$ [cf. Definition~\ref{def.mass}] is the constitutive dual of its velocity,
		\beq
		\label{eq.momentum}
		\pi \equiv  G^\flat(v) = m\, g^\flat (v) = m\, v^\flat.
		\eeq
	\end{mydef}

A rescaling $G^\flat \mapsto \lambda G^\flat$ passes to another representative of $\left[g^\flat\right]$, namely, a particle of mass $\lambda m$, kinematically indistinguishable from the first -- inertial motion, self-parallel curves and the uniform standard all belong to the homothety class -- yet dynamically distinct. None of this presupposes symmetry or non-degeneracy of the bilinear structure; the momentum calls upon the map \eqref{eq.gflat} alone.

A ray has no distinguished element. Nothing in the geometry singles out $g^\flat$ from among the $\lambda g^\flat$, so the number $m$ of Definition~\ref{def.mass} is not attached to the particle alone. It is the ratio of the particle's constitutive morphism to a representative chosen by decree, and choosing that representative is the act of adopting a standard of mass. The freedom is a group,
	\beq
	\label{eq.gauge}
	g^\flat \longmapsto \lambda\, g^\flat, \qquad m_i \longmapsto \lambda^{-1} m_i, \qquad \lambda \in \mathbb{R}_{>0},
	\eeq
under which the constitutive morphism $G^\flat_i = m_i\, g^\flat$ of every particle survives unchanged, so that choosing a representative within the class is a gauge fixing in the sense that nothing observable resolves what the choice fixes~\cite{weatherall2016understanding}. The theory therefore asserts the collection of morphisms themselves, and with them the ratios $m_i/m_j$; an absolute mass is not among its objects. We thus separate three offices. Kinematics sees the class alone. Dynamics sees the representatives. The choice of a fiducial representative is the business of neither, and it is all that metrology contributes.

The constant $\lambda$ in the transformation \eqref{eq.gauge} is not a technical convenience. Were we to let the rescaling vary from event to event, Proposition~\ref{prop.homothety} would fail in its converse direction, since the uniformity~\eqref{eq.uniflocal} of the rescaled structure acquires the term $v[f]\,g(v,v)$, and the inertial motions would no longer be the same motions. A standard of mass that varied from place to place would therefore bend the trajectories of free particles. In this sense a unit of mass must be one and the same everywhere as a consequence of the invariance of inertial motion, rather than by a convention adopted for convenience, and it is falsifiable by the same measurements that test the universality of free fall.

The reading extends to the representative itself. If $m$ is a ratio to a chosen element of the class, then that element is the constitutive morphism of the body which the choice declares to be of unit mass, and $g^\flat$ is not the metric of the spacetime manifold in any absolute sense. What $M$ carries intrinsically is the class $\left[g^\flat\right]$; what carries the normalisation is a body. The reading is relationalist about scale, and it declines the substantivalist alternative on which the manifold would carry a magnitude of its own~\cite{nerlich1994shape}. We record the reading rather than argue for it, since it concerns the conventions by which we measure the theory and not the geometry upon which we build it.

The momentum is conserved along the motion precisely when $L_v \pi = 0$. This is not the general case; the obstruction $L_v \pi$ measures the departure of the momentum from conservation, and Newton named that failure the \emph{force}~\cite{newton1999principia}, in the fourth of the Definitions and in the Second Law.

The derivative we use to measure that failure is not a matter of taste, and our reason for the Lie derivative is the one that has governed every choice so far. At this point of the construction the bilinear form and the connection remain independent attributes of the manifold, and the Second Law, Theorem~\ref{theo.nsl}, finally relates them. To measure the change of the momentum with $\nabla$ would be to carry the connection into the very definition of the force, and so to presuppose the coupling that the Second Law exists to establish. The Lie derivative asks only for what is already in hand, namely the flow of the motion itself.

The minimality has a price. We assumed the propagator local when we built the covariant derivative, so that $\nabla_v w$ at an event depends upon the velocity only through its value there. The Lie derivative enjoys no such property, since we define it by dragging along the flow, so that $L_v \pi$ at an event requires the velocity field in a neighbourhood of it. A force in the present sense is therefore an attribute of the family of motions filling a region and not of a single worldline in isolation. This is a further respect in which dynamics departs from kinematics, for the rectilinear condition \eqref{eq.rectilinear} and the scalar constrained by the uniformity condition \eqref{eq.uniformity} each demand the velocity only along the curve they govern, so that a motion may be inertial, or fail to be, with no reference to its neighbours. Thus, we define
	\begin{mydef}[Force]
	\label{def.force}
	Let $(M,g)$, $\gamma$, $v$ and $\pi$ be as in Definition~\ref{def.mom}. Then, the action
		\beq
		\label{eq.force}
		F(u) \equiv \left[\frac{\d}{\d t} \pi\right] (u) = \left\{L_v,\pi \right\}(u)
		\eeq
	on an arbitrary vector field $u\in \GTM$ defines a 1-form $F\in \GTsM$ called the force.	
	\end{mydef}	

Note that the definition of force does not require, or presuppose, the idea of inertial motion (cf. Definition~\ref{def.inertia}).

We single out the motions upon which the force does not act.

	\begin{mydef}[Force-free motion]
	\label{def.forcefree}
	Let $(M,g)$, $\gamma$, $v$ and $F$ be as in Definition~\ref{def.force}. The motion is \emph{force-free} if the force vanishes on every vector field,
		\beq
		\label{eq.forcefreedef}
		F(u) = 0 \qquad \text{for every} \qquad u \in \GTM.
		\eeq
	\end{mydef}

The mass now discloses its dynamical role. A force is the stimulus that alters the inertial state of a particle -- its velocity -- and since the momentum~\eqref{eq.momentum} carries the mass as a constant factor, that stimulus is $F = L_v \pi = m\, L_v\!\left(v^\flat\right)$. At a prescribed kinematic change of the velocity, the force a particle demands is proportional to its mass. The scale we introduced in Definition~\ref{def.mass} as the bare label individuating one particle among others is thus the particle's \emph{inertial response}.

\subsection{Newton's Second Law}

The idea of linking inertial motion to force-free motion leads precisely to Newton's Second Law. Let us take stock of how far kinematics has constrained the pair $(g,\nabla)$, and declare the first refinement it could not detect. From this point onwards we require the bilinear structure to be \emph{symmetric} -- an assumption invisible to the whole of Section 2. The existence of inertial motions through every event demands that $g$ be a Killing tensor of the connection -- the cyclic condition~\eqref{eq.killingcyc} of Proposition~\ref{prop.killing} -- while the transportability of standards of magnitude upgrades this demand to full compatibility -- the isometry property~\eqref{eq.isometry}. The dynamics adds the final requirement, the \emph{torsion-free} condition
	\beq
	\label{eq.torsionfree}
	\nabla_v u - \nabla_u v = \left[v,u \right],
	\eeq
for every pair of vector fields $v,u\in \GTM$. It is a remarkable fact -- the fundamental theorem of Riemannian geometry -- that once we impose these demands no freedom remains. For a metric -- a symmetric, non-degenerate $g$ -- there exists exactly one torsion-free connection compatible with $g$, the \emph{Levi-Civita} connection. The well-posedness of inertia, the transportability of magnitudes and the identification of force with the change of momentum fix the symmetric part of the transport between them; setting the antisymmetric part to zero is a further step.

The bridge between the Lie and the covariant derivative is the content of the following lemma. The statement is that of Theorem 1.17 in Chapter IV of Arnold and Khesin~\cite{arnold1998topological}, where it accounts for the form of the Euler equation of an ideal fluid, and it belongs to the standard apparatus of geometric mechanics~\cite{marsden1999introduction,khesin2009geometry}. Their proof proceeds through test fields commuting with $v$ and delegates the general case to a remark, whereas we prove it for an arbitrary test field from the outset. The lemma is in any case the specialisation to $\alpha = \iota_v g$ of the Lie--covariant compatibility of Proposition~\ref{prop.liecovgen}, which owes nothing to the metric.

	\begin{lemma}[Lie--covariant compatibility]
	\label{lem.liecov}
	Let $\nabla$ be the Levi-Civita connection of $g$, that is, the unique affine connection which is torsion-free -- cf. the exchange condition~\eqref{eq.torsionfree} -- and compatible with $g$ -- cf. the isometry property~\eqref{eq.isometry}. Then, for every vector field $v \in \GTM$,
		\beq
		\label{eq.liecov}
		L_v \left(v^\flat\right) = \left(\nabla_v v\right)^\flat + \frac{1}{2}\, \d \left[ g(v,v)\right].
		\eeq
	\end{lemma}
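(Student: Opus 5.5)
We apply Proposition~\ref{prop.liecovgen} with the $1$-form $\alpha = v^\flat = g(v,\cdot)$. For an arbitrary test field $u\in\GTM$, the identity~\eqref{eq.liecovgen} with vanishing torsion, as guaranteed by the exchange condition~\eqref{eq.torsionfree}, gives
\beq
\left(L_v v^\flat\right)(u) = \left(\nabla_v v^\flat\right)(u) + g\left(v,\nabla_u v\right).
\eeq
Working with an arbitrary $u$ from the start lets us bypass the commuting test fields used by Arnold and Khesin. The remaining task is to identify each of the two terms on the right.

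For the first term, we use the rule~\eqref{eq.nablatensor} at $k=1$:
\beq
\left(\nabla_v v^\flat\right)(u) = v\left[g(v,u)\right] - g\left(v,\nabla_v u\right).
\eeq
Expanding $v\left[g(v,u)\right]$ through the isometry property~\eqref{eq.isometry} yields $g(\nabla_v v,u) + g(v,\nabla_v u)$. The second piece cancels, leaving
\beq
\left(\nabla_v v^\flat\right)(u) = g\left(\nabla_v v,u\right) = \left(\nabla_v v\right)^\flat(u).
\eeq
Stated otherwise, $\nabla g = 0$ makes the covariant derivative commute with the musical map~\eqref{eq.gflat}.

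For the second term, we apply the isometry property again, this time along $u$:
\beq
u\left[g(v,v)\right] = g\left(\nabla_u v,v\right) + g\left(v,\nabla_u v\right).
\eeq
This is the step where the symmetry of $g$, assumed from the opening of this subsection, enters. The two summands coincide, so
\beq
g\left(v,\nabla_u v\right) = \tfrac12\, u\left[g(v,v)\right] = \tfrac12\,\d\left[g(v,v)\right](u),
\eeq
using $\d f(u) = u[f]$.

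Adding the two identifications gives~\eqref{eq.liecov} evaluated on $u$. Since $u$ is arbitrary, the two $1$-forms agree.

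The only real obstacle is a matter of bookkeeping. We must check that Proposition~\ref{prop.liecovgen} places the derivative of $v$ in the slot $\nabla_u v$, not $\nabla_v u$. We must also check that the slot convention of $v^\flat = g(v,\cdot)$ agrees with the slot in which $\nabla_u v$ appears. Without the symmetry of $g$ this would produce $g(\nabla_u v, v)$ versus $g(v,\nabla_u v)$, and only their average would be the exact term. As a cross-check, one could instead expand $L_v v^\flat$ directly through Cartan's identity~\eqref{eq.momch}, replace $[v,u]$ by $\nabla_v u - \nabla_u v$, and confirm that the same terms cancel.
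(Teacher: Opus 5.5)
Your proof is correct. It is not the paper's main argument but the ``shorter route'' the paper mentions only in passing, once before the lemma and once in the remarks after it.

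The paper never forms $\nabla_v v^\flat$. It proceeds in four steps:
\begin{enumerate}
\item It expands $v[g(w,v)]$ by the isometry property~\eqref{eq.isometry}.
\item It trades $\nabla_v w$ for $\nabla_w v + [v,w]$ by the torsion-free exchange~\eqref{eq.torsionfree}.
\item It identifies $g(\nabla_w v,v)$ with $\tfrac12\,\d[g(v,v)](w)$ by compatibility once more.
\item It substitutes into $(L_v v^\flat)(w) = v[g(v,w)] - g(v,[v,w])$, where the commutator terms cancel by the symmetry of $g$.
\end{enumerate}

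You instead factor through Proposition~\ref{prop.liecovgen}, which assigns each hypothesis to one term:
\begin{itemize}
\item torsion-freeness serves only to delete $\alpha(T(v,u))$;
\item compatibility alone, with no symmetry needed, makes $\flat$ commute with $\nabla_v$;
\item compatibility together with symmetry produces the exact term.
\end{itemize}
In your route the symmetry of $g$ therefore enters at exactly one identifiable step, the halving of $u[g(v,v)]$.

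The paper's version stays entirely at the level of vector fields and scalar pairings. As the paper stresses, it never lets the covariant derivative act on the $1$-form $v^\flat$, which fits its construction of $\nabla$ from a propagator acting on vectors. It is also independent of the extension of $\nabla$ to covariant tensors in Lemma~\ref{lem.nablatensor}, on which your argument depends through Proposition~\ref{prop.liecovgen}.

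Your version is shorter and more modular. It also yields Proposition~\ref{prop.liecovtorsion} at no extra cost: keeping the torsion term gives $g(v,T(v,u))$, which equals $g(T(v,u),v)$ by the symmetry of $g$.
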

	\begin{proof}
	Let $w\in \GTM$ be an arbitrary vector field. The isometry property~\eqref{eq.isometry} evaluated on the triple $(v,w,v)$ gives the rate of change of the pairing along the flow,
		\beq
		\label{eq.pairingrate}
		v\left[g(w,v) \right] = g\left(\nabla_v w, v \right) + g\left(w, \nabla_v v \right).
		\eeq
	The torsion-free condition~\eqref{eq.torsionfree} exchanges the derivative in the first term at the cost of a commutator,
		\beq
		\label{eq.exchange}
		g\left(\nabla_v w, v \right) = g\left(\nabla_w v, v \right) + g\left(\left[v,w\right], v \right),
		\eeq
	while the isometry property, now evaluated on the triple $(w,v,v)$, identifies the exchanged term as half the differential of the squared magnitude,
		\beq
		\label{eq.gradspeed}
		g\left(\nabla_w v, v \right) = \frac{1}{2}\, w\left[g(v,v) \right] = \frac{1}{2}\, \d\left[g(v,v)\right](w),
		\eeq
	where the last equality is the definition of the differential of a function, $\d f(w) = w(f)$. Assembling the three relations \eqref{eq.pairingrate}--\eqref{eq.gradspeed} we obtain
		\beq
		\label{eq.rateassembled}
		v\left[g(w,v) \right] = \frac{1}{2}\, \d \left[g(v,v) \right](w) + g\left(\left[v,w\right], v\right) + g\left(w, \nabla_v v\right).
		\eeq
	On the other hand, the action \eqref{eq.dp} of the Lie derivative on the 1-form $v^\flat$ reads
		\beq
		\label{eq.lieflat}
		\left(L_v v^\flat\right) (w) = v\left[v^\flat(w) \right] - v^\flat\left(\left[v,w\right] \right) = v\left[ g(v,w)\right] - g\left(v, \left[v,w\right]\right).
		\eeq
	Substituting the assembled rate \eqref{eq.rateassembled} into the evaluation \eqref{eq.lieflat}, the two commutator terms cancel by virtue of the symmetry of $g$, leaving
		\beq
		\label{eq.liecovw}
		\left(L_v v^\flat \right)(w) = \left[\left(\nabla_v v \right)^\flat + \frac{1}{2}\, \d\left[g(v,v) \right] \right](w).
		\eeq
	Since $w$ is arbitrary, the 1-form identity~\eqref{eq.liecov} follows.
	\end{proof}

Two features of the proof are worth noting. First, at no stage do we require the covariant derivative to act on the 1-form $v^\flat$ itself. The entire argument takes place at the level of vector fields and the scalar pairing, in harmony with our construction of $\nabla$ from the parallel propagator. Second, the cancellation of the commutator terms between the exchange \eqref{eq.exchange} and the evaluation \eqref{eq.lieflat} is precisely what spares us the subtleties of the original argument. Arnold and Khesin work with \emph{comoving} test fields -- those dragged along the flow without change, $\dot w = L_v w = \left[v,w\right] = 0$, cf. the time derivative \eqref{eq.Xdot} -- for which both commutators are absent from the outset, and must then extend the identity to arbitrary $w$ by rectifying the flow around every regular point of $v$ and inspecting the zeros of $v$ separately. The comoving fields nonetheless retain their physical appeal, for they represent the measuring devices of observers carried by the flow and, for them, the compatibility \eqref{eq.liecov} states how the change of the momentum co-vector is perceived from within the motion itself. In this guise the same lemma accounts for the shape of the Euler equation of an ideal fluid [Corollary 1.19 in Chapter IV of~\cite{arnold1998topological}]. A shorter route to the same identity is available, namely to evaluate the kinematic compatibility \eqref{eq.liecovgen} on the 1-form $v^\flat$, whose three terms are then the lowered acceleration, half the differential of the squared magnitude and a torsion contribution that vanishes here.

\emph{How essential is the torsion-free condition to the compatibility~\eqref{eq.liecov}?} Not essential at all, and the identity that replaces it when we drop the condition is explicit enough to be recorded in its own right.

	\begin{prop}[Lie--covariant compatibility with torsion]
	\label{prop.liecovtorsion}
	Let $\bar\nabla$ be an affine connection compatible with $g$ -- cf. the isometry property~\eqref{eq.isometry} -- carrying an arbitrary torsion $T$ [cf. Definition~\ref{def.torsion}]. Then, for every vector field $v\in \GTM$,
		\beq
		\label{eq.liecovtorsion}
		L_v\left(v^\flat\right) = \left(\bar\nabla_v v \right)^\flat + \frac{1}{2}\, \d\left[g(v,v) \right] + g\left(T(v, \cdot\,), v\right).
		\eeq
	The compatibility \eqref{eq.liecov} of Lemma~\ref{lem.liecov} is the particular case of vanishing torsion.
	\end{prop}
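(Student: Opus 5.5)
The shortest route is to specialise the general Lie--covariant compatibility of Proposition~\ref{prop.liecovgen} to the $1$-form $\alpha = v^\flat = g(v,\cdot)$, and then use compatibility of $\bar\nabla$ with the symmetric $g$ to identify two of the resulting terms. Proposition~\ref{prop.liecovgen} owes nothing to the metric and holds for an arbitrary connection with its torsion. With $\bar\nabla$ in place of $\nabla$ it gives, for every $u\in\GTM$,
\beq
\left(L_v v^\flat\right)(u) = \left(\bar\nabla_v v^\flat\right)(u) + g\left(v,\bar\nabla_u v\right) + g\left(v, T(v,u)\right).
\eeq
The three terms on the right are then handled in turn.

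For the first term, apply the covariant-derivative rule~\eqref{eq.nablatensor} at $k=1$ to $v^\flat$. This gives $\left(\bar\nabla_v v^\flat\right)(u) = v\left[g(v,u)\right] - g\left(v,\bar\nabla_v u\right)$. The isometry property~\eqref{eq.isometry} on the triple $(v,v,u)$ rewrites $v\left[g(v,u)\right]$ as $g\left(\bar\nabla_v v,u\right) + g\left(v,\bar\nabla_v u\right)$. The term $g(v,\bar\nabla_v u)$ therefore cancels, leaving $\left(\bar\nabla_v v\right)^\flat(u)$. In words, compatibility lets lowering commute with $\bar\nabla_v$.

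For the second term, the isometry property on the triple $(u,v,v)$, together with the symmetry of $g$, gives $g(v,\bar\nabla_u v) = \tfrac12\, u\left[g(v,v)\right] = \tfrac12\,\d\left[g(v,v)\right](u)$. This step is exactly the analogue of~\eqref{eq.gradspeed}.

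For the third term, symmetry of $g$ gives $g(v,T(v,u)) = g(T(v,u),v)$, which is the stated contraction $g\left(T(v,\cdot\,),v\right)$ evaluated on $u$. Since $u$ is arbitrary, the three identifications assemble into~\eqref{eq.liecovtorsion}. Setting $T=0$ recovers~\eqref{eq.liecov}, with $\bar\nabla$ then the Levi-Civita connection by uniqueness.

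The only delicate point is the first step. The paper built $\bar\nabla$ on vector fields through the propagator, so its action on the $1$-form $v^\flat$ is available only through Lemma~\ref{lem.nablatensor}. I would therefore, as a fallback, reproduce the vector-field-only argument of Lemma~\ref{lem.liecov}: start from~\eqref{eq.pairingrate} for $\bar\nabla$, but exchange derivatives using the full definition~\eqref{eq.torsion}. That is, replace~\eqref{eq.exchange} by $g(\bar\nabla_v w,v) = g(\bar\nabla_w v,v) + g([v,w],v) + g(T(v,w),v)$. The commutators then cancel against~\eqref{eq.lieflat} exactly as before, and the one new term is the torsion contribution. The main care is the sign and slot order of $T(v,w)$ versus $T(w,v)$, which has to match the statement's $T(v,\cdot\,)$.
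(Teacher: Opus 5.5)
Your proposal is correct. Your fallback is precisely the paper's proof: it reruns Lemma~\ref{lem.liecov} with the exchange \eqref{eq.exchange} replaced by $g(\bar\nabla_v w,v) = g(\bar\nabla_w v,v) + g([v,w],v) + g(T(v,w),v)$. That identity is read off from $\bar\nabla_v w = \bar\nabla_w v + [v,w] + T(v,w)$, which also settles the slot order you flagged, and the new term is carried untouched through the commutator cancellation against \eqref{eq.lieflat}.

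Your primary route, specialising Proposition~\ref{prop.liecovgen} to $\alpha = v^\flat$, is a genuinely different proof. It is the ``shorter route'' the paper mentions only in a remark after Lemma~\ref{lem.liecov} and does not adopt as the proof. All three of your identifications are right. The first uses compatibility alone, since $v$ occupies the first slot of $v^\flat$, while the second and third also need the symmetry of $g$.

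What your route buys is modularity. Once the metric-free comparison \eqref{eq.liecovgen} is available, the torsion contribution is visibly the term $\alpha\left(T(v,u)\right)$. Compatibility then serves only to commute lowering with $\bar\nabla_v$ and to convert $v^\flat(\bar\nabla_u v)$ into $\tfrac12\,\d[g(v,v)](u)$.

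What the paper's route buys is that $\bar\nabla$ never acts on a $1$-form. The argument therefore does not rest on the extension to forms in Lemma~\ref{lem.nablatensor}, and it stays with vector fields and scalar pairings, as the propagator-first construction prefers.

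Underneath, the two are the same computation. Inserting \eqref{eq.nablatensor} at $k=1$ into \eqref{eq.liecovgen} simply returns \eqref{eq.lieflat}, with the bracket traded for $\bar\nabla_v u - \bar\nabla_u v - T(v,u)$.
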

	\begin{proof}
	The argument of Lemma~\ref{lem.liecov} is repeated verbatim save at a single step. The torsion-free exchange \eqref{eq.torsionfree} is no longer available, and the definition \eqref{eq.torsion} of the torsion supplies in its place $\bar\nabla_v w = \bar\nabla_w v + \left[v,w\right] + T(v,w)$, so that the exchange \eqref{eq.exchange} acquires one further term,
		\beq
		\label{eq.exchangetorsion}
		g\left(\bar\nabla_v w, v \right) = g\left(\bar\nabla_w v, v \right) + g\left(\left[v,w\right], v \right) + g\left(T(v,w), v\right).
		\eeq
	Neither the identification \eqref{eq.gradspeed} of the exchanged term with half the differential of the squared magnitude nor the evaluation \eqref{eq.lieflat} of the Lie derivative involves the connection, so the added term is carried untouched through the assembly, and the cancellation of the two commutators proceeds as before by the symmetry of $g$. In place of \eqref{eq.liecovw} we are left with $\left(L_v v^\flat\right)(w) = \left(\bar\nabla_v v\right)^\flat(w) + \tfrac{1}{2}\,\d\left[g(v,v)\right](w) + g\left(T(v,w),v\right)$, and $w$ is arbitrary.
	\end{proof}

The compatibility \eqref{eq.liecov} therefore survives some torsions and not others, and the ones it survives form a class that the polarisation of Lemma~\ref{lem.polarisation} identifies exactly.

	\begin{corollary}[Admissible torsions]
	\label{cor.antisym}
	The compatibility \eqref{eq.liecov} holds for $\bar\nabla$ if and only if
		\beq
		\label{eq.admissible}
		g\left(T(v,w),v \right) = 0 \quad \text{for every} \quad v,w \in \GTM,
		\eeq
	and the condition~\eqref{eq.admissible} holds if and only if the lowered torsion $g\left(T(u,w),z\right)$ is \emph{totally antisymmetric}.
	\end{corollary}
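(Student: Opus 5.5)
The plan has two steps. First I compare Proposition~\ref{prop.liecovtorsion} with Lemma~\ref{lem.liecov}; then I apply the polarisation of Lemma~\ref{lem.polarisation} to a trilinear form assembled from the torsion.

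For the first equivalence, I subtract the identity \eqref{eq.liecovtorsion} from \eqref{eq.liecov}. The compatibility \eqref{eq.liecov} holds for $\bar\nabla$ exactly when the $1$-form $g\left(T(v,\cdot\,),v\right)$ vanishes for every $v$. Evaluating that form on an arbitrary $w$ gives the condition \eqref{eq.admissible}. Since $T$ and $g$ are tensorial, the condition is pointwise, and it is enough to check it on vectors at each event.

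For the second equivalence, I define the lowered torsion $\tau(u,w,z) \equiv g\left(T(u,w),z\right)$. It is antisymmetric in its first two slots by Definition~\ref{def.torsion}. The converse direction is immediate, because total antisymmetry gives $\tau(v,w,v) = -\tau(v,v,w) = 0$.

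For the forward direction I would fix $w$ and look at the bilinear form $B_w(u,z) \equiv \tau(u,w,z)$. Condition \eqref{eq.admissible} says that $B_w$ vanishes on the diagonal. Lemma~\ref{lem.polarisation} at $k=2$ then makes its symmetrisation vanish, so $\tau(u,w,z) = -\tau(z,w,u)$, and $\tau$ is antisymmetric in slots $1$ and $3$. The transpositions $(12)$ and $(13)$ generate the symmetric group $S_3$, so $\tau$ changes sign under every transposition and is totally antisymmetric.

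I expect the main obstacle to be bookkeeping rather than depth. One has to confirm that only the symmetry of $g$ is used, as declared at the opening of the Newton's Second Law subsection, and that the slot convention in \eqref{eq.liecovtorsion} really places $v$ in slots $1$ and $3$ of $\tau$. To settle this I would reread the term $g\left(T(v,w),v\right)$ in the proof of Proposition~\ref{prop.liecovtorsion}.
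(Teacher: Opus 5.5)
Your proposal is correct and follows the paper's proof essentially step for step. The first equivalence is read off from \eqref{eq.liecovtorsion}, whose only torsion-bearing term is $g\left(T(v,\cdot\,),v\right)$. The second fixes $w$, polarises $B_w$ by Lemma~\ref{lem.polarisation} to obtain antisymmetry in the first and third slots, and concludes because the transpositions $(12)$ and $(13)$ generate $S_3$. Your slot check is also right: evaluated on $w$, the last term of \eqref{eq.liecovtorsion} is exactly $\tau(v,w,v)$, and the polarisation step uses no symmetry of $g$ at all, since that symmetry was spent earlier, in the commutator cancellation of Proposition~\ref{prop.liecovtorsion}.
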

	\begin{proof}
	The first equivalence is immediate from the compatibility with torsion \eqref{eq.liecovtorsion}, whose last term is the only one in which the torsion appears. For the second, write $\Theta(u,w,z) \equiv g\left(T(u,w),z\right)$, antisymmetric in its first two arguments by Definition~\ref{def.torsion}. Fix $w$ and regard $B_w(u,z) \equiv \Theta(u,w,z)$ as a bilinear form. The condition~\eqref{eq.admissible} states that $B_w$ vanishes on the diagonal, so by Lemma~\ref{lem.polarisation} its symmetrisation vanishes,
		\beq
		\label{eq.thetaskew}
		\Theta(u,w,z) = -\,\Theta(z,w,u),
		\eeq
	and $\Theta$ is antisymmetric under the exchange of its first and third arguments as well as of its first two. A trilinear form antisymmetric under both exchanges is totally antisymmetric, for the two transpositions generate the whole permutation group of three letters. The converse is immediate, since a totally antisymmetric form vanishes whenever two of its arguments coincide.
	\end{proof}

Such torsion is moreover invisible to the trajectories -- it is the freedom of Proposition~\ref{prop.torsion}, whose rectilinear motions coincide with those of the Levi-Civita connection. That invisibility is narrower than it may appear, and the narrowing is exactly what leaves a torsion behind. Two connections share their geodesics as unparametrised curves whenever they are projectively equivalent, a freedom which strictly contains the antisymmetric one; they share them as affinely parametrised motions precisely when their difference tensor is antisymmetric. What excludes the wider freedom here is the Principle of Inertia itself, since the affine reparametrisations are the only ones under which an inertial motion remains inertial [Theorem~\ref{theo.inertia}], so that the parameter is pinned to within the very freedom the Principle allows. The distinction is not idle. In the axiomatic programme of Ehlers, Pirani and Schild the projective structure of free fall is combined with the conformal structure of light propagation to yield a torsion-free connection, and Wheeler has lately argued that the torsion-freeness obtained there follows from restricting the reparametrisations admitted, as an arbitrary connection affords greater freedom~\cite{wheeler2025geometry}. The two constructions agree upon the underdetermination and part company upon what it leaves behind. There the class of reparametrisations is enlarged, and the torsion the enlargement generates is of the trace type, built from a gradient and the identity, so that its torsion $1$-form does not vanish. Here the class is pinned to the affine reparametrisations by the Principle of Inertia, and the torsion which survives is the totally antisymmetric one of Corollary~\ref{cor.antisym}. The two residues are complementary, for the trace type is exactly what the admissibility~\eqref{eq.admissible} of Corollary~\ref{cor.antisym} excludes, while the totally antisymmetric part is untouched by a change of parameter. The parallel with Proposition~\ref{prop.killing} is worth noting. Each demand we have imposed erases a symmetric part of the corresponding tensor -- inertia, the symmetrised derivative \eqref{eq.killing} of the metric; the compatibility \eqref{eq.liecov}, all but the totally antisymmetric part of the torsion -- and torsion-freeness proper is the canonical choice within the residual freedom, the only one requiring no additional tensor field on the spacetime manifold.

\begin{theo}[Newton's Second Law]
\label{theo.nsl}
Let $(M,g,\nabla)$ be a Riemannian (resp. indefinite) manifold, where $\nabla$ is the Levi-Civita connection of $g$. Let $\gamma:I \subset \mathbb{R}\longrightarrow M$ be a path whose tangent vector field  $v$ describes the motion of a particle of constant mass $m$. Then
	\beq
	\label{eq.nsl}
	F - m\, a^\flat = \frac{1}{2} \d \left(\iota_v \pi \right).
	\eeq
In particular, if $\gamma$ is parametrised by its arc-length we have
	\beq
	\label{eq.nslarc}
	F = m\, a^\flat.
	\eeq

\end{theo}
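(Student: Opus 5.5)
The plan is to obtain the identity \eqref{eq.nsl} directly from Lemma~\ref{lem.liecov}, using the constitutive definitions of momentum and force to convert each term. By Definition~\ref{def.force} and the commutator expression \eqref{eq.dp}, the force is $F = L_v \pi$. Since the mass is a constant positive number (Definition~\ref{def.mass}), it commutes with the Lie derivative, so $F = L_v(m\,v^\flat) = m\,L_v(v^\flat)$ by Definition~\ref{def.mom}. Lemma~\ref{lem.liecov} then gives
\[
F = m\left(\nabla_v v\right)^\flat + \tfrac{m}{2}\,\d\left[g(v,v)\right].
\]
Using Definition~\ref{def.acceleration}, the first term is $m\,a^\flat$. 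For the second term, I would rewrite $m\,g(v,v)$ through the pairing \eqref{eq.flatpairing}: $\iota_v\pi = \pi(v) = m\,v^\flat(v) = m\,g(v,v)$. Because $m$ is constant, $\tfrac{m}{2}\,\d[g(v,v)] = \tfrac12\,\d(\iota_v\pi)$, and rearranging gives \eqref{eq.nsl}.

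For the arc-length statement \eqref{eq.nslarc}, I would read ``parametrised by arc length'' as meaning that $g(v,v)$ is constant, equal to $\pm1$ in the indefinite case, throughout the region filled by the velocity field. Then $\iota_v\pi = m\,g(v,v)$ is constant, its differential vanishes, and $F = m\,a^\flat$. To justify that such a parametrisation exists when $g(v,v)\neq 0$, I would invoke Proposition~\ref{prop.reparam}, taking $f = |g(v,v)|^{-1/2}$, so that the rescaled field has unit squared magnitude. This is exactly the ``suitable choice of clock'' that removes the exact term.

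The main obstacle is not the computation but this interpretive point. The differential $\d(\iota_v\pi)$ depends on $g(v,v)$ in directions transverse to the curve, since the Lie derivative requires $v$ on a neighbourhood. Normalising the speed along a single worldline only kills $v[g(v,v)]$, not the full $1$-form. So I would state explicitly that arc-length parametrisation is imposed on every curve of the congruence, meaning $g(v,v)$ is constant on the region, and not merely along $\gamma$. I would also set aside null velocities, which cannot be normalised and where the arc-length clause does not apply.
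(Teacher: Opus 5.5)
Your proposal is correct and follows the paper's proof essentially verbatim: $F = L_v\pi = m\,L_v(v^\flat)$ by the constancy of $m$, then Lemma~\ref{lem.liecov}, then the identifications $a^\flat = g^\flat(\nabla_v v)$ and $\iota_v\pi = m\,g(v,v)$, with the exact term vanishing once the speed is constant. Your remark that the arc-length clause must hold on every integral curve of $v$ is a worthwhile sharpening of the statement: it makes $g(v,v)$ constant on the region (equal to $\pm 1$ in the indefinite case) and not merely along $\gamma$, which is exactly what the paper's proof tacitly uses when it speaks of \emph{the integral curves of} $v$ being parametrised by arc length.
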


\begin{proof}
Since the mass is constant, the momentum of Definition~\ref{def.mom} satisfies $L_v \pi = m\, L_v \left(v^\flat\right)$. The force \eqref{eq.force} is then given directly by the compatibility Lemma~\ref{lem.liecov},
	\beq
	\label{eq.nslproof1}
	F = L_v \pi = m \left(\nabla_v v \right)^\flat + \frac{m}{2}\, \d\left[g(v,v) \right] = m\, a^\flat + \frac{1}{2}\, \d\left(\iota_v \pi \right),
	\eeq
where we have recognised the lowered acceleration $a^\flat = g^\flat\left(\nabla_v v \right)$ and the pairing $\iota_v \pi = \pi(v) = m\, g(v,v)$. This is the identity~\eqref{eq.nsl}. Finally, when the integral curves of $v$ are parametrised by arc-length the speed is constant, $g(v,v) = 1$, hence $\iota_v \pi = m$ and the exact term in \eqref{eq.nsl} vanishes identically, leaving $F = m\, a^\flat$.
\end{proof}

The arc-length parametrisation brings the Second Law to its sharpest form, and with it the identification of free and inertial motion.

	\begin{corollary}[Force-free and inertial motion]
	\label{cor.forcefree}
	For a motion parametrised by arc length, $g(v,v) = 1$, the force is the lowered acceleration, $F = m\, a^\flat$; consequently the motion is force-free if and only if it is inertial,
		\beq
		\label{eq.forcefree}
		L_v \pi = 0 \quad\Longleftrightarrow\quad \nabla_v v = 0.
		\eeq
	\end{corollary}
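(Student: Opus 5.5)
The corollary follows from Theorem~\ref{theo.nsl} together with the non-degeneracy of $g$, and I would prove it in two short steps. First, with $g(v,v)=1$ along every integral curve of $v$, the pairing $\iota_v\pi = m\,g(v,v) = m$ is constant on the region the congruence fills, since $m$ is a constant and the arc-length normalisation holds at every event of that region. The exact term $\tfrac12\,\d(\iota_v\pi)$ in \eqref{eq.nsl} therefore vanishes, and we are left with $F = m\,a^\flat$, which is \eqref{eq.nslarc}. One point needs care: $\d$ acts on a function defined on an open set, not only along a single curve, so the normalisation must hold for the whole field $v$. This is exactly the congruence reading of the Lie derivative set out before Definition~\ref{def.force}, and I would say so explicitly.

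Second, for the equivalence \eqref{eq.forcefree}, the force-free condition of Definition~\ref{def.forcefree} is $F=L_v\pi=0$ as a $1$-form, so it is equivalent to $m\,(\nabla_v v)^\flat = 0$. Because $m>0$, this reduces to $g^\flat(\nabla_v v)=0$. Since $(M,g)$ is Riemannian or indefinite, $g$ is a metric, hence non-degenerate, and $g^\flat$ is a linear isomorphism at each event with inverse $g_\sharp$ given by \eqref{eq.gsharp}. Applying $g_\sharp$ yields $\nabla_v v = 0$, and the converse direction is immediate.

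Finally, to justify the word ``inertial'' rather than merely ``rectilinear'', I would check the uniformity condition \eqref{eq.uniformity}. Under arc-length parametrisation $g(v,v)=1$ is constant, so $L_v[g(v,v)]=0$ holds automatically. Alternatively, the Levi-Civita connection is compatible with $g$, so by \eqref{eq.magrate} rectilinearity already implies uniformity. In either case, $\nabla_v v=0$ is equivalent to inertial motion in the sense of Definition~\ref{def.inertia}.

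The only step needing attention is the use of non-degeneracy in the second step. Without it, a vanishing lowered acceleration would leave $\nabla_v v$ free to lie in the kernel of $g^\flat$, and force-free motion would be strictly weaker than inertial motion.
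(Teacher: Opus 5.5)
Your proof is correct and follows the paper's own argument. The arc-length normalisation makes $\iota_v\pi = m$ constant, so the exact term of Theorem~\ref{theo.nsl} drops out, and the injectivity of $g^\flat$ (with $m>0$) turns $m\,a^\flat = 0$ into $\nabla_v v = 0$. Your two additions are correct refinements that the paper leaves implicit: first, $g(v,v)=1$ must hold across the whole congruence for $\d(\iota_v\pi)$ to vanish, since constant speed along each curve alone would not suffice, as the shear example of Section~\ref{subsec.IM} shows; second, uniformity then holds automatically, so rectilinear means inertial.
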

	\begin{proof}
	By the Second Law \eqref{eq.nsl} the exact term $\tfrac12\,\d\left(\iota_v \pi\right)$ vanishes when the speed is constant, leaving $F = L_v \pi = m\, a^\flat$. Since $g^\flat$ is injective, the lowered acceleration $a^\flat = g^\flat\left(\nabla_v v\right)$ vanishes precisely when $\nabla_v v$ does, hence \eqref{eq.forcefree}.
	\end{proof}

Away from arc length the force and the lowered acceleration part company by the exact term, and force-free motion -- the vanishing of $L_v \pi$ -- no longer coincides with inertia.

The Second Law is thus a statement about the compatibility of the two derivatives at our disposal. The force -- the dynamical rate of change of the momentum, measured by the Lie derivative -- differs from the mass times the lowered acceleration -- the kinematical rate of change of the velocity, measured by the covariant derivative -- by an exact term which a suitable choice of clock removes. This is the precise realisation of the claim we made in the Introduction, that is, the link between force and acceleration is nothing but the relation between the Lie and the covariant derivative for a torsion-free connection compatible with the metric.

In local coordinates the relation~\eqref{eq.nslarc} is a set of $n$ second order ordinary differential equations for the curve coordinates $x^i = x^i(t)$, and it fully characterises the fundamental problem of Newtonian mechanics, that is, given the forces, determine the motion. The exact term in the identity~\eqref{eq.nsl} has a transparent kinematic meaning. Since $\iota_v \pi = m\, g(v,v)$, it vanishes precisely for those parametrisations rendering the speed constant -- the uniform motions singled out by the Principle of Inertia (Theorem~\ref{theo.inertia}) -- and, in particular, for the arc-length parametrisation of a Riemannian $(M,g)$.

The same contraction gives the term a physical reading. It is the differential $\d\left[\tfrac12\, m\, g(v,v)\right]$ of a single scalar, and in the reading of the construction in which $g$ measures spatial magnitudes and the curve parameter is the Newtonian time, that scalar is the kinetic energy of the particle. The term vanishes under arc length precisely because the kinetic energy is then constant. We record the identification in the constitutive register, since it depends upon that reading of $g$ and upon nothing established here.

We established the Second Law under the torsion-free condition~\eqref{eq.torsionfree}, and yet the compatibility on which it rests survives the relaxation of that condition in the explicit form \eqref{eq.liecovtorsion}. The law itself therefore generalises, at the cost of a single additional term assembled from the torsion and the velocity alone.

	\begin{mydef}[Torsion 1-form]
	\label{def.torsionform}
	Let $\bar\nabla$ be an affine connection on $(M,g)$ with torsion $T$ [cf. Definition~\ref{def.torsion}] and let $v$ be the velocity of a motion. The \emph{torsion 1-form} of the motion is the element $\tau_v \in \GTsM$ whose action on an arbitrary vector field $w$ is
		\beq
		\label{eq.torsionform}
		\tau_v(w) \equiv g\left(T(v,w),v \right).
		\eeq
	\end{mydef}

Since we build it from a tensor, the torsion 1-form at an event depends on the velocity there and on nothing else about the motion, and it is homogeneous of the second degree in that velocity.

	\begin{theo}[Generalised Second Law]
	\label{theo.nslgen}
	Let $(M,g)$ be a metric manifold and $\bar\nabla$ a connection compatible with $g$ -- cf. the isometry property~\eqref{eq.isometry} -- carrying an arbitrary torsion $T$. Let $\gamma:I \subset \mathbb{R}\longrightarrow M$ be a path whose tangent vector field $v$ describes the motion of a particle of constant mass $m$, with acceleration $a = \bar\nabla_v v$. Then
		\beq
		\label{eq.nslgen}
		F - m\, a^\flat = \frac{1}{2} \d \left(\iota_v \pi \right) + m\, \tau_v.
		\eeq
	In particular, if $\gamma$ is parametrised by its arc-length we have
		\beq
		\label{eq.nslgenarc}
		F = m\left(a^\flat + \tau_v \right).
		\eeq
	\end{theo}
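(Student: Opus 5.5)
The proof will follow the route of Theorem~\ref{theo.nsl}, with Proposition~\ref{prop.liecovtorsion} in place of Lemma~\ref{lem.liecov}. The momentum of Definition~\ref{def.mom} is $\pi = m\,v^\flat$, and the mass is constant, so the Lie derivative passes through the factor and the force of Definition~\ref{def.force} is $F = L_v\pi = m\,L_v\!\left(v^\flat\right)$. The compatibility with torsion \eqref{eq.liecovtorsion} requires only that $\bar\nabla$ be compatible with $g$ and that $g$ be symmetric. Both hold here, since $(M,g)$ is a metric manifold. Applying it gives
\beq
F = m\left(\bar\nabla_v v\right)^\flat + \frac{m}{2}\,\d\left[g(v,v)\right] + m\, g\left(T(v,\cdot\,),v\right).
\eeq

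Next I identify each term. The first is $m\,a^\flat$, because $a = \bar\nabla_v v$ by hypothesis. The second equals $\tfrac12\,\d\left(\iota_v\pi\right)$, because $\iota_v\pi = \pi(v) = m\,g(v,v)$ and $m$ is a constant that passes through $\d$. The third is $m\,\tau_v$ by Definition~\ref{def.torsionform}, since $\tau_v(w) = g\left(T(v,w),v\right)$ for every $w$. Substituting these and moving $m\,a^\flat$ to the left yields \eqref{eq.nslgen}.

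For the arc-length case I use $g(v,v)=1$ along the motion, exactly as in the proof of Theorem~\ref{theo.nsl}. Then $\iota_v\pi = m$ is constant and the exact term vanishes, which leaves $F = m\left(a^\flat + \tau_v\right)$, that is \eqref{eq.nslgenarc}.

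I expect no step to be a serious obstacle, since the work was done in Proposition~\ref{prop.liecovtorsion}. The one point to check is the exact term. For $\d\left[g(v,v)\right]=0$ we need $g(v,v)$ to be constant on a neighbourhood, not just along each curve, because the Lie derivative depends on the whole congruence. So I will read the arc-length hypothesis as $g(v,v)=1$ for the field $v$, as Theorem~\ref{theo.nsl} does implicitly. I will also note that $\tau_v$ is not affected by this choice, since it is tensorial and needs only the value of $v$ at each event.
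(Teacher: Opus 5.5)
Your proposal is correct and follows the paper's proof: you use the constancy of the mass to write $F = m\,L_v(v^\flat)$, apply Proposition~\ref{prop.liecovtorsion}, identify the three terms as $m\,a^\flat$, $\tfrac12\,\d(\iota_v\pi)$ and $m\,\tau_v$, and then remove the exact term under arc length. Your remark that $g(v,v)=1$ must hold for the whole field $v$, and not only along one curve, matches the paper's implicit reading when it says that the integral curves of $v$ are parametrised by arc length.
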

	\begin{proof}
	Since the mass is constant, the momentum of Definition~\ref{def.mom} satisfies $L_v \pi = m\, L_v\left(v^\flat\right)$, and the compatibility \eqref{eq.liecovtorsion} delivers the force directly,
		\beq
		\label{eq.nslgenproof}
		F = L_v \pi = m \left(\bar\nabla_v v \right)^\flat + \frac{m}{2}\, \d\left[g(v,v) \right] + m\, g\left(T(v,\cdot\,),v\right) = m\, a^\flat + \frac{1}{2}\, \d\left(\iota_v \pi \right) + m\, \tau_v,
		\eeq
	where we have recognised the lowered acceleration $a^\flat = g^\flat\left(\bar\nabla_v v\right)$, the pairing $\iota_v \pi = \pi(v) = m\, g(v,v)$ and the torsion 1-form \eqref{eq.torsionform}. This is the identity~\eqref{eq.nslgen}. When the integral curves of $v$ are parametrised by arc-length the speed is constant, $g(v,v) = 1$, hence $\iota_v \pi = m$ and the exact term vanishes identically.
	\end{proof}

The identity~\eqref{eq.nslgen} contains the Second Law \eqref{eq.nsl} as the particular case $T=0$. The torsion enters it through a single term, and that term is of a different nature from the exact one beside it. The exact term disappears with a suitable choice of clock, whereas the torsion 1-form is tensorial in the velocity and survives every parametrisation, arc-length included. This term, and this term alone, therefore decides whether the dynamical and the kinematic notions of a free particle still agree.

	\begin{corollary}[Force-free and inertial motion with torsion]
	\label{cor.forcefreegen}
	For motions parametrised by arc length, the force-free motions of $(M,g,\bar\nabla)$ coincide with its inertial motions if and only if the torsion 1-form \eqref{eq.torsionform} vanishes for every velocity, that is, if and only if $\bar\nabla$ satisfies the admissibility condition~\eqref{eq.admissible} of Corollary~\ref{cor.antisym}.
	\end{corollary}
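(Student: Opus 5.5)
The plan is to read the statement off the Generalised Second Law, Theorem~\ref{theo.nslgen}, in its arc-length form \eqref{eq.nslgenarc}, and treat the two directions separately. Along an arc-length motion the exact term drops out and $F = m\left(a^\flat + \tau_v\right)$ with $m>0$. Force-freeness therefore reads $a^\flat = -\tau_v$, while inertia reads $a = 0$; the latter is equivalent to $a^\flat = 0$ because $g^\flat$ is injective for a metric. Uniformity is automatic at arc length, since $g(v,v)=1$ is constant along the flow, so inertial means rectilinear here.

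\emph{Sufficiency.} Assume the admissibility condition \eqref{eq.admissible}. Then $\tau_v(w) = g\left(T(v,w),v\right) = 0$ for every $v,w$, so \eqref{eq.nslgenarc} collapses to $F = m\, a^\flat$. The argument of Corollary~\ref{cor.forcefree} then applies verbatim: $F=0$ if and only if $\bar\nabla_v v = 0$.

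\emph{Necessity.} Suppose the force-free and inertial arc-length motions coincide. Fix an event $p$ and a unit vector $v\vert_p$, meaning $g(v,v)\vert_p = 1$. Since $\bar\nabla$ is compatible with $g$, Proposition~\ref{prop.killing} supplies an inertial motion through $p$ in that direction. It is rectilinear, and uniform by compatibility, so we may normalise it to arc length. By hypothesis this motion is force-free, so $0 = F = m\left(0 + \tau_v\right)$, giving $\tau_v\vert_p = 0$ for every unit $v\vert_p$. Since $\tau_v$ is homogeneous of degree two in $v$, it vanishes for every $v\vert_p$ with $g(v,v)\neq 0$.

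The main obstacle is extending this to null velocities when $g$ is indefinite, and to the whole of $T_pM$ in general. I would handle it by continuity, since the non-null vectors are dense in $T_pM$ for any non-degenerate symmetric $g$. The map $v\mapsto \tau_v\vert_p$ is polynomial (cubic, as a function of $(v,w)$), so it vanishes identically on $T_pM$. This yields \eqref{eq.admissible}, and Corollary~\ref{cor.antisym} converts it into total antisymmetry of the lowered torsion.

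A subtlety to flag is that the force is a Lie derivative and so depends on the whole congruence. I would take the inertial motion through $p$ to be a local geodesic field, as built in the proof of Proposition~\ref{prop.killing}. The identity \eqref{eq.nslgenarc} holds for any such field, so nothing is lost.
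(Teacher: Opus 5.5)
Your proposal is correct and follows the paper's proof essentially step for step. Sufficiency collapses \eqref{eq.nslgenarc} to $F = m\,a^\flat$ and uses the injectivity of $g^\flat$. Necessity produces an inertial motion through Proposition~\ref{prop.killing}, normalises it to unit speed, reads $\tau_v\vert_p = 0$ off the force-free hypothesis, and extends by degree-two homogeneity and the density of the non-null vectors.

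One small repair. Normalising only to $g(v,v)=1$ lets homogeneity reach just the vectors with $g(v,v)>0$, so your claim that $\tau_v$ vanishes for every $v\vert_p$ with $g(v,v)\neq 0$ is not justified as written in indefinite signature. Either normalise to $g(v,v)=\pm 1$, as the paper tacitly does, or let your polynomial argument finish the job: a polynomial vanishing on the non-empty open set $\{g(v,v)>0\}$ vanishes identically.
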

	\begin{proof}
	Suppose the admissibility \eqref{eq.admissible} holds. The generalised law \eqref{eq.nslgenarc} then reduces to $F = m\, a^\flat$ and, since the musical map $g^\flat$ is injective, $L_v \pi = 0$ if and only if $\bar\nabla_v v = 0$, which is the statement of Corollary~\ref{cor.forcefree} for the connection $\bar\nabla$.

	Conversely, suppose the two notions coincide and fix an event $p\in M$ together with a direction at $p$. Compatibility implies the Killing condition~\eqref{eq.killing}, so through that event and direction there passes an inertial motion [cf. Proposition~\ref{prop.killing}], unique up to the affine freedom of Theorem~\ref{theo.inertia} and therefore normalisable to unit speed. Being inertial it is force-free by hypothesis, and the generalised law \eqref{eq.nslgenarc} leaves $\tau_v\vert_p = 0$. Since $\tau_v$ is homogeneous of the second degree in the velocity, it vanishes along every non-null direction at $p$. It is quadratic in $v\vert_p$, hence continuous, and the null cone is the zero set of a non-trivial quadratic form and so has dense complement, whence $\tau_v$ vanishes for every $v\vert_p \in T_p M$; as the event was arbitrary, the admissibility \eqref{eq.admissible} follows.
	\end{proof}

Corollary~\ref{cor.forcefreegen} recasts the torsion-free assumption as a physical demand instead of a technical convenience. Newton's Second Law does not require the torsion to vanish. It requires only the condition~\eqref{eq.admissible}, and that condition is precisely the one under which a particle subject to no force moves inertially. What the demand erases is the part of the torsion that would drive a free particle off its inertial trajectory; what it leaves untouched is, by Corollary~\ref{cor.antisym}, a totally antisymmetric object.

The condition is not new as geometry, and the credit for it belongs elsewhere. A metric connection preserves the geodesics of its metric precisely when its torsion is a $3$-form, which is Corollary 2.1 of Agricola and Friedrich~\cite{agricola2004holonomy} and is stated as a characterisation at survey level~\cite{agricola2006srni}, proved there for its own sake and accompanied by the explicit form $\nabla_X Y = \nabla^g_X Y + \frac{1}{2}T(X,Y,\cdot\,)$ of the connections concerned, the same family that Proposition~\ref{prop.theta} reaches from the opposite direction and the same one whose uniqueness and structure are worked out in the string-theoretic literature~\cite{friedrich2002parallel}. The same condition organises the metric-affine literature~\cite{hehl1995metric,hehl2007cartan}. Corollary~\ref{cor.antisym} thus contributes where the condition comes from rather than the geometry itself, for the condition arrives here from the Second Law together with the demand that force-free and inertial motion agree, and with no field equation and no gravitational coupling in sight. Einstein--Cartan theory reaches a torsion by the opposite route, making the spin of matter its source through a field equation~\cite{sciama1964physical,kibble1961lorentz,hehl1976general}. That equation is algebraic, fixing the torsion pointwise as a multiple of the spin density, so that the torsion there is a field upon the spacetime manifold which does not propagate and vanishes wherever the matter does. The torsion of this construction is a constitutive datum carried by the particle, and we obtain the spin from it rather than the reverse. The geometry has been available since Cartan~\cite{cartan1923varietes}, and that three derivations so differently motivated should converge upon the same algebraic condition is a check upon this one instead of a borrowing from them. In the following subsection we take up the antisymmetric residue and find a second constitutive relation in it.

\subsection{The second constitutive relation}
\label{sec.spin}

Inertial-structure equivalence (Theorem~\ref{theo.ise}) exhibited two data of the pair $(g,\nabla)$ that no free trajectory resolves, the scale of the bilinear form and the torsion of the connection. Dynamics has already disposed of the first. The constitutive morphism of Definition~\ref{def.mass} selected the scale and disclosed it as the inertial mass, and momentum and force descended from that single choice. We have so far set the torsion to zero. We now ask what dynamics makes of it, and find that the Second Law has answered already.

Corollary~\ref{cor.forcefreegen} has already fixed which torsions the dynamics tolerates, namely those whose torsion 1-form vanishes, and Corollary~\ref{cor.antisym} identified them as the connections whose lowered torsion $g\left(T(u,w),z\right)$ is totally antisymmetric. A totally antisymmetric trilinear form is a $3$-form, and we record that observation as the second constitutive datum a particle may carry.

	\begin{mydef}[Constitutive torsion]
	\label{def.consttorsion}
	Let $(M,g)$ be a metric manifold and $\nabla$ the Levi-Civita connection of $g$. An affine connection $\bar\nabla$ sharing the symmetric part of $\nabla$ [cf. Proposition~\ref{prop.torsion}] is \emph{admissible} if its torsion 1-form vanishes, that is, if it satisfies the condition~\eqref{eq.admissible} of Corollary~\ref{cor.forcefreegen}, and the $3$-form
		\beq
		\label{eq.theta}
		\Theta\left(u,w,z\right) \equiv g\left(T(u,w),z \right)
		\eeq
	is its lowered torsion. The \emph{constitutive torsion} of a particle is a $3$-form of this kind carried along its worldline, that is, a smooth assignment of an element of $\Lambda^3 T^*_{\gamma(t)}M$ to each event of the motion.
	\end{mydef}

Definition~\ref{def.consttorsion} is a constitutive choice of exactly the kind that Definition~\ref{def.mass} made for the bilinear form, and we intend it as nothing else. It selects a representative within a freedom that the description of motion has left open, and it does so particle by particle. It is emphatically not the statement that a spin generates a torsion. The familiar proportionality between a torsion and a spin density, which a reader arriving from Einstein--Cartan theory will expect, has no counterpart here. We assert no such relation and introduce no coupling constant of any strength.

	\begin{prop}[The residual freedom is a $3$-form]
	\label{prop.theta}
	The assignment $T \longmapsto \Theta$ of \eqref{eq.theta} is a bijection between the torsions of admissible connections and the $3$-forms on $M$.
	\end{prop}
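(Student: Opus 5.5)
The plan is to show that $T\longmapsto\Theta$ is well defined into $3$-forms, injective, and surjective. Surjectivity is the step needing a construction.

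\emph{Well defined.} Let $\bar\nabla$ be admissible. It satisfies the condition~\eqref{eq.admissible}, so Corollary~\ref{cor.antisym} makes $\Theta(u,w,z)=g\left(T(u,w),z\right)$ totally antisymmetric. Since $T$ is tensorial (Definition~\ref{def.torsion}) and $g$ is bilinear, $\Theta$ is $C^\infty(M)$-trilinear. It is smooth because $T$ and $g$ are, so $\Theta$ is a $3$-form.

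\emph{Injective.} The metric is non-degenerate, so $g^\flat$ is injective at each event. Suppose two admissible torsions give the same $\Theta$. Then $g\left(T(u,w)-T'(u,w),z\right)=0$ for every $z$, which forces $T(u,w)=T'(u,w)$. Equivalently, $T(u,w)=g_\sharp\left(\Theta(u,w,\cdot\,)\right)$, so $T$ is recovered from $\Theta$.

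\emph{Surjective.} Let $\Theta$ be an arbitrary $3$-form. Define the vector-valued tensor $T(u,w)\equiv g_\sharp\left(\Theta(u,w,\cdot\,)\right)$, which is antisymmetric in $u,w$, and set
\beq
\bar\nabla_u w \equiv \nabla_u w + \tfrac{1}{2}\,T(u,w),
\eeq
with $\nabla$ the Levi-Civita connection. This is the explicit form recorded after Corollary~\ref{cor.forcefreegen}. The checks, in order, are:

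\begin{enumerate}
\item $\bar\nabla$ is an affine connection in the sense of Definition~\ref{def.connection}. Adding a tensor preserves the function rule, the Leibniz rule, homogeneity and additivity.
\item Its torsion is $T$. Indeed $\bar\nabla_u w-\bar\nabla_w u-[u,w]=\tfrac12 T(u,w)-\tfrac12 T(w,u)=T(u,w)$, using that $\nabla$ is torsion-free and $T$ is antisymmetric.
\item It shares the symmetric part of $\nabla$. The added term is antisymmetric, so $\bar\Gamma^k_{(ij)}=\Gamma^k_{(ij)}$.
\item It is compatible with $g$, as Definition~\ref{def.consttorsion} implicitly requires through Theorem~\ref{theo.nslgen}. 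By Lemma~\ref{lem.nablag}, $(\bar\nabla_v g)(u,w)=-\tfrac12\left[\Theta(v,u,w)+\Theta(v,w,u)\right]=0$ by total antisymmetry.
\item It is admissible. We have $g\left(T(v,w),v\right)=\Theta(v,w,v)=0$, which is condition~\eqref{eq.admissible}.
\end{enumerate}

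Hence every $3$-form is the lowered torsion of an admissible connection.

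The main obstacle is the reading of ``the torsions of admissible connections''. The connection must also be shown to share the symmetric part of $\nabla$ and to remain metric. Otherwise distinct connections could carry the same torsion, and the map would fail to be a bijection onto connections. The statement, however, concerns torsions, not connections. Checks (3) and (4) settle this: within the class fixed by Definition~\ref{def.consttorsion}, the connection is determined by its torsion through $\bar\nabla=\nabla+\tfrac12 T$. So torsions, connections and $3$-forms correspond one to one.
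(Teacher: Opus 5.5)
Your proposal is correct and follows essentially the same route as the paper. You recover $T(u,w)=g_\sharp\left[\Theta(u,w,\cdot\,)\right]$ by non-degeneracy and exhibit the admissible connection $\nabla+\tfrac12 T$ (the paper's coefficients $\Gamma^k_{(ij)}+\tfrac12 T^k_{ij}$), whose compatibility with $g$ follows because the two $\Theta$ terms cancel by total antisymmetry; you simply spell out the injectivity and the connection axioms that the paper covers with ``mutually inverse by construction''.
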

	\begin{proof}
	That $\Theta$ is totally antisymmetric is the content of Corollary~\ref{cor.antisym}, the admissibility \eqref{eq.admissible} being precisely its hypothesis. Conversely, given a $3$-form $\Theta$ the non-degeneracy of $g$ determines a unique $T$ through the definition \eqref{eq.theta}, namely $T(u,w) = g_\sharp\left[\Theta(u,w,\cdot\,)\right]$ with $g_\sharp$ the inverse musical map \eqref{eq.gsharp}; it is antisymmetric in $(u,w)$ and satisfies the admissibility \eqref{eq.admissible}, since $g\left(T(v,w),v\right) = \Theta(v,w,v) = 0$. That this $T$ is the torsion of an admissible connection, and not merely a tensor of the right symmetry type, is settled by exhibiting one, the coefficients
		\beq
		\label{eq.barconnection}
		\bar\Gamma^k_{ij} = \Gamma^k_{(ij)} + \tfrac{1}{2}\, T^k_{ij}
		\eeq
	defining a connection whose symmetric part is that of $\nabla$ and whose torsion, by the coordinate expression of Definition~\ref{def.torsion}, is $T$. That connection is compatible with $g$, and the total antisymmetry of $\Theta$ is exactly what makes it so. The added term contributes to the derivative of the bilinear structure the two lowered pieces $-\tfrac{1}{2}\,\Theta_{ijk}$ and $-\tfrac{1}{2}\,\Theta_{ikj}$, which cancel one another, so the isometry property~\eqref{eq.isometry} survives the addition and the connection is admissible. The two assignments are mutually inverse by construction.
	\end{proof}

Kinematics fixes the connection only up to its symmetric part, and Proposition~\ref{prop.theta} identifies the residual freedom as a $3$-form. A particle is the selection of a representative within that freedom, and contracting the velocity into the object so selected is what produced the momentum from $G^\flat$.

The two constitutive relations part company here, and the difference is structural, as opposed to a matter of taste. The homothety class \eqref{eq.homothety} is a ray upon which $\mathbb{R}_{>0}$ acts simply transitively, so it possesses no distinguished element and the label $m$ that selects one is a pure ratio. The residual freedom in the connection is a vector space, which possesses an origin. Its label therefore ranges over the whole of $\mathbb{R}$ and not merely over its positive part, so that the second constitutive datum admits both a sign and an absolute zero, neither of which the first admits. Torsion-freeness is that zero, and it is canonical twice over, being the origin of the linear structure and the only choice demanding no further tensor upon $M$.

That the residual freedom is a vector space rather than a ray decides where the datum must live. A $3$-form fixed once and for all upon $M$ would deliver, at an event and for a given velocity, a spin determined by that velocity and a single number. The following proposition shows that a spin obeying the supplementary condition has more freedom than that, and the discrepancy is why the constitutive torsion must follow the path instead of furnishing the manifold.

	\begin{prop}[Freedom of the spin]
	\label{prop.spindof}
	Let $p\in M$, let $v\vert_p$ be non-null and let $\iota_v : \Lambda^3 T^*_p M \longrightarrow \Lambda^2 T^*_p M$ be the contraction. Its image is exactly the space of $2$-forms annihilating the velocity and its kernel has dimension $\binom{n-1}{3}$, so that
		\beq
		\label{eq.spindof}
		\dim\left\{S \in \Lambda^2 T^*_p M \; : \; \iota_v S = 0 \right\} = \binom{n-1}{2}.
		\eeq
	In four dimensions the constitutive torsion has four components at each event, the spin three, and the contraction one.
	\end{prop}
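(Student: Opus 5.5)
Because $v\vert_p$ is non-null, I will work with an adapted basis. Set $e_1 = v\vert_p$ and complete it to a basis $\{e_1,\dots,e_n\}$ of $T_pM$ with $e_2,\dots,e_n$ spanning the $g$-orthogonal complement $v^\perp$. This complement is a complement of $\mathrm{span}(v)$ exactly because $g(v,v)\neq 0$, and this is the only use of non-nullity. Let $\{\theta^1,\dots,\theta^n\}$ be the dual basis. Then $\Lambda^k T^*_pM$ splits as
\begin{equation*}
\Lambda^k T^*_pM = \left(\theta^1\wedge \Lambda^{k-1}W\right) \oplus \Lambda^k W,
\end{equation*}
where $W=\mathrm{span}\{\theta^2,\dots,\theta^n\}$ is the annihilator of $v$, of dimension $n-1$. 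The contraction $\iota_v$ kills $\Lambda^k W$, and it sends $\theta^1\wedge\beta$ to $\beta$, since $\theta^1(v)=1$ and $\beta(v,\cdot)=0$.

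The image of $\iota_v$ on $3$-forms is therefore exactly $\Lambda^2 W$. Every element of $\Lambda^2 W$ has the form $\iota_v(\theta^1\wedge\beta)$, and conversely $\iota_v\iota_v\Theta=\Theta(v,v,\cdot)=0$ by antisymmetry, as in Proposition~\ref{prop.frenkel}. A $2$-form $S$ satisfies $\iota_v S=0$ if and only if its $\theta^1\wedge W$ component vanishes, that is, if and only if $S\in\Lambda^2 W$. So the image coincides with the space of $2$-forms annihilating the velocity, and that space has dimension $\binom{n-1}{2}$, which is \eqref{eq.spindof}. The kernel of $\iota_v$ on $\Lambda^3T^*_pM$ is $\Lambda^3 W$, of dimension $\binom{n-1}{3}$. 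Rank--nullity then gives the consistency check $\binom{n}{3}=\binom{n-1}{2}+\binom{n-1}{3}$ (Pascal's rule).

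For $n=4$ these numbers are $\binom43=4$ components of $\Theta$, $\binom32=3$ spin components, and $\binom33=1$ for the kernel, which is the count in the statement. The last clause ("the contraction one") I will read as saying that one component of $\Theta$ is invisible to the contraction, namely the kernel $\Lambda^3W$.

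I expect the only delicate point to be the role of non-nullity. Strictly, the statement about the image holds for any $v\neq 0$, because one can take any $\theta^1$ with $\theta^1(v)=1$ and $W=\mathrm{Ann}(v)$, with no orthogonality needed. I will give the argument in that metric-free form and remark that non-nullity only supplies the canonical orthogonal splitting that the paper uses elsewhere.
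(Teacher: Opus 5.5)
Your proof is correct, and it uses the same splitting as the paper. The paper decomposes $T_pM=\mathbb{R}v\oplus v^\perp$ and identifies both the $2$-forms annihilating $v$ and the kernel of $\iota_v$ with forms on $v^\perp$. It then gets surjectivity by counting, $\binom{n}{3}-\binom{n-1}{3}=\binom{n-1}{2}$, which tacitly relies on the inclusion of the image in the target.

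You instead exhibit the preimage $\theta^1\wedge\beta$ of each $\beta\in\Lambda^2W$. Surjectivity is therefore constructive, and Pascal's rule serves as a check rather than as the step that closes the argument. The paper's count is shorter; yours makes both the right inverse and the inclusion $\iota_v\iota_v=0$ explicit.

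Your remark on non-nullity is also right. The condition $\iota_vS=0$ involves no metric, so any $\theta^1$ with $\theta^1(v)=1$ and $W=\mathrm{Ann}(v)$ will do. The proposition therefore holds for every $v\vert_p\neq 0$, null directions included. The paper's orthogonal splitting fails exactly when $v\in v^\perp$, so $g(v,v)\neq 0$ is needed only for the reading after the proposition, that the spin lies in the orthogonal complement of the velocity, and not for the counting.

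Your reading of ``the contraction one'' as the one-dimensional kernel matches the paper's remark that $\binom{n-1}{3}$ components are left to spare.
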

	\begin{proof}
	Decompose $T_p M = \mathbb{R}v \oplus v^\perp$, which the non-degeneracy of $g$ and $g(v,v)\neq 0$ permit. A $2$-form annihilating $v$ is determined by its restriction to $v^\perp$, hence the dimension \eqref{eq.spindof}. The contraction kills those $3$-forms whose arguments may all be taken in $v^\perp$, a space of dimension $\binom{n-1}{3}$, and $\binom{n}{3} - \binom{n-1}{3} = \binom{n-1}{2}$ exhausts the target, so the image is the whole of it.
	\end{proof}

A spin may therefore point anywhere in the orthogonal complement of the velocity, independently of where the particle is going, and this is what a spin must do. A constitutive torsion fixed upon the manifold could not supply it, since one number and a velocity would determine the answer. A constitutive torsion carried along the worldline supplies exactly it, its $\binom{n}{3}$ components at each event yielding the $\binom{n-1}{2}$ of the spin with $\binom{n-1}{3}$ to spare. The second constitutive datum is thus an attribute following the path, where the first was a scale selected once.

The counting is not the only argument against furnishing the manifold, and the second is the more telling. A $3$-form fixed upon $M$ would have to be transportable if it were to serve as a standard, and the natural demand is that it be parallel. Riemannian manifolds admitting a metric connection with non-vanishing parallel skew-symmetric torsion are a classified and very special family, the naturally reductive homogeneous spaces, the nearly K\"ahler and nearly parallel $G_2$ manifolds, the Sasakian and $3$-Sasakian manifolds and the twistor spaces over positive quaternion-K\"ahler manifolds among them~\cite{cleyton2021metric}. To furnish the manifold with the second constitutive datum would therefore be to legislate the geometry of the world in order to grant one particle a spin. To let the particle carry it is to legislate nothing at all.

	\begin{mydef}[Spin]
	\label{def.spin}
	Let $\gamma$ be a motion with velocity $v$ and let $\Theta$ be the constitutive torsion \eqref{eq.theta} of the particle traversing it. The \emph{spin} of the particle is the $2$-form
		\beq
		\label{eq.spin}
		S \equiv \iota_v \Theta .
		\eeq
	\end{mydef}

	\begin{prop}[Supplementary condition]
	\label{prop.frenkel}
	Let $\Theta$ be a $3$-form, let $v$ be the velocity of a motion and set $S = \iota_v\Theta$. Then $S$ annihilates that velocity,
		\beq
		\label{eq.frenkel}
		\iota_v S = 0.
		\eeq
	Equivalently, when $\Theta$ is the lowered torsion of an admissible connection, the endomorphism $A_v(u) \equiv T(v,u)$ is skew with respect to $g$ and has the velocity in its kernel.
	\end{prop}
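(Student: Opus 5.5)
The plan is to prove the first claim directly from Definition~\ref{def.interior} and the total antisymmetry of $\Theta$. The second claim, about the endomorphism $A_v$, will be obtained by lowering through $g$ and reading the same antisymmetry slot by slot. No result beyond Corollary~\ref{cor.antisym} and Proposition~\ref{prop.theta} should be needed.

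For \eqref{eq.frenkel}, I evaluate $\iota_v S$ on an arbitrary $u\in\GTM$. Applying the interior product twice gives
\[
(\iota_v S)(u) = S(v,u) = (\iota_v\Theta)(v,u) = \Theta(v,v,u).
\]
This vanishes because a $3$-form changes sign under exchange of its first two arguments, so it is zero whenever two arguments coincide. Since $u$ is arbitrary, $\iota_v S=0$. This holds for any $3$-form and any vector field. In particular it needs neither non-degeneracy nor symmetry of $g$, and it holds pointwise, so it applies along the worldline whether $\Theta$ is carried or fixed on $M$.

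For the equivalent formulation, I take $\Theta$ to be the lowered torsion of an admissible connection, so that $\Theta(u,w,z)=g(T(u,w),z)$ with $\Theta$ totally antisymmetric by Corollary~\ref{cor.antisym}. For $A_v(u)=T(v,u)$ I compute
\[
g(A_v u,w)=\Theta(v,u,w) = -\Theta(v,w,u) = -g(A_v w,u) = -g(u,A_v w),
\]
where the last step uses the symmetry of $g$, which is in force from Section~\ref{sec.dyn} onward. This shows $A_v$ is $g$-skew. Next, $A_v(v)=T(v,v)=0$ by the antisymmetry of $T$ in Definition~\ref{def.torsion}, so the velocity lies in the kernel of $A_v$. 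To tie the two readings together, I note that $S(u,w)=g(A_v u,w)$. Thus $S$ is the lowering of $A_v$, and $\iota_v S = g(A_v v,\cdot\,)=0$ is the same statement as $v\in\ker A_v$. In the reverse direction, skewness of $A_v$ is what makes the lowered form $S$ antisymmetric, that is, a genuine $2$-form.

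I do not expect a real obstacle. The only point needing care is the word ``equivalently''. The phrase $v\in\ker A_v$ implies $\iota_v S=0$ with no further assumption. The converse, recovering $A_v v=0$ from $\iota_v S=0$, requires injectivity of $g^\flat$, which holds on a metric manifold. I will state that dependence explicitly, so that the first claim is seen to be purely algebraic and the second to rest on the non-degeneracy already assumed in Definition~\ref{def.consttorsion}.
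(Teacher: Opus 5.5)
Your proof is correct and follows the paper's argument: the first claim is the vanishing of $\Theta(v,v,u)$ by antisymmetry, and the skewness of $A_v$ is read off $g(A_v u,w)=\Theta(v,u,w)$ using the symmetry of $g$. The only difference is cosmetic. You obtain $A_v v=0$ directly from $T(v,v)=0$, whereas the paper reads it off $\Theta(v,v,w)=0$ together with the non-degeneracy of $g$, which is exactly the converse direction you flag.
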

	\begin{proof}
	Directly from the interior product of Definition~\ref{def.interior}, $\iota_v S = \iota_v \iota_v \Theta$, whose value on a vector field $u$ is $\Theta(v,v,u) = 0$ by the antisymmetry of $\Theta$ in its first two arguments. For the second statement, the pairing $g\left(A_v u, w\right) = \Theta(v,u,w) = S(u,w)$ is antisymmetric in $u$ and $w$, hence $g\left(A_v u, w \right) = - g\left(u, A_v w\right)$; and $g\left(A_v v, w\right) = \Theta(v,v,w) = 0$ for every $w$, hence $A_v v = 0$ by the non-degeneracy of $g$.
	\end{proof}

The spin is thus the lowered form of an infinitesimal rotation carried along the worldline, and we do not impose the condition~\eqref{eq.frenkel} upon it, for it follows from the vanishing of the twofold interior product. That the velocity should lie in the kernel of a particle's own rotation is the geometric content of the supplementary conditions that classical spinning-particle mechanics must postulate.

Neither the algebra of Proposition~\ref{prop.frenkel} nor the conclusion drawn from it is new, and we concede the point at once and precisely. Relativistic spin hydrodynamics constructs the local spin density by contracting a totally antisymmetric spin tensor with the fluid velocity, and derives the condition~\eqref{eq.frenkel} from the twofold contraction by the argument used above~\cite{fang2026relativistic}, a construction prepared by earlier treatments of antisymmetric spin tensors~\cite{hongo2021relativistic,cao2022gyrohydrodynamics}. The spin current of a Dirac field is totally antisymmetric for a related reason, since it is the only irreducible part of a torsion which spin-half matter can excite, so that the Cartan equation of Einstein--Cartan theory ties a totally antisymmetric torsion algebraically to it~\cite{hehl1976general,hehl2007cartan}. A $3$-form in the company of a spin is thus thoroughly familiar, and we do not claim the identity~\eqref{eq.frenkel} as a discovery.

What we propose is the assignment rather than the algebra. In every one of those treatments the totally antisymmetric object is a property of matter, a spin current assembled from a field, and the geometry either receives it as a source or furnishes a background against which it moves. Here the $3$-form is none of these. It is the residual freedom in the connection which inertial motion cannot resolve [Theorem~\ref{theo.ise}], narrowed by the Second Law to precisely the totally antisymmetric case [Corollary~\ref{cor.antisym}], and then adopted as a constitutive datum of the particle. To the best of our knowledge that identification, of the geometric residue with the object whose contraction against the velocity is the spin, has not been proposed elsewhere. That the contraction should satisfy the supplementary condition identically is then a corollary of an algebra others have already exhibited, and we take the agreement of the two accounts as corroboration, as opposed to a claim. The condition~\eqref{eq.frenkel} which follows from the identification has a name and a long history, both worth recording, since what is new here is the status of the condition rather than the condition itself. Contracting the spin with the velocity of the body is the \emph{Frenkel--Mathisson--Pirani} condition, one of a family among which the relativistic mechanics of spinning particles must choose, and the Corinaldesi--Papapetrou and Tulczyjew--Dixon conditions are the principal alternatives; each choice selects a different representative worldline within the body, and none of them is forced by the geometry~\cite{costa2015center}. In the constitutive theory of a spinning fluid the same condition is listed frankly among the postulates~\cite{herrmann2000constitutive}. Here there is nothing to choose and nothing to postulate, since the spin is the contraction~\eqref{eq.spin} of a $3$-form with the very velocity against which we test the condition, so that \eqref{eq.frenkel} is an identity of the interior product and could not fail.

The degeneracy which attends the condition in its relativistic setting is absent here, and for the same reason. There the condition does not determine the worldline upon which one imposes it, since an entire disc of centroids satisfies it and generates the helical motions~\cite{costa2015center}, because a centroid is a moment of an extended distribution of matter~\cite{dixon1970dynamics} and every observer computes a different one, which is why a supplementary condition has to be imposed there at all~\cite{steinhoff2011canonical}. In the present construction the worldline is the primitive from which we build everything else and the spin is an attribute carried along it, so there is no centroid in question and nothing for the condition to fail to determine.

To ask how the spin changes along the motion we require the Lie derivative of a $2$-form. We adopt as its definition the Leibniz rule with respect to the contraction, in the same spirit in which we adopted Cartan's identity~\eqref{eq.momch} for a $1$-form,
	\beq
	\label{eq.lie2form}
	\left(L_v S\right)(u,w) = v\left[S(u,w) \right] - S\left(\left[v,u\right],w \right) - S\left(u,\left[v,w\right] \right),
	\eeq
which reduces to the commutator structure~\eqref{eq.dp} when we suppress one of the two arguments. The failure of the spin to be conserved along the flow is then the exact analogue of the force.

	\begin{mydef}[Torque]
	\label{def.torque}
	Let $S$ be the spin \eqref{eq.spin} of a particle along a motion with velocity $v$. The \emph{torque} is the $2$-form
		\beq
		\label{eq.torque}
		N \equiv \frac{\d}{\d t} S = L_v S .
		\eeq
	The motion is \emph{torque-free} if $N = 0$.
	\end{mydef}

\emph{How is the spin carried along the motion?} For the momentum a hypothesis settled the question, that of a constant mass, which in the present language reads $\nabla_v G^\flat = 0$, so that the constitutive morphism is parallel along the worldline. The constitutive torsion, following the path as opposed to furnishing the manifold, requires a hypothesis of the same kind, and a demand the first relation was too poor to make distinguishes the two candidates. A magnitude is available here, and it ought not to change. The transport that answers to that demand is not new. Fermi introduced it and Walker extended it~\cite{fermi1922sopra,walker1932relative}, it is standard in the relativistic literature~\cite{synge1960relativity}, and we postulate it here for the constitutive $3$-form rather than derive it, as one usually does, for a relativistic spin already defined.

	\begin{mydef}[Fermi--Walker transport]
	\label{def.fw}
	Let $\gamma$ be a motion of unit speed, $g(v,v) = \epsilon = \pm 1$, with acceleration $a = \nabla_v v$. The \emph{Fermi--Walker} derivative along $\gamma$ acts on a vector field by
		\beq
		\label{eq.fw}
		\D_{\rm FW} X \equiv \nabla_v X - \Omega(X), \qquad \Omega(X) \equiv \epsilon\left[g(v,X)\, a - g(a,X)\, v \right],
		\eeq
	and upon a covariant tensor as a derivation. A field with $\D_{\rm FW} = 0$ is said to be Fermi--Walker transported.
	\end{mydef}

The operator $\Omega$ is skew with respect to $g$, so Fermi--Walker transport preserves every pairing, and it carries the velocity into itself, $\D_{\rm FW}v = \nabla_v v - \Omega(v) = a - a = 0$. Along an inertial motion the acceleration vanishes with $\Omega$ and the derivative reduces to $\nabla_v$. These three properties characterise it, and they are exactly what a transport of the spin must possess.

Before adopting it we should say what kind of transport it is, for Definition~\ref{def.propagator} answers the question and the answer occupies a slot that definition left empty. There we required a propagator to be a family of bijections between tangent spaces, and we called it \emph{linear} if it respected the linear structure, \emph{parametrisation independent} if it belonged to the path rather than to the clock, and \emph{parallel} if it was both. Nothing in the manuscript has yet been linear without being parallel.

	\begin{prop}[The Fermi--Walker propagator]
	\label{prop.fwprop}
	The solutions of $\D_{\rm FW}X = 0$ along a motion of unit speed define a propagator $P^{\rm FW}_\gamma\left(t_0,t\right)$ in the sense of Definition~\ref{def.propagator}. It is linear and it is an isometry, but it is not parametrisation independent, and it is therefore not parallel. It differs from the parallel propagator by the rotation generated by $\Omega$, which annihilates every vector orthogonal to both the velocity and the acceleration and preserves the plane they span. It is moreover the unique linear isometric propagator which carries the velocity into the velocity and whose generator annihilates that orthogonal complement, at every event where the acceleration is non-null or vanishes.
	\end{prop}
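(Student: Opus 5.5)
The proof splits into five claims: existence and the propagator axioms, linearity and isometry, failure of parametrisation independence, the geometry of $\Omega$, and uniqueness. Each is reduced either to the ODE argument already used for parallel transport or to pointwise linear algebra at an event.

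\emph{Existence and the propagator axioms.} In a chart, $\D_{\rm FW}X = 0$ reads
\begin{equation*}
\dot X^k = -\Gamma^k_{ij}v^iX^j + \Omega^k{}_j X^j ,
\end{equation*}
which is a linear first order system with smooth coefficients along $\gamma$. As in Proposition~\ref{prop.uniquetransport}, the solution with prescribed $X(t_0)$ exists and is unique. Define $P^{\rm FW}_\gamma(t_0,t)$ as the solution map. Properties (1)--(3) of Definition~\ref{def.propagator} then follow from uniqueness: the identity at $t_0$, the composition law through the intermediate initial value, and invertibility by solving backwards. Property (4) holds by declaring the action on functions trivial. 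Linearity \eqref{eq.linprop} is immediate because the system is linear.

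\emph{Isometry.} First I check pointwise that $\Omega$ is $g$-skew:
\begin{equation*}
g(\Omega X,Y) = \epsilon\left[g(v,X)g(a,Y) - g(a,X)g(v,Y)\right],
\end{equation*}
which is antisymmetric in $X$ and $Y$. Then, for two Fermi--Walker transported fields, compatibility \eqref{eq.isometry} gives
\begin{equation*}
v[g(X,Y)] = g(\Omega X,Y) + g(X,\Omega Y) = 0 .
\end{equation*}

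\emph{Failure of parametrisation independence.} The definition presupposes unit speed, so admissible reparametrisations are the affine ones with $\alpha = \pm 1$. The point is that the generator is built from $\nabla_v v$ of the given clock. Under $\tilde v = f v$ with $f$ non-constant, the acceleration acquires the term $v[f]\,v$ from \eqref{eq.acctransf}. I would exhibit a flat example, a planar circle with a non-uniform clock, in which $\Omega$ built from the acceleration of that clock yields a different transported vector. Equivalently, the family \eqref{eq.fw} extended to arbitrary clocks rescales $\Omega$ non-homogeneously. So \eqref{eq.reparinv} fails for general $\sigma$, even though it holds for the unit-speed ones. This step depends on reading the definition outside unit speed, and that reading must be made explicit.

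\emph{Geometry of $\Omega$ and uniqueness.} If $g(v,X) = g(a,X) = 0$ then $\Omega X = 0$ directly from the formula. Since $g(v,a) = 0$ at unit speed by \eqref{eq.uniformorth}, one computes $\Omega v = \epsilon g(v,v)\,a = a$ and $\Omega a = -\epsilon g(a,a)\,v$, so the plane spanned by $v$ and $a$ is preserved. Moreover, $\nabla_v - \D_{\rm FW} = \Omega$ is the infinitesimal rotation separating the two propagators.

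For uniqueness, let another linear isometric propagator have generator $\Omega'$, with $\nabla_v X - \Omega' X$ defining it. Then:
\begin{itemize}
\item Isometry makes $\Omega'$ skew.
\item Carrying $v$ into $v$ forces $\Omega' v = a$.
\item Annihilating the orthogonal complement of $\{v,a\}$ fixes $\Omega'$ there.
\end{itemize}
Skewness then gives $g(\Omega' a, X) = -g(a,\Omega' X) = 0$ for $X$ in that complement, and
\begin{equation*}
g(\Omega' a, v) = -g(a,a), \qquad g(\Omega' a, a) = 0 .
\end{equation*}
When $a$ is non-null, the vectors $v$ and $a$ together with the complement span $T_pM$, so $\Omega' a = \Omega a$ and $\Omega' = \Omega$. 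When $a = 0$, skewness and $\Omega' v = 0$ with $\Omega'$ vanishing on $v^\perp$ give $\Omega' = 0$.

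I expect the main obstacle to be the parametrisation claim, because it requires making precise how $\D_{\rm FW}$ is read off unit speed. The uniqueness step needs only care with the non-null hypothesis, which is where the decomposition $T_pM = \mathrm{span}\{v,a\} \oplus \{v,a\}^\perp$ holds.
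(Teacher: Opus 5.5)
Your proposal is correct and follows the paper's proof essentially step for step: the solution operator of the linear system $\nabla_v X = \Omega(X)$ gives the propagator axioms and linearity, the skewness of $\Omega$ with compatibility gives the isometry, the behaviour of the acceleration under $v \mapsto f v$ gives the parametrisation claim, and pairing the competing generator's value on $a$ against $v$, $a$ and $\mathrm{span}\{v,a\}^\perp$ gives uniqueness. Your caveat on the parametrisation step is apt and can be sharpened: the term $v[f]\,v$ actually cancels from $\Omega$, since only $g(v,\cdot)\,a - g(a,\cdot)\,v$ enters, so the literal formula gives $\Omega_{fv} = f^3\,\Omega_v$ and the transport equation becomes $\nabla_v X = f^2\,\Omega(X)$ --- which your circle example will exhibit, already for constant $f$ with $f^2 \neq 1$ --- whereas dividing by $g(v,v)$ instead of multiplying by $\epsilon$ gives $\Omega_{fv} = f\,\Omega_v$ and restores independence, so the failure belongs to the literal extension off unit speed rather than to the transport itself.
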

	\begin{proof}
	The condition $\D_{\rm FW}X = 0$ reads $\nabla_v X = \Omega(X)$, a linear ordinary differential equation along $\gamma$, whose solutions exist, are unique from each initial value and depend linearly upon it; the composition and inverse properties of Definition~\ref{def.propagator} are those of its solution operator, and the linearity \eqref{eq.linprop} is that of the equation. That the transport is an isometry follows from the skewness of $\Omega$, since along a transported pair the compatibility \eqref{eq.isometry} gives $v\left[g(X,Y)\right] = g\left(\Omega X, Y\right) + g\left(X,\Omega Y\right) = 0$. Parametrisation independence fails because $\Omega$ is assembled from the acceleration, and under $v \mapsto f v$ the acceleration reads $\nabla_{fv}(f v) = f\, v[f]\, v + f^2\, \nabla_v v$, so that the generator, and with it the transport, is altered; the condition~\eqref{eq.reparinv} is not met.

	For the decomposition, $\Omega(X) = \epsilon\left[g(v,X)a - g(a,X)v\right]$ vanishes whenever $X$ is orthogonal to both $v$ and $a$ and carries the plane $\mathrm{span}\{v,a\}$ into itself, so it generates a rotation in that plane and the solution of $\nabla_v X = \Omega(X)$ is the parallel transported field acted upon by the flow of $\Omega$. As for uniqueness, let $\nabla_v X = A(X)$ define another such propagator. Isometry makes $A$ skew, and a skew endomorphism annihilating $\mathrm{span}\{v,a\}^\perp$ has its image in $\mathrm{span}\{v,a\}$. Carrying the velocity into the velocity forces $A(v) = a$. Writing $A(a) = \alpha v + \beta a$, the unit speed gives $g(v,a) = 0$ and skewness gives $g(A(a),a) = 0$, which together leave $\beta\, g(a,a) = 0$. Where the acceleration is non-null this forces $\beta = 0$, while $g(A(a),v) = -g(a,A(v)) = -g(a,a)$ gives $\alpha = -\epsilon\, g(a,a)$. These are the values taken by $\Omega$, so $A = \Omega$. Where the acceleration vanishes, $A(v) = 0$ and skewness carries the image of $A$ into the line spanned by the velocity while making it orthogonal to that line, so $A$ vanishes with $\Omega$.
	\end{proof}

The failure recorded in Proposition~\ref{prop.fwprop} is not an accident of the example, and it costs the Fermi--Walker propagator its connection. Lemma~\ref{lem.chain} established the homogeneity $\nabla_{(fv)}w = f\,\nabla_v w$, and its proof used the parametrisation independence \eqref{eq.reparinv} at the very step where we identified the transports along $\gamma$ and its reparametrisation. Withdraw that attribute and homogeneity goes with it, and with homogeneity goes Definition~\ref{def.connection}. The Fermi--Walker derivative is thus a connection along the curve rather than an affine connection upon $M$, knowing how to transport vectors upon the worldline that generated it and nowhere else.

The distinction has a familiar physical reading. A gyroscope carried by an inertial observer keeps its axis fixed, and the transport describing it is the parallel propagator. A gyroscope carried by an accelerated observer cannot do so, since its axis must stay orthogonal to a velocity which is itself turning, and the transport describing that is the Fermi--Walker propagator, which is the standard transport law of a non-rotating gyroscope along an accelerated worldline~\cite{misner1973gravitation}. The rotation by which the two differ is the precession the accelerated observer measures, accumulated around a closed accelerated worldline in flat spacetime as the precession of Thomas. A spin is a gyroscope, and the particle carries it instead of the manifold, so it answers to the propagator belonging to its worldline rather than to the one belonging to the spacetime.

That the two constitutive objects should answer to different transports is less of a departure than it appears. Fermi--Walker transport preserves the metric, so $\D_{\rm FW}G^\flat = v[m]\, g^\flat = \nabla_v G^\flat$, and upon the first constitutive object the two transports agree exactly; the constancy of the mass is equally a statement of parallelism and of Fermi--Walker transport, and we never had to choose between them there. Only the second constitutive object, carrying a direction as well as a magnitude, can tell them apart, and that ability is why we must choose.

One feature of the Fermi--Walker propagator is worth recalling against the assumptions taken in the course of the kinematics. Its generator \eqref{eq.fw} is assembled from the bilinear form, from the velocity of the motion and from the acceleration which the connection assigns to that velocity, and every one of these was in hand already. A second connection upon $M$ would have been a second commitment about the world; the Fermi--Walker propagator commits us to nothing beyond the observer who carries it. It is observer dependent and structure free, and the two are not the same thing.

We may take stock of the construction as a whole upon the same principle. The spacetime manifold carries exactly what the kinematics declared, a propagator and a bilinear form, refined by the demands the dynamics has since imposed upon them and by nothing else. A particle carries two data and no more, a representative of the homothety class \eqref{eq.homothety} and a constitutive torsion along its worldline. Everything else -- the momentum, the force, the spin, the torque, and the very transport by which the spin is carried -- we manufacture out of those, and the manufacture calls for no further furniture. The minimality with which the section opened is therefore intact at its close, which is the only warrant a construction of this kind can give for the objects it introduces.

	\begin{mydef}[Transport of the constitutive torsion]
	\label{def.spinlaw}
	The constitutive torsion of a particle is Fermi--Walker transported along its worldline,
		\beq
		\label{eq.spinlaw}
		\D_{\rm FW}\Theta = 0 .
		\eeq
	\end{mydef}

The law~\eqref{eq.spinlaw} is a constitutive postulate and we label it as such. Parallel transport, $\nabla_v\Theta = 0$, is the other candidate and agrees with it upon every inertial motion; we reject it because it fails to preserve the magnitude of the spin under acceleration, as the discussion at the close of this section records.

The postulate is not an eccentric one, and the relativistic mechanics of spinning bodies had reached it long before. Under the Frenkel--Mathisson--Pirani condition, and in the absence of torque, the spin vector of a gyroscope is Fermi--Walker transported along its worldline~\cite{costa2015center}, so the pairing of the velocity condition with this transport is precisely the pairing that theory arrives at from its own balance laws. A third direction reaches it as well. A model of a spinning particle in Brans--Dicke spacetime, built upon a metric-compatible connection with torsion, yields a spin which is Fermi--Walker transported by the Levi-Civita connection along the particle's worldline~\cite{burton2008spinning}. That model is instructive here for the way it differs. Its torsion is the trace part, $T^a = e^a\wedge\d\varphi/\varphi$ for a scalar $\varphi$, whose torsion $1$-form does not vanish, so it is exactly the class which the admissibility \eqref{eq.admissible} excludes; and consistently with that exclusion, the free trajectories of the model do depart from the geodesics of its metric. The same transport law thus appears in the torsion class the Second Law forbids and in the one it spares, which is some evidence that the transport is answering to the accelerated worldline as opposed to the torsion carried along it.

	\begin{prop}[Transport of the spin]
	\label{prop.spintransport}
	Let $\Theta$ be a $3$-form defined along a motion of velocity $v$ with acceleration $a = \nabla_v v$, and set $S = \iota_v\Theta$. Then
		\beq
		\label{eq.spintransport}
		\nabla_v S = \iota_a \Theta + \iota_v \left(\nabla_v \Theta\right).
		\eeq
	\end{prop}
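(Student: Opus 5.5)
The identity \eqref{eq.spintransport} is a Leibniz rule for the contraction $S = \iota_v\Theta$, and I would prove it by evaluating both sides on a pair of arbitrary vector fields. Throughout I use the rule \eqref{eq.nablatensor} of Lemma~\ref{lem.nablatensor}, applied once to the $2$-form $S$ and once to the $3$-form $\Theta$.

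Fix $u,w\in\GTM$. Apply Lemma~\ref{lem.nablatensor} to the $2$-linear form $S$:
\beq
\left(\nabla_v S\right)(u,w) = v\left[S(u,w)\right] - S\left(\nabla_v u, w\right) - S\left(u,\nabla_v w\right).
\eeq
Now write $S(u,w) = \Theta(v,u,w)$, and apply the same rule to the $3$-linear form $\Theta$ with arguments $(v,u,w)$:
\beq
v\left[\Theta(v,u,w)\right] = \left(\nabla_v\Theta\right)(v,u,w) + \Theta\left(\nabla_v v,u,w\right) + \Theta\left(v,\nabla_v u,w\right) + \Theta\left(v,u,\nabla_v w\right).
\eeq
Substitute this into the previous line. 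The last two terms are $S(\nabla_v u,w)$ and $S(u,\nabla_v w)$, so they cancel against the corrections from the first application. What remains is
\beq
\left(\nabla_v S\right)(u,w) = \Theta(a,u,w) + \left(\nabla_v\Theta\right)(v,u,w),
\eeq
with $a=\nabla_v v$ as in Definition~\ref{def.acceleration}. By Definition~\ref{def.interior}, the right-hand side is $\left(\iota_a\Theta + \iota_v\nabla_v\Theta\right)(u,w)$. Since $u$ and $w$ are arbitrary, \eqref{eq.spintransport} follows.

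The one point needing care is that $\Theta$ and $v$ are given only along the motion, not on all of $M$. I would handle this by noting that Lemma~\ref{lem.nablatensor} is proved with a parallel frame along the integral curve, so it involves only values along $\gamma$. The test fields $u$ and $w$ likewise need only be defined along $\gamma$.

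The main obstacle is bookkeeping rather than ideas. Lemma~\ref{lem.nablatensor} must be applied with $v$ occupying a slot of $\Theta$, so that the derivative of that slot produces the acceleration term $\iota_a\Theta$; this term must not be dropped, since $v$ is not assumed rectilinear. Neither total antisymmetry nor admissibility of $\Theta$ is used, so the identity holds for any covariant $3$-tensor, as the statement allows.
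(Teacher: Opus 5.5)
Your proof is correct and follows the paper's own argument. The paper likewise applies \eqref{eq.nablatensor} once to $\Theta$ with $v$ in the first slot and once to $S$, and observes that the two expansions differ only by the term $\Theta(\nabla_v v,u,w) = (\iota_a\Theta)(u,w)$.
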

	\begin{proof}
	Evaluate the rule~\eqref{eq.nablatensor} on the $3$-form $\Theta$ with the velocity in its first slot,
		\beq
		\label{eq.spintransportproof}
		\left(\nabla_v\Theta\right)(v,u,w) = v\left[\Theta(v,u,w)\right] - \Theta\left(\nabla_v v,u,w\right) - \Theta\left(v,\nabla_v u,w\right) - \Theta\left(v,u,\nabla_v w\right),
		\eeq
	and on the $2$-form $S$, whose value is $S(u,w) = \Theta(v,u,w)$, to obtain $\left(\nabla_v S\right)(u,w) = v\left[\Theta(v,u,w)\right] - \Theta\left(v,\nabla_v u,w\right) - \Theta\left(v,u,\nabla_v w\right)$. The two differ by the single term $\Theta\left(\nabla_v v,u,w\right) = \left(\iota_a\Theta\right)(u,w)$, which is the transport identity~\eqref{eq.spintransport}.
	\end{proof}

The covariant rate of change of the spin is therefore the acceleration contracted into the constitutive torsion, corrected by whatever the transport law does to that torsion along the motion. We obtain the torque on comparing this with the Lie derivative, which is the whole content of the kinematic compatibility one degree higher.

	\begin{theo}[Torque equation of motion]
	\label{theo.torque}
	Let $\Theta$ be a $3$-form defined along a motion of velocity $v$, and set $S = \iota_v\Theta$ and $N = L_v S$. Then
		\beq
		\label{eq.torqueeq}
		N - \iota_a \Theta = \iota_v\left(\nabla_v\Theta\right) + S\left(\nabla_{\cdot\,} v + T(v,\cdot\,),\, \cdot\,\right) + S\left(\,\cdot\,, \nabla_{\cdot\,} v + T(v,\cdot\,)\right).
		\eeq
	\end{theo}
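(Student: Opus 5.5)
The identity combines two results already proved: the Lie--covariant compatibility of Proposition~\ref{prop.liecovgen} applied to the $2$-form $S$, and the transport identity of Proposition~\ref{prop.spintransport}. I first apply the general identity~\eqref{eq.liecovk} with $k=2$ and $A = S$. Its proof subtracts rule~\eqref{eq.nablatensor} from rule~\eqref{eq.lietensor} slot by slot and uses no symmetry of the form, so it applies verbatim to the antisymmetric $S$. It gives
\begin{equation*}
\left(L_v S\right)(u,w) = \left(\nabla_v S\right)(u,w) + S\left(\nabla_u v + T(v,u), w\right) + S\left(u, \nabla_w v + T(v,w)\right),
\end{equation*}
for arbitrary $u,w \in \GTM$. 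Two checks are needed here. First, the Lie derivative of a $2$-form adopted in~\eqref{eq.lie2form} must be exactly the case $k=2$ of~\eqref{eq.lietensor}, which it is by inspection. Second, the connection whose covariant derivative appears must be the one whose torsion $T$ enters; I take both to be the same $\nabla$ throughout, so that the definition~\eqref{eq.torsion} may be solved for $\nabla_v u$ exactly as in~\eqref{eq.liecovproof}.

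Next I substitute the transport identity~\eqref{eq.spintransport}, $\nabla_v S = \iota_a\Theta + \iota_v(\nabla_v\Theta)$, for the first term on the right. Then I move $\iota_a\Theta$ to the left. Since $N = L_v S$ by Definition~\ref{def.torque}, and the two remaining terms are by definition the expressions $S(\nabla_{\cdot}v + T(v,\cdot),\cdot)$ and $S(\cdot,\nabla_{\cdot}v + T(v,\cdot))$ evaluated on $(u,w)$, this gives~\eqref{eq.torqueeq} as an identity of $2$-forms, because $u$ and $w$ are arbitrary.

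The main obstacle is bookkeeping rather than any new idea. Proposition~\ref{prop.spintransport} defines $a = \nabla_v v$ with the same connection used in $\nabla_v\Theta$, so the acceleration and the torsion in Proposition~\ref{prop.liecovgen} must refer to one connection. If the paper intends $a$ to be the Levi-Civita acceleration while $T$ is the admissible torsion, the discrepancy has to be tracked. By Proposition~\ref{prop.torsion}, $\nabla_v v$ is unchanged by adding torsion, so $a$ is the same either way; but $\nabla_v S$ and $\nabla_v \Theta$ differ between the two connections. I would therefore state explicitly that every covariant derivative in the theorem belongs to the single connection whose torsion is $T$. The Levi-Civita reading is then the case $T=0$, and the proof is a two-line substitution.

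A secondary point is that $\Theta$ and $S$ are defined only along the worldline, whereas the Lie derivative needs $v$, and the tensor being dragged, on a neighbourhood. I would handle this as the paper does for the force: regard the motion as one curve of a congruence and extend $\Theta$ smoothly off the worldline. Then note that the right-hand side of~\eqref{eq.torqueeq} involves $\Theta$ only along the curve, through $\nabla_v$, while $\nabla_{\cdot}v$ uses the velocity field, which is already given on the region.
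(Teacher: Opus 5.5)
Your proof is correct and is the paper's own argument: apply the compatibility \eqref{eq.liecovk} at $k=2$ to the $2$-form $S$, substitute the transport identity \eqref{eq.spintransport} for $\nabla_v S$, and rearrange. Your insistence that $a$, $\nabla_v S$ and $\nabla_v\Theta$ all use the one connection whose torsion is $T$ usefully makes explicit what the paper leaves implicit. Extending $\Theta$ off the worldline is harmless but unnecessary, since $L_v S$ at an event uses $S$ only along the integral curve of $v$.
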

	\begin{proof}
	The compatibility \eqref{eq.liecovk} at $k=2$, applied to the $2$-form $S$, gives $\left(L_v S\right)(u,w) = \left(\nabla_v S\right)(u,w) + S\left(\nabla_u v + T(v,u),w\right) + S\left(u,\nabla_w v + T(v,w)\right)$. Substituting the transport~\eqref{eq.spintransport} of $S$ for the first term on the right and rearranging yields \eqref{eq.torqueeq}.
	\end{proof}

The identity~\eqref{eq.torqueeq} stands to the spin as the Second Law \eqref{eq.nslgen} stands to the momentum, and the correspondence is term by term. The torque takes the place of the force, the contraction $\iota_a\Theta$ of the acceleration into the constitutive object takes the place of $m\,a^\flat$, and the remaining terms are built from the covariant derivative of the velocity field and from the torsion, exactly as the exact term and the torsion 1-form were. There is one difference of substance. A choice of clock removes the exact term of the Second Law, whereas nothing in the parametrisation removes the terms on the right of \eqref{eq.torqueeq}, which are properties of the congruence carrying the motion.

	\begin{corollary}[Free motion carries its spin unchanged]
	\label{cor.spinparallel}
	Under the transport \eqref{eq.spinlaw} the spin is itself Fermi--Walker transported, $\D_{\rm FW}S = 0$. Consequently its magnitude and the supplementary condition~\eqref{eq.frenkel} are preserved along every motion, and along an inertial motion the spin is propagated parallel to the velocity,
		\beq
		\label{eq.spinfree}
		\nabla_v v = 0 \quad \Longrightarrow \quad \nabla_v S = 0 .
		\eeq
	\end{corollary}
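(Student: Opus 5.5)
The plan is to show first that $\D_{\rm FW}S = 0$, and then read off the three consequences. Definition~\ref{def.fw} declares that $\D_{\rm FW}$ acts on covariant tensors as a derivation. On a $k$-form $A$ it is therefore given by the analogue of the rule~\eqref{eq.nablatensor}:
\[
(\D_{\rm FW}A)(u_{(1)},\dots,u_{(k)}) = (\nabla_v A)(u_{(1)},\dots,u_{(k)}) + \sum_i A(u_{(1)},\dots,\Omega u_{(i)},\dots,u_{(k)}).
\]
The sign is fixed by requiring that $\D_{\rm FW}$ commute with the pairing and annihilate scalars' dependence on transported arguments. Since $S = \iota_v\Theta$, the derivation property applied to the contraction gives
\[
\D_{\rm FW}S = \iota_{\D_{\rm FW}v}\Theta + \iota_v(\D_{\rm FW}\Theta).
\]

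I would check this directly rather than quote it. Expand $(\nabla_v S)(u,w)$ through Proposition~\ref{prop.spintransport} as $\Theta(a,u,w) + (\nabla_v\Theta)(v,u,w)$. Then add the two $\Omega$-terms for $S$ and compare them with the three $\Omega$-terms in $(\D_{\rm FW}\Theta)(v,u,w)$. The extra slot contributes $\Theta(\Omega v,u,w) = \Theta(a,u,w)$, because $\Omega(v) = a$ at unit speed. This term cancels exactly the $\iota_a\Theta$ produced by Proposition~\ref{prop.spintransport}. Hence $\D_{\rm FW}S = \iota_v(\D_{\rm FW}\Theta)$, which vanishes by the postulate~\eqref{eq.spinlaw}.

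The bookkeeping of signs in this cancellation is the one step that needs care, and I expect it to be the main obstacle. It requires $\Omega(v) = \epsilon[g(v,v)a - g(a,v)v] = a$. That uses $g(v,v) = \epsilon$ together with $g(a,v) = 0$, and the latter follows from unit speed and the compatibility~\eqref{eq.isometry}. It also requires the convention that $\D_{\rm FW}$ on forms is the transpose action $-\Omega^{T}$. Because $\Omega$ is skew, that transpose agrees with lowering by $g$, so the action on forms is consistent with the action on vectors via $g^\flat$.

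For the magnitude, I would take the squared norm $g^{-1}\otimes g^{-1}(S,S)$, equivalently $\sum g(A_v e_i, A_v e_i)$ over a frame. Its derivative along $v$ equals its Fermi--Walker derivative, because $\D_{\rm FW}$ is a derivation preserving $g$; Proposition~\ref{prop.fwprop} gives $\D_{\rm FW}g = 0$. That derivative therefore vanishes.

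Preservation of the supplementary condition needs no argument: Proposition~\ref{prop.frenkel} gives $\iota_v S = 0$ identically at every event. For completeness, note also that $\D_{\rm FW}(\iota_v S) = \iota_{\D_{\rm FW}v}S + \iota_v\D_{\rm FW}S = 0$, since $\D_{\rm FW}v = 0$.

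Finally, along an inertial motion $a = 0$, so $\Omega = 0$ and $\D_{\rm FW}$ reduces to $\nabla_v$, as already noted after Definition~\ref{def.fw}. Then $\D_{\rm FW}S = 0$ reads $\nabla_v S = 0$, which is~\eqref{eq.spinfree}. Equivalently, Proposition~\ref{prop.spintransport} with $a = 0$ and $\nabla_v\Theta = \D_{\rm FW}\Theta = 0$ gives the same conclusion.
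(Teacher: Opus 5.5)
Your proof is correct and follows the paper's own argument: $\D_{\rm FW}S = \iota_{\D_{\rm FW}v}\Theta + \iota_v\D_{\rm FW}\Theta = 0$ by the derivation property, preservation of pairings for the magnitude, Proposition~\ref{prop.frenkel} for the supplementary condition, and $\Omega = 0$ along inertial motions. Your explicit check through Proposition~\ref{prop.spintransport}, where $\Theta(\Omega v,u,w) = \Theta(a,u,w)$ absorbs the $\iota_a\Theta$ term, only verifies the derivation step that the paper asserts; in indefinite signature your frame sum for $\lVert S\rVert^2$ needs the signs $g(e_i,e_i)$ of an orthonormal frame, but your invariant contraction with $g^{-1}\otimes g^{-1}$ is already correct.
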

	\begin{proof}
	The velocity is Fermi--Walker transported and so, by hypothesis, is the constitutive torsion; the derivative acting as a derivation, the contraction $S = \iota_v\Theta$ is transported with them. Fermi--Walker transport preserves every pairing, hence the magnitude of $S$, and it preserves $\iota_v S$, which vanishes identically in any case by Proposition~\ref{prop.frenkel}. Along an inertial motion the acceleration vanishes, the operator $\Omega$ of \eqref{eq.fw} with it, and $\D_{\rm FW}$ reduces to $\nabla_v$.
	\end{proof}

Corollary~\ref{cor.spinparallel} is the counterpart, for the second constitutive relation, of the identification of force-free with inertial motion. A particle left to itself does not merely travel a self-parallel worldline; it carries its spin along that worldline by the same parallel transport, and the two statements are one, since both are the vanishing of $\nabla_v$ upon the objects the particle carries. Away from inertial motion the two part company, since the momentum answers to $\nabla_v$ and the spin to $\D_{\rm FW}$, and the difference between the two derivatives is the operator $\Omega$ of \eqref{eq.fw}, built from the acceleration alone.

We may now summarise the construction compactly, and with it the constitutive data of a particle are complete. A particle is a worldline together with two constitutive choices, one within each of the two classes that inertial-structure equivalence declared invisible to free motion. The particle has no extension and yet, in this precise sense, it has an internal structure, since the second datum supplies the orientation which the first could not. We choose the bilinear form within its homothety class and the connection within the class of connections sharing a symmetric part, and contracting the velocity into each chosen object yields the two dual observables of the theory.

The momentum is the contraction of the velocity into $G^\flat$ and the spin the contraction of the velocity into $\Theta$; the force is the failure of the first to be conserved along the flow and the torque the failure of the second. Two constitutive objects, two dual observables, two failures, and one mechanism common to both.

We must now keep two registers firmly apart. Everything stated above is geometry, proved from the metric and the admissible connection alone, save the transport \eqref{eq.spinlaw} of the constitutive torsion, which is a postulate and was declared as one where it was made. What follows is constitutive throughout, and we declare it postulate and conjecture rather than consequence.

We have not supplied the scale of the second constitutive object. The mass of Definition~\ref{def.mass} is a pure ratio, since the dimension of mass resides in the representative $g^\flat$ against which we measure it, as the requirement that $\pi = \iota_v G^\flat$ carry units of momentum demands. The same requirement applied to $S = \iota_v \Theta$ fixes the dimension of $\Theta$, for if the spin is to carry units of action then $\Theta$ carries units of action divided by velocity, that is, of mass times length.

Granted the identification, inertial-structure equivalence acquires its full physical reading. Free motion is independent of the mass of the body that executes it, and independent of its spin. The first half becomes the universality of free fall once a gravitational coupling exists. The second is guaranteed by Theorem~\ref{theo.ise} rather than predicted, and a spin-polarised torsion balance~\cite{heckel2008preferred}, which compares the free fall of a polarised body with that of an unpolarised one, therefore tests the construction only in the sense that a departure would refute it. The Mathisson--Papapetrou force~\cite{mathisson1937neue,papapetrou1951spinning} is the standard prediction for such a departure of a spinning body from the geodesic, and the totally antisymmetric case at issue here is the one surveyed in~\cite{hehl2007cartan}. Data of that class have moreover been turned into direct bounds on the components of the torsion itself~\cite{kostelecky2008torsion}. Those bounds constrain a field upon the manifold, whereas $\Theta$ is carried along a worldline, so they reach the constitutive object of Definition~\ref{def.consttorsion} only through the transport law. A second route is interferometric rather than mechanical. A flux of torsion through a region registers as a topological phase between two coherent beams which enclose it, and that again asks for a field upon the manifold, whereas the phase of an accelerated beam is governed by Fermi--Walker transport~\cite{anandan1994topological} and does bear upon the law adopted here. One further circumstance is worth recording, although it belongs to a theory beyond this one. At the level of the action a torsion couples minimally to Dirac fields through the axial-vector current, so that only its totally antisymmetric part is felt by fundamental matter at all~\cite{valle2022nieh}. The part which the Second Law spares is thus the part which can be probed, and the algebraic selection of Corollary~\ref{cor.antisym} agrees with the coupling that quantum field theory reports. The construction of this section predicts no such departure at the kinematic level, and any deviation would be attributable to a coupling instead of to the inertial structure.

The rejection of parallel transport deserves its reason in full, for the two candidate laws differ only away from inertial motion and the difference is measurable. Suppose $\nabla_v\Theta = 0$, take the manifold four-dimensional and oriented, and let $\Theta$ be the dual of a parallel vector field $Z$. A short computation then gives $\lVert S\rVert^2 = g(v,v)\,\lVert Z\rVert^2 - g(v,Z)^2$, so that along a unit-speed motion the combination $\lVert S\rVert^2 + g(v,Z)^2$ is constant and the magnitude of the spin varies precisely as the worldline tips with respect to $Z$, and it is conserved if and only if $g(a,Z) = 0$. A parallel $3$-form is a preferred direction upon the spacetime manifold and the variation is a preferred-frame effect, of exactly the kind the experiments cited above were built to bound and which those bounds exclude at the levels they reach. The transport~\eqref{eq.spinlaw} suffers none of this, since it preserves the magnitude by construction, and we adopted it upon that ground.

\section{Closing remarks}
\label{sec.closing}

We asked where the spacetime of Newtonian dynamics comes from. It comes from the description of motion, one structure at a time. We have assembled it from a differentiable manifold, a law for transporting vectors along its curves and a bilinear form, admitting each at the point where the description could not proceed without it. That discipline did more than organise the results; it produced them. The usual course posits the connection alongside the bilinear form; we derived it from a transport, and the Principle of Inertia emerged as a theorem rather than an independent postulate. We admitted the bilinear form without symmetry, without non-degeneracy and without any relation to the connection, and could then isolate the exact demand that inertia places upon the pair. We let the Second Law run with an arbitrary torsion, and could then exhibit torsion-freeness as the canonical representative of a physical demand, as opposed to a convenience.

The geometric register is self-contained, and we prove every statement in it. Inertial motion is invariant under affine reparametrisations of the worldline and under these alone (Theorem~\ref{theo.inertia}). The existence of inertial motions through every event and every direction is equivalent to the bilinear form being a Killing tensor of the connection, a condition strictly weaker than compatibility (Proposition~\ref{prop.killing}). The inertial motions see the pair $(g,\nabla)$ only through the homothety class of $g$ and the symmetric part of $\nabla$ (Theorem~\ref{theo.ise}). The force differs from the mass times the lowered acceleration by an exact term which a suitable choice of clock removes (Theorem~\ref{theo.nsl}), and the identity survives an arbitrary torsion at the cost of one further term quadratic in the velocity (Theorem~\ref{theo.nslgen}). Requiring the force-free particle of the dynamics to be the inertial particle of the kinematics spares exactly the torsions whose lowered form is totally antisymmetric (Corollary~\ref{cor.forcefreegen}, characterised in Corollary~\ref{cor.antisym}). That residual freedom is precisely a $3$-form (Proposition~\ref{prop.theta}), and its contraction with the velocity annihilates that velocity as an identity of the interior product (Proposition~\ref{prop.frenkel}). The torque obeys the commutator structure of the force one degree higher (Theorem~\ref{theo.torque}). We required no physical hypothesis for any of them.

Newton's First Law is three of these results at once. Definition~\ref{def.inertia} declares the privileged class of motions, and Proposition~\ref{prop.killing} supplies its existence through every event and every direction. That a body free of force keeps to that class is Corollary~\ref{cor.forcefree}, the agreement of the dynamical notion with the kinematic one. That \emph{uniform} means anything at all is Theorem~\ref{theo.inertia}, which leaves the inertial motion no freedom in its parameter beyond the choice of origin and unit. Newton stated the first two and could leave the third unsaid, since absolute time supplied it~\cite{newton1999principia}. The order is inverted here as well. The \emph{Principia} puts the First Law before the Second, whereas we reach the perseverance of a force-free body as a corollary of the Second.

We make three constitutive postulates, and we declared each one where it was made. A particle carries a representative of the homothety class of the bilinear form, and its inertial mass is the ratio to that representative (Definition~\ref{def.mass}). It carries a $3$-form along its worldline, the constitutive torsion, which the manifold does not furnish (Definition~\ref{def.consttorsion}). And that $3$-form obeys Fermi--Walker transport (Definition~\ref{def.spinlaw}).

One thing is conjectured. We propose the identification of the $2$-form $\iota_v\Theta$ with a physical spin as a candidate rather than a result, and we have not supplied its scale. One route to it already exists. The transport law is falsifiable, since the parallel alternative would make the magnitude of the spin vary with the orientation of the worldline, a preferred-frame effect which the existing torsion bounds exclude. The blindness of inertial motion to the spin is not a second route, since Theorem~\ref{theo.ise} guarantees it. The bounds and interferometric phases cited in the literature constrain a torsion field upon the manifold, whereas $\Theta$ is a $3$-form carried along a worldline. The transport law predicts a turning of the spin against a parallel-transported frame at a rate the acceleration alone fixes. That rate becomes a laboratory number only once the arc-length parameter is identified with the reading of a clock, and we do not identify it here. We set out that route and its sources in Section~\ref{sec.spin}.

The originality we claim is narrower than the material may suggest, and we make each concession in the body at the point where it falls due. The characterisation of the metric connections which preserve the geodesics of their metric by the total antisymmetry of their torsion belongs to Agricola and Friedrich~\cite{agricola2004holonomy,agricola2006srni}. The construction of a spin density by contracting a totally antisymmetric tensor with a velocity, and the supplementary condition which follows from that contraction alone, belong to relativistic spin hydrodynamics~\cite{fang2026relativistic}. Fermi--Walker transport belongs to Fermi and Walker~\cite{fermi1922sopra,walker1932relative}, and the identity relating the Lie and the covariant derivative of the metric dual of a velocity belongs to geometric hydrodynamics~\cite{arnold1998topological}. What we claim is the assignment rather than the algebra. The $3$-form is the residual freedom in the connection which inertial motion cannot resolve. We reach it from the Second Law, and the particle carries it along its worldline.

Two matters remain open, and they mark the boundary of what the present framework can reach. We select the transport~\eqref{eq.spinlaw} of the constitutive torsion by the conservation of a magnitude rather than derive it, so the spin, unlike the momentum, owes its equation of motion to a postulate the construction does not compel. We have assigned no meaning to the exterior derivative $\d\Theta$, and that is a consequence rather than an omission. Definition~\ref{def.consttorsion} placed the constitutive torsion upon the worldline, so that $\d\Theta$ is not defined at all. A field equation for the torsion cannot be posed until one passes from the single $3$-form carried by one particle to a continuum of them. That passage is the construction of a spin density from a totally antisymmetric spin tensor, which relativistic spin hydrodynamics already performs~\cite{fang2026relativistic}. The contact interaction between two spins would have to appear there, at the level of a continuum rather than of a single particle.

Three directions lead out of those boundaries. Whether the construction survives the degeneration of the bilinear form to a Galilean pair was posed in the Introduction, and it remains the first question we would put to it. What the arc-length parameter measures is the second, for the one quantitative prediction of the construction waits upon it. The third is the passage to a continuum just named. It is also the passage to the relativistic setting, where the spin density and its supplementary condition are already in place. We suspect that it and the Galilean question are one problem approached from two sides.

\bibliographystyle{unsrt}
\bibliography{force_bib}

\end{document}